%% file: irregular_mgle-arxiv.tex
\documentclass[a4paper, 11pt]{article}

\usepackage{amssymb,amsmath,amsthm,lmodern,cancel}
\usepackage{amsfonts}
\usepackage[activate={true,nocompatibility},final,tracking=true,kerning=true,spacing=true,factor=1100,stretch=15,shrink=15]{microtype}
\microtypecontext{spacing=nonfrench}

\usepackage{graphicx}
\usepackage[english]{babel}
\usepackage[T1]{fontenc}
\usepackage{textcomp}
\usepackage[dvipsnames]{xcolor}
\usepackage[shortlabels]{enumitem}
\usepackage{thmtools, mathtools,verbatim} \usepackage{upgreek}
\usepackage{mathrsfs}
\usepackage{accents}
\usepackage{bbold}

\usepackage[colorlinks=false,hyperindex=true]{hyperref}
\usepackage{cleveref}

\usepackage[font=normal]{caption}

\usepackage{thmtools}
\usepackage{thm-restate}

\usepackage{amssymb}

\usepackage{ifthen}

\usepackage{tikz}
\usetikzlibrary{math}
\usetikzlibrary{arrows, positioning}
\tikzset{
    >=stealth',
    punkt/.style={
           rectangle,
           rounded corners,
           draw=black, very thick,
           text width=6.5em,
           minimum height=2em,
           text centered},
    pil/.style={
           ->,
           thick,
           shorten <=2pt,
           shorten >=2pt,}
}

\usepackage[bottom]{footmisc}

\title{Irregular SLE$_{4}$ martingales and isomonodromic deformations}  
\date{}

\author{Harini Desiraju\thanks{Mathematical Institute, University of Oxford, United Kingdom. \protect\url{harini.desiraju@maths.ox.ac.uk}}, 
\, 
Aleksandra Korzhenkova\thanks{Institute of Mathematics, 
EPFL, Switzerland. \protect\url{aleksandra.korzhenkova@gmail.com}}, 
\, and \,
Eveliina Peltola\thanks{Department of Mathematics and Systems Analysis, Aalto University, Finland; and \\ Division of Mathematics, University of Cologne, Germany.   \protect\url{eveliina.peltola@aalto.fi}}
}

\setlist[enumerate]{topsep = 1ex, leftmargin=1cm, itemsep= -2pt}

\let\OLDthebibliography\thebibliography
\renewcommand\thebibliography[1]{
\OLDthebibliography{#1}
\setlength{\parskip}{1pt}
\setlength{\itemsep}{2pt}
}

\allowdisplaybreaks

\newtheorem{thm}{Theorem}[section]
\crefname{thm}{theorem}{theorems}

\newtheorem{cor}[thm]{Corollary}
\crefname{cor}{corollary}{corollaries}

\crefname{conj}{conjecture}{conjectures}

\newtheorem{lem}[thm]{Lemma}
\crefname{lem}{lemma}{lemmas}

\newtheorem{prop}[thm]{Proposition}
\crefname{prop}{proposition}{propositions}

\crefname{mainresult}{result}{results}

\crefname{theorem}{theorem}{theorems}

\crefname{lemA}{lemma}{lemmas}

\theoremstyle{definition}

\newtheorem{remark}[thm]{Remark}

\numberwithin{equation}{section}
\numberwithin{figure}{section}

\input{tex-arXiv/macros}

\begin{document}
\maketitle
\begin{abstract}
We consider non-Fuchsian monodromy preserving deformations on a Riemann sphere. The associated isomonodromic deformation parameters on this surface comprise the positions of the singularities, together with the Birkhoff (spectral) invariants owing to the presence of irregular singularities. Our first main result is the derivation of the Loewner evolution of these isomonodromic deformation parameters. Using this result, we construct martingale observables for Schramm-Loewner evolution ($\SLE_4$) processes in the presence of double poles. Geometrically, the expressions contain the pre-Schwarzian and Schwarzian of the Loewner evolution, arising from conformal covariance of the observable. Furthermore, we characterize these $\SLE_4$ observables uniquely in terms of confluent BPZ equations of a CFT with central charge $c=1$.
\end{abstract}

\setcounter{tocdepth}{2}
\tableofcontents

\newpage


\section{Introduction}
\input{tex-arXiv/1-intro}


\section{Loewner evolution of Birkhoff invariants and confluence}
\label{sec:evol_of_parameters}
\input{tex-arXiv/2-stoch}


\section{Irregular $\SLE_4$ martingales}
\label{sec:mgle}
\input{tex-arXiv/3-SLEm2}

\bibliographystyle{alpha}
\newcommand{\etalchar}[1]{$^{#1}$}

\end{document}

%% file: tex-arXiv/macros.tex
\definecolor{darkgreen}{rgb}{0.0,0.5,0.0}
\definecolor{darkspringgreen}{rgb}{0.05, 0.5, 0.06}

\global\long\def\bR{\mathbb{R}}

\global\long\def\bC{\mathbb{C}}
\global\long\def\bH{\mathbb{H}}

\global\long\def\cX{\mathcal{X}}

\global\long\def\cC{\mathcal{C}}

\global\long\def\cF{\mathcal{F}}
\global\long\def\cG{\mathcal{G}}
\global\long\def\cU{\mathcal{U}}

\global\long\def\cXFP{\mathcal{X}}
\global\long\def\FunSp{\cC^2_{\mathrm{hol}}}
\global\long\def\FunSpGen{\cC^{2,1}_{\mathrm{hol}}}

\newcommand{\bs}[1]{{\boldsymbol{#1}}}

\global\long\def\PartF{\mathscr{Z}}

\global\long\def\ii{\mathrm{i}}

\newcommand{\res}{\mathrm{Res}}
\newcommand{\Tr}{\mathrm{Tr}}

\global\long\def\ud{\,\mathrm{d}}

\renewcommand\Re{\operatorname{Re}}
\renewcommand\Im{\operatorname{Im}}

\global\long\def\GL{\operatorname{GL}}
\global\long\def\SL{\operatorname{SL}}
\global\long\def\sl{\mathfrak{sl}}

\DeclareMathAlphabet{\mathpzc}{OT1}{pzc}{m}{it}

\global\long\def\one{\scalebox{1.1}{\textnormal{1}} \hspace*{-.75mm} \scalebox{0.7}{\raisebox{.3em}{\bf |}} }

\newcommand{\cond}{\, | \,}

\newcommand{\SLE}{\operatorname{SLE}}
\newcommand{\CLE}{\operatorname{CLE}}

\newcommand{\Driv}{Z}
\newcommand{\Spec}{\Xi}
\newcommand{\spec}{\xi}

\makeatletter
\newif\iflibus@sansmath
\makeatother
\DeclareFontEncoding{LS1}{}{}
\DeclareFontSubstitution{LS1}{libertinust1math}{m}{n}
\DeclareSymbolFont{LettersLibertinus}{LS1}{libertinust1math}{m}{it}

\newcommand{\Schwarzian}{\mathscr{S}}
\newcommand{\PreSchwarzian}{\mathscr{A}}

%% file: tex-arXiv/1-intro.tex
Schramm-Lowner evolutions ($\SLE_{\kappa})_{\kappa \geq 0}$ are universal conformally invariant random curves, which have played a central role in probability theory and statistical physics; 
in complex geometry and conformal field theory; having striking further connections to various other areas of mathematics and theoretical physics:
Teichm\"uller theory, enumerative geometry, integrable systems, random matrices, quantum gravity, and beyond. 
Our motivation in this work is to shed light on the intrinsic connections of $\SLE$ random curves with geometry.

In~\cite{Dubedat:Double_dimers_conformal_loop_ensembles_and_isomonodromic_deformations}, 
Dub\'edat established a relationship between certain $\SLE_\kappa$ martingale observables with $\kappa=4$ and 
isomonodromic tau-functions with simple poles associated to an $\SL_2(\bC)$ linear system on a punctured sphere. 
Basok~\&~Chelkak further investigated the matter in~\cite{Basok-Chelkak:Tau_functions_a_la_Dubedat_and_probabilities_of_cylindrical_events_for_double_dimers_and_CLE4}, 
revealing the more precise connection to Fock-Goncharov lamination basis of the character variety, pointing towards a more concrete link to Teichm\"uller theory.
In this spirit, the key focus of the present article is the relation between tau-functions and isomonodromic deformations with SLE martingales, 
inspired by applications towards topological observables for random curves, which we also plan to return to in future work.

We consider $\SL_2(\bC)$ \emph{non-Fuchsian} linear systems, where the tau-functions
depend on both the locations of the (irregular) singularities and of the so-called Birkhoff (spectral) invariants.  
We derive the Loewner evolution of these spectral invariants on the Riemann sphere (Theorems~\ref{thm:Sto-Bir-Gen}~\&~\ref{thm:evol_isomonodromic_times}).
 It is noteworthy that the analytic structure of this evolution is independent of the rank of the singularity.
We then aim to characterize $\SLE_4$ martingale observables in this more general setup, involving higher geometric invariants. 
First of all, we construct explicit martingale observables in the case of double poles (\Cref{thm:main_mgle}),
and second, we show that they are uniquely characterized as solutions to linear hypoelliptic differential equations, 
dubbed ``confluent'' Belavin-Polyakov-Zamolodchikov (BPZ) equations (Theorems~\ref{thm:BPZ_general}~\&~\ref{thm:non_BPZ_PDE2}).
Geometrically, the expressions contain covariance factors involving the pre-Schwarzian and Schwarzian of the Loewner evolution.
We also discuss how to extend these results to the case of higher order singularities, which will involve higher Schwarzians as well --- suggesting a richer underlying geometric structure.

Our work represents a first step towards establishing the geometric role of conformally invariant random curves on Riemann surfaces with irregular singularities.
For example, based on our construction, it will be very interesting to explore what the geometric and asymptotic data means for the SLE curves and dimer models. 
One might discover new stochastic phenomena which could arise from the effects of the presence of irregular singularities and non-trivial topological data, or 
the Stokes phenomenon~\cite{Stokes:On_the_discontinuity_of_arbitrary_constants_which_appear_in_divergent_developments,Ablowitz-Fokas:Complex_variables}. 
This perspective may also lead to the identification of new observables in lattice models, in light of the already known connection of $\SLE_4$ curves with the double-dimer model~\cite{Dubedat:Double_dimers_conformal_loop_ensembles_and_isomonodromic_deformations, Basok-Chelkak:Tau_functions_a_la_Dubedat_and_probabilities_of_cylindrical_events_for_double_dimers_and_CLE4}.

\subsection{Schramm-Loewner evolution and stochastic flow on moduli space}

\begin{figure}[h!]
\centering
\includegraphics[width=0.4\textwidth]{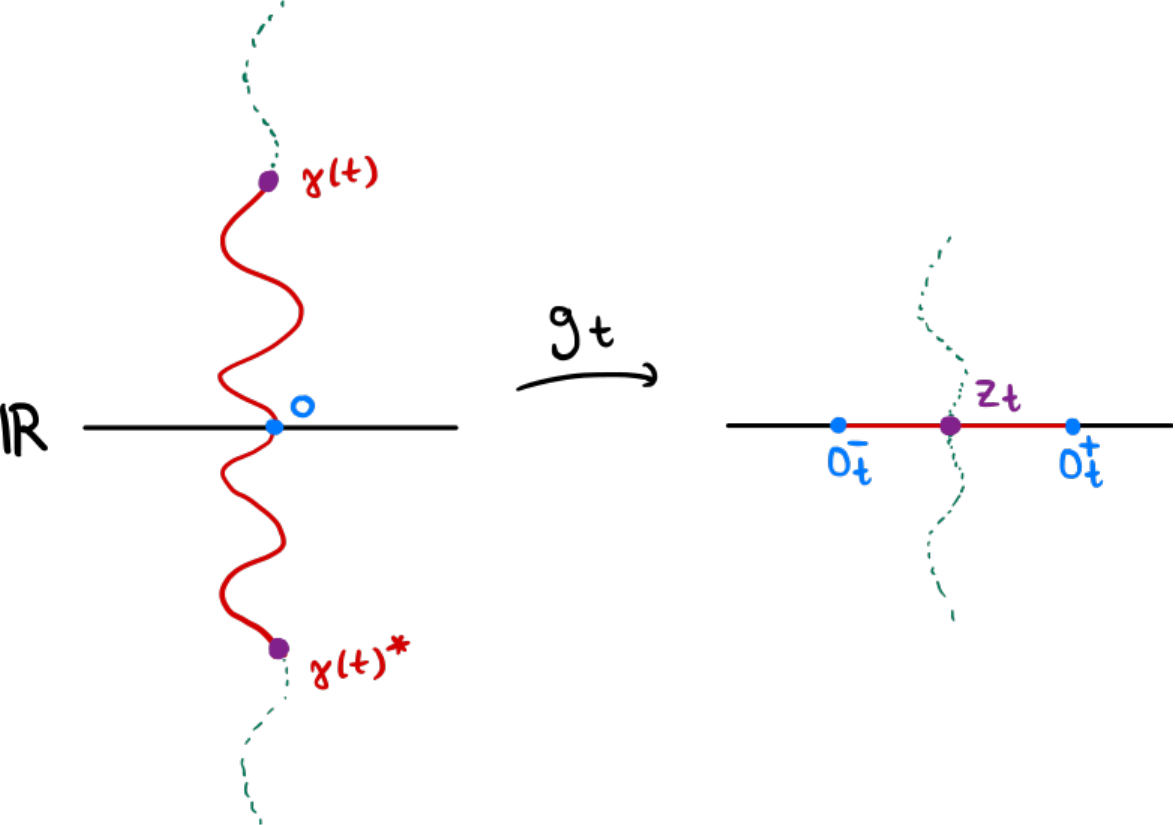}
\qquad\qquad
\includegraphics[width=0.35\textwidth]{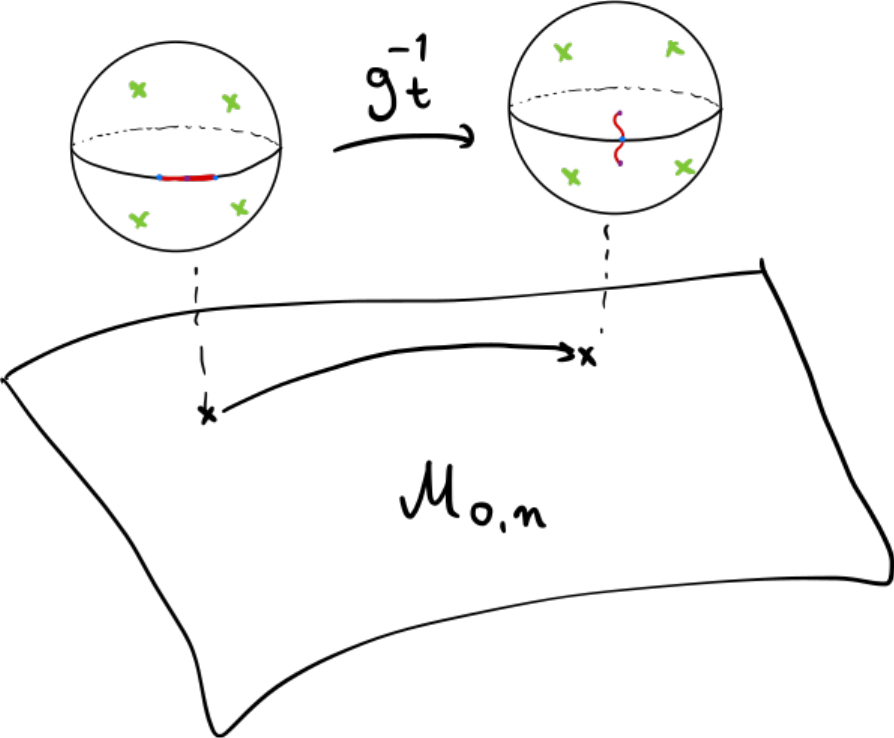}
\caption{\label{fig:SLE}
{Illustration of a chordal injective curve $\gamma \colon [0,\infty) \to \bH$ in the upper half-plane $\bH := \{ z \in \bC \cond \Im(z)>0 \}$ connecting the boundary points $0 = \gamma(0)$ and $\infty = {\underset{t \to \infty}{\lim} \, \gamma(t)}$. \\ 
The \emph{Loewner flow} on the Riemann sphere $\hat\bC \coloneqq \bC \cup \{\infty\}$ is obtained from conformal bijections $g_t \colon \hat\bC \setminus (\gamma[0,t] \cup \gamma^*[0,t]) \to \hat\bC \setminus [O^-_t,O^+_t]$, where $O^\pm_t=g_t(0\pm) \in \bR$ are the two images of $0$ under the flow.
The Loewner \emph{driving function} of the evolution is $\Driv_t = g_t(\gamma(t)) = g_t(\gamma^*(t))$
and the tip of the curve can be recovered from the radial limit
$\gamma(t) = {\underset{\varepsilon \to 0+}{\lim} \, g_t^{-1}(\Driv_t + \ii \varepsilon)}$. \\
Roughly speaking, at each time instant $t$, the map $g_t$ deforms the metric on $\bH$ by pullback: $g_t^{-1}$ can be viewed as a map on $\hat\bC$, conformal on $\hat\bC \setminus [O^-_t,O^+_t]$ and having radial limits on $[O^-_t,O^+_t]$, deforming the conformal class of the metric on that interval. 
By keeping track of marked points off the curve, we can view the Loewner flow of the curve as a flow on the moduli space of punctured Riemann spheres. 
If the driving function $t \mapsto \Driv_t \in \bR$ is random (e.g. Brownian motion), we obtain a stochastic Loewner flow (like for SLE).
} $\heartsuit$}
\end{figure}

It has become clear during the past couple of decades that Schramm's SLE random curves~\cite{Schramm:ICM},  
which were originally introduced to describe critical interfaces in statistical physics, 
share a deep relation with geometry and integrability, and with conformal field theory (CFT). 
SLEs can be thought of as random curves in two-dimensional space, and they can be dynamically generated from evolution of conformal mappings using classical Loewner theory.  
As key input, one needs a random one-dimensional process, termed driving process, 
which describes the time-evolution of the tip of the growing curve 
via Loewner's differential equation~\eqref{eq:LE} (cf.~\cite{Loewner:Untersuchungen_uber_schlichte_konforme_Abbildungen_des_Einheitskreises}). 
This description makes SLE curves concrete to work with by utilizing standard tools from stochastic analysis and function theory. 

These curves are universal: Schramm showed in~\cite{Schramm:Scaling_limits_of_LERW_and_UST} 
that the one-parameter family $(\SLE_\kappa)_{\kappa > 0}$ of random chordal curves in a simply connected domain 
is the only possible one which is 
\emph{conformally invariant} (in the sense that the distribution of the random curve in any simply connected domain is obtained from the one on the upper half-plane $\bH$ by pushforward of any uniformizing map) and \emph{Markovian} 
(in the sense that it has the ``domain Markov property'' with respect to the growth of the curve\footnote{Conditional on an initial segment of the curve, the distribution of the remaining piece of the curve is that of the $\SLE_\kappa$ but in the slit domain obtained from the original domain by removing the initial segment. (Strictly speaking, when $\kappa > 4$, the $\SLE_\kappa$ curve is not injective, but it has self-touchings. In this case its complementary domain has several components, and one should only uniformize the one containing the intended target point of the curve. We omit this detail from our discussion and work with $\kappa = 4$ only.)}).
Namely, imposing these two properties dictates that the Loewner driving process of the random curve 
must be a multiple of standard one-dimensional \emph{Brownian motion} --- the multiplicative constant being the variance of the driving process, $\sqrt{\kappa}$ for some $\kappa>0$.
We will describe Loewner theory and $\SLE_\kappa$ curves in some more detail in \Cref{subsec:preli_SLE}; see also \Cref{fig:SLE}. 

\medskip

Geometrically, the evolution of an SLE curve can be seen as a \emph{stochastic flow} (a diffusion process) on the associated moduli space of Riemann surfaces 
(see, e.g.,~\cite{Bauer-Bernard:SLE_growth_processes_and_CFTs, Bauer-Bernard:Conformal_field_theories_of_SLEs, Friedrich-Werner:Conformal_restriction_highest_weight_representations_and_SLE, Kontsevich:CFT_SLE_and_phase_boundaries, Friedrich-Kalkkinen:On_CFT_and_SLE, Friedrich:On_connections_of_CFT_and_SLE, Kontsevich-Suhov:On_Malliavin_measures_SLE_and_CFT, Dubedat:SLE_and_Virasoro_representations_localization}).
In the case of simply connected domains\footnote{A similar picture should remain for the case of general Riemann surfaces, although the problem becomes significantly more involved due to the additional conformal moduli. 
See~\cite{Zhan:Thesis, Kontsevich-Suhov:On_Malliavin_measures_SLE_and_CFT, Dubedat:SLE_and_Virasoro_representations_localization}.}, 
which is one of the few cases where the SLE evolution is rigorously understood, 
the shrinking slit domains generated by the SLE curve represent different elements of the moduli space of the upper half-plane, say, meaning that the metric changes upon deformation by the growing curve.  
After uniformizing the slit domain back to the upper half-plane, one obtains an interpretation of the curve evolution in the moduli space.

Bauer~\&~Bernard observed early on that conserved quantities for the SLE flow, encoded in \emph{martingale observables}, in fact generate special representations of the Virasoro algebra~\cite{Bauer-Bernard:SLE_martingales_and_Virasoro_algebra, Bauer-Bernard:Conformal_transformations_and_SLE_partition_function_martingale}. 
This is not completely surprising, since the Virasoro algebra is the infinite-dimensional Lie algebra that provides the conformal symmetry for critical models and CFT --- and SLE processes are manifestly conformally invariant. 
In contrast, what was rather remarkable is the fact that points where SLE curves emerge are described by CFT primary fields which play a very special role in the representation theory of the Virasoro algebra.
Namely, they correspond to \emph{degenerate fields}, whose correlation functions solve linear homogeneous partial differential equations, termed 
BPZ PDEs~\cite{BPZ:Infinite_conformal_symmetry_in_2D_QFT}.

From a completely different route, one can observe the very same PDEs to emerge from \emph{stochastic calculus}, by requiring natural SLE observables to be martingales for the time-evolution of the curve. 
In the language of probability theory, the 2nd order BPZ PDEs satisfied by correlation functions of these degenerate fields are equivalent to the vanishing of the It\^o drifts in the associated martingale observables. 
(See~\cite{Peltola:Towards_CFT_for_SLEs} and references therein.) 
We briefly introduce the stochastic analysis underlying SLE theory in \Cref{subsec:preli_SLE}, and delve into martingales in \Cref{subsec:mgle_double_pole,subsec:mgle_higher_poles}, 
which also contain one of our main results (\Cref{thm:main_mgle}). 
In \Cref{subsec:BPZ}, we discuss the confluent BPZ PDEs for the martingale observables at the presence of irregular singularities (Theorems~\ref{thm:BPZ_general} and~\ref{thm:non_BPZ_PDE2}).

\subsection{Isomonodromic defomations and tau-functions}

The theory of moduli spaces of Riemann surfaces is intricately connected to integrable systems and 
to ordinary differential equations (ODEs) dubbed \emph{isomonodromic deformation equations} (IDEs)~\cite{BBT:Introduction_to_classical_integrable_systems, FIKN:Painleve_transcendents_Riemann-Hilbert_approach}. 
Consider the Fuchsian matrix ODE
\begin{align}\label{eq:Fuchsian}
\frac{\partial}{\partial z} Y(z) = \sum_{i=1}^{n} \frac{A_{i}}{z-\lambda_i} \, Y(z), 
\qquad A_i \in \sl_2(\bC) , 
\end{align}
for $z\in \hat\bC_{\bs\lambda} \coloneqq \hat\bC \setminus \{\lambda_1,\ldots,\lambda_n\}$ on the Riemann sphere $\hat\bC \coloneqq \bC \cup \{\infty\}$ with $n$ punctures (singularities) $\bs\lambda=(\lambda_1,\ldots,\lambda_n)$,
assuming that $A_i$ are diagonalizable with eigenvalues $(\alpha_{i},-\alpha_{i})$ with $\Re(\alpha_{i}) \geq 0$. 
The monodromy of its local solution $Y$ then gives rise to a representation of the fundamental group of $\hat\bC_{\bs\lambda}$.
Furthermore, one may consider the stability of the monodromy: which deformations of the punctures $\bs\lambda$ on $\hat\bC$ preserve the monodromy representation of the function $Y(z)$?
The answer in the above classical case is provided by the condition
\begin{align}
\frac{\partial}{\partial \lambda_i} Y(z,\bs\lambda) = - \frac{A_{i}(\bs\lambda)}{z-\lambda_i} \, Y(z,\bs\lambda), \quad \textnormal{for all } i \in\{1,\ldots,n\} ,\label{eq:Fuchsian_IDE}
\end{align}
or equivalently, that the matrices $A_i(\bs \lambda)$ solve the Schlesinger equations~\cite{Schlesinger:Uber_eine_Klasse_von_Differentialsystemen_beliebiger_Ordnung_mit_festen_kritischen_Punkte}
\begin{align*}
\frac{\partial A_i}{\partial \lambda_j} = \frac{\left[A_i, A_j \right]}{\lambda_i - \lambda_j},
\qquad 
\frac{\partial A_i}{\partial \lambda_i} =- \sum_{i\neq j} \frac{\left[A_i, A_j \right]}{\lambda_i - \lambda_j}.
\end{align*}

An important quantity in this context is the so-called \emph{tau-function} $\tau(\bs\lambda)$, 
which for the case of the IDEs~\eqref{eq:Fuchsian_IDE} is defined via Hamiltonians $H_{\lambda_i}$ associated to the punctures, 
\begin{align*}
\Big( \frac{\partial}{\partial \lambda_i} \log \tau(\bs\lambda) \Big) \ud\lambda_i = H_{\lambda_i} , 
\qquad \textnormal{where}
\quad H_{\lambda_i}\coloneqq \frac{1}{2}  \, \underset{z=\lambda_i}{\res}\, \Tr \Bigg(\frac{(A_i(\bs\lambda))^2}{(z-\lambda_i)^2} \Bigg) \ud z , \quad \textnormal{for all } i ;
\end{align*}
see \cite{Balogh-Harnad:Tau_functions_and_their_applications} for an overview. 
Geometrically, tau-functions can be viewed as sections of line bundles over the moduli space of punctured Riemann 
surfaces\footnote{Geometrically speaking, Schlesinger's equations~\eqref{eq:Fuchsian_IDE} describe flat holomorphic $\sl_2(\bC)$-connections on the $n$-punctured Riemann sphere. 
One can extend such a formalism to flat connections on Riemann surfaces with varied punctures (and one could also consider other Lie algebras for the coefficients).
In particular, for the case $n=4$ of four-punctured Riemann surfaces, the flat $\sl_2(\bC)$-connections give rise to the Lax equations for the Painlev\'e VI equation~\cite{Painleve:Sur_les_equations_differentielles_du_second_ordre_a_points_critiques_fixes, 
Malmquist:Sur_les_equations_differentielles_du_second_ordre_dont_lintegrale_generale_a_ses_points_critiques_fixes, 
Okamoto:On_the_tau-function_of_Painleve_equations, JMU:Monodromy_preserving_deformation_of_linear_ODEs_with_rational_coefficients1}.}. 
They describe a distinguished class of nonlinear special functions that appear in several other areas of mathematical physics. 
They arise in the study of correlation functions in statistical physics and quantum field theory~\cite{SMJ:Holonomic_quantum_fields3},  
they were found to give expressions for conformal blocks in CFT (see in particular~\cite{ILT:Isomonodromic_tau-functions_from_Liouville_conformal_blocks})  
and partition functions of random matrix models (see in particular~\cite{BEH:Partition_functions_for_matrix_models_and_isomonodromic_tau_functions}) 
--- and they have important recent applications in statistical physics and 
the geometry of random conformally invariant curves~\cite{Dubedat:Double_dimers_conformal_loop_ensembles_and_isomonodromic_deformations, Basok-Chelkak:Tau_functions_a_la_Dubedat_and_probabilities_of_cylindrical_events_for_double_dimers_and_CLE4} (\Cref{subsec:mgle_intro}).

From the integrability and geometry point of view, the above Fuchsian case is very classical. In the 1980s, 
Jimbo, Miwa, and Ueno systematically extended this picture to the non-Fuchsian case, which involves higher order poles (irregular singularities), 
in part motivated by theoretical interest and in part by the importance of this more general case for applications; see~\cite{JMU:Monodromy_preserving_deformation_of_linear_ODEs_with_rational_coefficients1, Jimbo-Miwa:Monodromy_preserving_deformation_of_linear_ODEs_with_rational_coefficients2} and references therein. 
Now, the local solution $Y$ is not meromorphic but can have essential singularities:
\begin{align}\label{eq:non-Fuchsian}
\frac{\partial}{\partial z} Y(z,\bs\lambda; \bs s) = \sum_{i=1}^{n} \sum_{k=0}^{r_i} \frac{A_{i,k}(\bs\lambda; \bs s)}{(z-\lambda_i)^{k+1}} \, Y(z,\bs\lambda; \bs s) ,
\qquad A_{i,k}(\bs\lambda; \bs s) \in \sl_2(\bC) , 
\end{align} 
where $r_i \geq 0$ is the Poincar\'e rank of the singularity $\lambda_i$, and only the leading coefficient matrices $A_{i, r_{i}}$ is assumed to be diagonalizable (see \Cref{sec:evol_of_parameters} and~\cite[Chapter~2]{FIKN:Painleve_transcendents_Riemann-Hilbert_approach}). 
For the isomonodromic flow, there are additional deformations coming from \emph{spectral invariants} of the coefficient matrices of the irregular singularities.
In particular, the local solution $Y(z,\bs\lambda; \bs s)$, the IDEs, and the tau-function $\tau(\bs\lambda; \bs s)$ all depend in a complicated way on 
these spectral parameters, termed (higher) \emph{Birkhoff invariants} 
${\bs s = (s_{i,k})_{\substack{1 \leq i \leq n \\ 1 \leq k \leq r_i}}}$~\cite{BJL:Birkhoff_invariants_and_Stokes_multipliers_for_meromorphic_linear_differential_equations, BHH:Hamiltonian_structure_of_rational_isomonodromic_deformation_systems}.

Interestingly, irregular singularities have not been encountered thus far in association with SLE curves, while in contrast, most integrable systems have such higher order singularities~\cite{Birkhoff:The_generalized_riemann_problem_for_linear_differential_equations_and_the_allied_problems_for_linear_difference_and_q-difference_equations, 
SMJ:Holonomic_quantum_fields3, Jimbo-Miwa:Monodromy_preserving_deformation_of_linear_ODEs_with_rational_coefficients2}.

\subsection{$\SLE_4$ martingale observables, isomonodromy, and integrability\label{subsec:mgle_intro}}

After the introduction of SLEs, a task of fundamental importance has been to provide proofs of convergence and conformal invariance of geometric quantities in critical lattice models. 
The conceptually crucial step is the \emph{identification} of the limiting objects. 
This usually utilizes particular integrability relations that enable the usage of discrete complex analysis techniques.
Important motivation to the scope of the present work are interfaces in the double-dimer model, 
which are believed (with overwhelming evidence) to be described by Schramm's $\SLE_\kappa$ type processes with parameter $\kappa=4$~\cite{Kenyon:Conformal_invariance_of_loops_in_the_double-dimer_model, Dubedat:Double_dimers_conformal_loop_ensembles_and_isomonodromic_deformations, Basok-Chelkak:Tau_functions_a_la_Dubedat_and_probabilities_of_cylindrical_events_for_double_dimers_and_CLE4, Bai-Wan:On_the_crossing_estimates_of_simple_conformal_loops_ensembles}. 
The limit objects can be identified via topological observables: probabilities of connectivities of the chordal curves, and monodromies of the interfaces forming loops in the model.

Towards identification of the possible limit, Kenyon~\&~Wilson computed connectivity probabilities of the chordal curves~\cite{Kenyon-Wilson:Double_dimer_pairings_and_skew_Young_diagrams}; 
and these can be identified with those for $\SLE_4$ curves (see also~\cite{Peltola-Wu:Global_and_local_multiple_SLEs_and_connection_probabilities_for_level_lines_of_GFF,Karrila-Peltola:Boundary_double-dimer_patterns_and_CFT}). 
Moreover, Kenyon proved that the loops (if they converge) should have a conformally invariant scaling limit~\cite{Kenyon:Conformal_invariance_of_loops_in_the_double-dimer_model}.  
The random loop ensemble arising in the scaling limit is termed $\CLE_4$ (\emph{Conformal loop ensemble})~\cite{Sheffield:Exploration_trees_and_CLEs, Sheffield-Werner:CLEs}. 
Characterizing these loops via some geometric observables might seem like a formidable problem --- but \emph{integrability} of the model provided a clue.
Indeed, Kenyon constructed topological correlators associated with the monodromy of the discrete loops and conjectured that they converge to the corresponding $\CLE_4$ observables in the scaling limit, which then should characterize it. 
Dub\'edat later showed in~\cite{Dubedat:Double_dimers_conformal_loop_ensembles_and_isomonodromic_deformations} 
that these observables could be expressed in terms of the isomonodromic tau-function and the associated local solution to the Fuchsian ODE~\eqref{eq:Fuchsian_IDE}. 
The key to the identification of the interface curves is that\footnote{From CFT point of view, this might not come as a huge surprise: the corresponding CFT should have central charge $c = 1$. 
($\SLE_\kappa$ curves should carry central charge $c_\kappa=\frac{1}{2\kappa}{(3\kappa-8)(6-\kappa)}$.)  
Indeed, it has been known for a while in the CFT and integrability communities that conformal blocks for a $c=1$ theory can be expressed in terms of the isomonodromic tau-function of Painlev\'e VI~\cite{GIL:Conformal_field_theory_of_PVI, ILT:Painleve_VI_connection_problem_and_monodromy_of_c_1_conformal_blocks, ILT:Isomonodromic_tau-functions_from_Liouville_conformal_blocks}.
Hence, Dub\'edat's result also gives yet another manifestation of the close relationship of SLE curves with CFT.} 
these quantities give rise to a \emph{martingale observable} for the $\SLE_4$ curve (see also \Cref{sec:mgle}).

Later, Basok~\&~Chelkak proved that these monodromy observables indeed characterize the distribution of the random $\CLE_4$ loops~\cite{Basok-Chelkak:Tau_functions_a_la_Dubedat_and_probabilities_of_cylindrical_events_for_double_dimers_and_CLE4}. 
In particular, they related the $\CLE_4$ loop observables to coefficients of the isomonodromic tau-function with respect to the Fock-Goncharov lamination basis on a representation variety 
of the fundamental group of the Riemann sphere with punctures (around which the monodromy is computed) 
with values in $\SL_2(\bC)$~\cite{Fock-Goncharov:Moduli_spaces_of_local_systems_and_higher_Teichmuller_theory}.
These works yield a fundamental geometric interpretation for the loop quantities, where monodromies around punctures serve as the key observables. 

\medskip

A natural conceptual problem now arises: 
\emph{What is the general role of the singularities in the random geometric model?}
The purpose of the present work is to investigate a more general situation where the singularities are allowed to be irregular (non-Fuchsian). 

\medskip

Namely, we will show that $\SLE_4$ processes (and thus also $\CLE_4$) admit martingale observables 
with irregular singularities (\Cref{thm:main_mgle}), and provide a complete description of the Loewner flow of the singularities and the associated Birkhoff spectral invariants (Theorems~\ref{thm:evol_isomonodromic_times}~\&~\ref{thm:Sto-Bir-Gen}).
We also show that the $\SLE_4$ partition functions associated with these irregular martingales solve confluent BPZ equations (\Cref{thm:BPZ_general}), 
matching the known CFT results in the presence of irregular singularities~\cite{Gaiotto-Teschner:Irregular_singularities_in_Liouville_theory_and_Argyres-Douglas_type_gauge_theories, CPZ:Interactions_of_irregular_Gaiotto_states_in_Liouville_theory, HLR:Flat_connections_from_irregular_conformal_blocks} (though these references are in the context of Liouville conformal field theory). 
In particular, our martingale observables can thus be associated 
with confluent chiral degenerate correlation functions with central charge $c=1$.
(This connection, to us, is not a necessary tool.)

Because the SLE flow on the moduli space of Riemann surfaces can be directly related to the isomonodromic flow, we expect that our results pave the way for a geometric approach to SLE curves,
{and to explain the nature of integrability that is deeply linked with SLE theory.}
For instance, tau-functions of certain IDEs can be written as discrete Fourier transforms of conformal blocks on associated Riemann surfaces with prescribed singularities, 
or as Fredholm determinants~\cite{GIL:Conformal_field_theory_of_PVI, 
GIL:How_instanton_combinatorics_solves_Painleve_VI_V_and_IIIs, 
ILT:Isomonodromic_tau-functions_from_Liouville_conformal_blocks, 
BLMST:On_Painleve_gauge_theory_correspondence, 
Gavrylenko-Lisovyy:Fredholm_determinant_and_Nekrasov_sum_representations_of_isomonodromic_tau_functions,  
GIL:Higher_rank_isomonodromic_deformations_and_W-algebras, 
BDMGT:N_2_Gauge_theory_free_fermions_on_the_torus_and_Painleve_VI, 
Desiraju:Fredholm_determinant_representation_of_the_homogeneous_Painleve2, 
Desiraju:Painleve_CFT_correspondence_on_a_torus, 
DMDG:Isomonodromic_tau_functions_on_a_torus_as_Fredholm_determinants_and_charged_partitions}.
These descriptions uncover important asymptotic properties of the tau-functions \cite{ILT:Connection_problem_for_the_Sine-Gordon_Painleve_III_tau_function_and_irregular_conformal_blocks, 
ILP:Monodromy_dependence_and_connection_formulae_for_isomonodromic_tau_functions,
Its-Prokhorov:On_some_Hamiltonian_properties_of_the_isomonodromic_tau_functions, 
DMDG:Monodromy_dependence_and_symplectic_geometry_of_isomonodromic_tau_functions_on_the_torus, 
DMDG:Modular_transformations_of_tau_functions_and_conformal_blocks_on_the_torus}. 
One can also ask what is the precise relation between certain CFT conformal blocks and SLE martingales on various Riemann surfaces (to which we give insight in \Cref{thm:BPZ_general}),
which could reveal new phenomena both in random geometry and in CFT. 

\subsection{Summary of main results}

In this section, we give a very brief summary of the main contributions of this work.

Let $\Driv \colon [0,\infty) \to \bR$ be a continuous real-valued function of time (Loewner driving function). 
Let $(g_t)_{t \geq 0}$ be the associated \emph{Loewner flow}, obtained by solving the ODE
\begin{align}\label{eq:LE}
\begin{cases}
\displaystyle \frac{\partial}{\partial t} \, g_t(z) = \frac{2}{g_t(z) - \Driv_t} , \\
g_0(z) = z ,
\end{cases}
\qquad z \in \hat\bC ,
\end{align}
defined up to the first time $T_z$ when $g_{T_z}(z) = \Driv_{T_z}$. 
For example, if $\gamma \colon [0,\infty) \to \bH$ is a chordal injective curve as in \Cref{fig:SLE}, 
then there exists $\Driv = \Driv(\gamma)$ such that and $\Driv_t = g_t(\gamma(t))$ and the solution $g_t$ to~\eqref{eq:LE} 
is a uniformizing map on the complement of the curve and its reflection across the real axis, normalized in such a way that $g_t(z) \sim z + \frac{2t}{z}$ as $z \to \infty$. 
(In general, for given $(\Driv_t)_{t \geq 0}$ the Loewner ODE~\eqref{eq:LE} has a unique solution up to the blow-up time $T_z$,
but it is not obvious at all what kind sets the evolution describes.)

\subsubsection{Loewner evolution of Birkhoff invariants}

Our first main result concerns the Loewner evolution of Birkhoff invariants. 
For double poles, they transform, in a sense, covariantly under the Loewner flow (see \Cref{thm:evol_isomonodromic_times}). 
In fact, only the second order poles have a description directly in terms of the derivative of the Loewner map --- 
in general, the analytic structure of this evolution is irrespective of the rank of the singularity (\Cref{thm:Sto-Bir-Gen}).
The Birkhoff invariants $s_{i,k}$ are complex numbers related to the eigenvalues of the coefficient matrices in the ODE~\eqref{eq:non-Fuchsian}. 
We write $\hat{g}_t \colon \bC \to \bC$ for the map in the space of the Birkhoff invariants induced by the Loewner flow $(g_t)_{t \geq 0}$: 
the time-evolution of the singularities $\bs\lambda=(\lambda_1,\ldots,\lambda_n)$ induces a time-evolution for the eigenvalues of $A_{i,k}(\bs\lambda;\bs s)$ and thus of the Birkhoff invariants $s_{i,k} \in \bC$, 
which depend in general in a complicated manner on the eigenvalues (see~\cite[Section~3]{BHH:Hamiltonian_structure_of_rational_isomonodromic_deformation_systems} and \Cref{sec:evol_of_parameters}). 

\begin{thm}\label{thm:Sto-Bir-Gen}
Let $(g_t)_{t \geq 0}$ be the Loewner flow with driving function $(\Driv_t)_{t \geq 0}$. 
Consider the Loewner evolution $S_t^{i,k} \coloneqq \hat{g}_t(s_{i,k})$ of the Birkhoff invariant $s_{i,k}$ associated to the ODE~\eqref{eq:non-Fuchsian} corresponding to the pole $\lambda_i$ of order $k+1$.
Assume that its coefficient matrix $A_{i,k}(\bs\lambda;\bs s)$ in~\eqref{eq:non-Fuchsian} is diagonalizable. 
Then, we have
\begin{align}\label{SLEeq-irrgen}
S_t^{i,k} = s_{i,k} \exp \bigg(\! -\int_0^t \frac{2 k \ud u}{(\Lambda_u^i - \Driv_u)^2} \bigg) ,
\qquad \textnormal{i.e.} \quad
\partial_t S_t^{i,k} = -\frac{2 k S_t^{i,k}}{(\Lambda_t^i-\Driv_t)^2} ,
\end{align}
where $\Lambda_u^i = g_u(\lambda_i)$ represents the time-evolution of the singularities $\bs\lambda=(\lambda_1,\ldots,\lambda_n)$. 
\end{thm}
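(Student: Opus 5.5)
The approach is to pull back the ODE~\eqref{eq:non-Fuchsian} by the inverse Loewner map, identify the leading singular term at the image pole, and read off how the Birkhoff invariant transforms. Fix $t$ and set $w = g_t(z)$. Then $\tilde Y(w) \coloneqq Y(g_t^{-1}(w))$ solves a linear system $\partial_w \tilde Y = \tilde A(w)\tilde Y$ with
\begin{align*}
\tilde A(w) = (g_t^{-1})'(w)\, A(g_t^{-1}(w)),
\end{align*}
where $A(z)$ denotes the coefficient in~\eqref{eq:non-Fuchsian}. Since $g_t$ is conformal near $\lambda_i$ (before $T_{\lambda_i}$), the new system has its poles at $\Lambda^i_t = g_t(\lambda_i)$ with the same Poincar\'e ranks. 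The monodromy and Stokes data are unchanged, because this is only a holomorphic change of the independent variable. Hence $t \mapsto \tilde A$ is an isomonodromic family, and the Birkhoff invariants $S^{i,k}_t$ are defined as those of the transported system.

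Next I would identify the invariant. Following~\cite{BHH:Hamiltonian_structure_of_rational_isomonodromic_deformation_systems}, the formal normal form at $\lambda_i$ is $Y \sim G\,(1+O(z-\lambda_i))\exp\big(T(z)\big)$. Here $T(z) = -\sum_{k\ge1} \frac{t_{i,k}}{k(z-\lambda_i)^k}\sigma_3 + \theta_i\sigma_3\log(z-\lambda_i)$, and the invariant $s_{i,k}$ is read off from the coefficients of the polar part of $T$ (equivalently, the diagonal exponential factor). Under $z = g_t^{-1}(w)$, write $\zeta = w-\Lambda^i_t$ and expand
\begin{align*}
z-\lambda_i = (g_t^{-1})'(\Lambda^i_t)\,\zeta\,\big(1+O(\zeta)\big).
\end{align*}
Substituting, the top-order term $(z-\lambda_i)^{-k}$ becomes $\big((g_t^{-1})'(\Lambda_t^i)\big)^{-k}\zeta^{-k}(1+O(\zeta))$. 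So the transformed coefficient is
\begin{align*}
S^{i,k}_t = s_{i,k}\,\big(g_t'(\lambda_i)\big)^{k},
\end{align*}
provided $s_{i,k}$ is the coefficient that picks up exactly this homogeneous scaling. The lower-order corrections reshuffle only the coefficients of lower pole order.

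Finally, I would differentiate the Loewner equation~\eqref{eq:LE} in $z$. This gives
\begin{align*}
\partial_t \log g_t'(\lambda_i) = -\frac{2}{(\Lambda^i_t-\Driv_t)^2},
\end{align*}
so that $g_t'(\lambda_i) = \exp\big(-\int_0^t 2\,\ud u/(\Lambda^i_u-\Driv_u)^2\big)$. Raising this to the $k$-th power yields both forms of~\eqref{SLEeq-irrgen}.

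The main obstacle is the second step: checking that the particular normalization of the Birkhoff invariants used in the paper is exactly homogeneous of degree $k$ under the local rescaling. A priori the lower-order Taylor terms of $g_t^{-1}$ mix $t_{i,k'}$ with $k'>k$ into $t_{i,k}$. I would handle this by working with the diagonalizable coefficient $A_{i,k}$ directly: its eigenvalue is multiplied by the same power of the derivative, since the $(z-\lambda_i)^{-(k+1)}$ coefficient of $\tilde A$ gets the factor $\big((g_t^{-1})'\big)^{-k}$ after accounting for the $\ud z$ factor. If the mixing is genuinely present, I would instead argue at the infinitesimal level: the time derivative of $g_t$ is a vector field $2/(w-\Driv_t)$ that is holomorphic near $\Lambda^i_t$, and the evolution of $s_{i,k}$ under it depends only on its first derivative at the pole. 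That reproduces the stated ODE.
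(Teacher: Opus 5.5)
Your argument is correct for the invariant the statement concerns, and it reaches the paper's formula $S^{i,k}_t=g_t'(\lambda_i)^k s_{i,k}$ by a genuinely different route.

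The paper never transports the connection. It realizes the pole of order $k+1$ as the confluence of $k+1$ simple poles at $\lambda_i+\epsilon(j-1)s_i$ and identifies the top invariant as $s_{i,k}=s_i^k$. It then lets $g_t$ move the simple poles and reads off $S^i_t=\lim_{\epsilon\to0}\epsilon^{-1}\big(g_t(\lambda_i+\epsilon s_i)-g_t(\lambda_i)\big)=g_t'(\lambda_i)s_i$, so that $S^{i,k}_t=(S^i_t)^k$. The ODE then follows from the Loewner equation, exactly as in your last step.

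Your pullback $\tilde A(w)=(g_t^{-1})'(w)\,A(g_t^{-1}(w))$ instead gives $\tilde A_{i,k}=g_t'(\lambda_i)^k A_{i,k}$ for the leading coefficient in one line. This is intrinsic and local, and it gives a concrete meaning to the loosely defined map $\hat g_t$: the formal invariant of the transported connection at $\Lambda^i_t$. It also avoids the extra hypotheses of the confluence picture, namely that the limits exist and that the limiting leading matrix has eigenvalues exactly $\pm s^k$. The paper's route, in exchange, supports its interpretation of Birkhoff invariants as rates of confluence.

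Your global assertions, that $\tilde A$ is an isomonodromic family with unchanged monodromy, are not needed and are not literally right. $\tilde A$ lives on $\hat\bC\setminus[O^-_t,O^+_t]$ and is not rational, so it is not the rational system with parameters $(\bs\Lambda_t;\bs S_t)$. Only the local formal data at $\Lambda^i_t$ enter, and those are preserved.

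The obstacle you raise dissolves under the hypothesis. ``Pole of order $k+1$'' with $A_{i,k}$ diagonalizable means $k=r_i$. So $A_{i,k}$ is the leading coefficient, there is no $k'>k$, and no mixing can occur; your direct argument with $A_{i,k}$ already closes the proof.

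You should drop the fallback, because it is false precisely in the case it was meant to cover. The evolution of sub-leading invariants does depend on higher derivatives of the vector field $2/(w-Z_t)$ at the pole. For example, take $r_i=2$ and $s_{i,k}=-\res_{z=\lambda_i}(z-\lambda_i)^k\sigma$, and use $\tilde\sigma(w)=(g_t^{-1})'(w)\,\sigma(g_t^{-1}(w))$. One finds $S^{i,2}_t=g_t'(\lambda_i)^2 s_{i,2}$, but $S^{i,1}_t=g_t'(\lambda_i)s_{i,1}+\tfrac12 g_t''(\lambda_i)s_{i,2}$. Hence $\partial_t S^{i,1}_t=-2S^{i,1}_t/(\Lambda^i_t-Z_t)^2+2S^{i,2}_t/(\Lambda^i_t-Z_t)^3$. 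The homogeneous law therefore holds only for the top invariant, which is exactly what the statement asserts.
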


The analytic structure $(\Lambda_t^i-\Driv_t)^{-2}$ of this evolution is independent of the order of the singularity 
for the corresponding Birkhoff invariant $s_{i,k} = S_0^{i,k}$, owing to the purely geometric nature of the latter. 
This fact is crucial for the construction of the $\SLE_4$ martingale.

We prove \Cref{thm:Sto-Bir-Gen} in 
\Cref{subsec:general_poles}: it will be a corollary of the case of double (second order) poles (\Cref{thm:evol_isomonodromic_times} in \Cref{subsec:second_order_poles}).
The crucial ingredient to derive the time-evolution of the Birkhoff invariants is that they can be regarded as the \emph{rate of confluence} of simple poles. 
Indeed, higher order poles can be obtained from simple poles through a confluence procedure~\cite[Section~4]{GMR:Isomonodromic_deformations_confluence_reduction_and_quantisation} --- see \Cref{subsubsec:confluence} for a simple example.

\subsubsection{Irregular SLE martingales}

Using the above evolution, in \Cref{thm:main_mgle} we construct martingale observables for $\SLE_4$ curves at the presence of irregular singularities of rank one (i.e., double poles) on the Riemann sphere.
Interestingly enough, the prefactor of the martingale which governs the conformal covariance 
contains the pre-Schwarzian and Schwarzian of the Loewner map (see Equation~\eqref{eq:Mart}). 
We also provide a possible extension to higher order singularities.

Specifically, we consider a simple $\SLE$ curve $\gamma \colon [0,\infty) \to \bH$ in the upper half-plane $(\bH,0,\infty)$ between the boundary points $0 = \gamma(0)$ and $\infty = \lim_{t \to \infty}\gamma(t)$,
with Loewner driving function $\Driv_t = 2B_t$. 
In the spirit of CFT on surfaces with boundary, we consider the Schottky double: the Riemann sphere $\hat\bC$ obtained 
by gluing $\bH$ with its reflection $\bH^* = \{ z \in \bC \cond \Im(z)<0 \}$ isometrically along their common boundary $\bR \cup \{\infty\}$.  
We consider $n$ distinct punctures $\lambda_1, \lambda_2, \ldots, \lambda_n$ in $\bC$. 
The endpoints $0$ and $\infty$ of the curve are also marked points, playing a different role than the punctures:
the time evolution of the curve started at $0$ determines a flow on the moduli space of $\hat\bC_{\bs\lambda}$, 
and $\infty$ is the target point of the curve, which could be seen as a base point. (See \Cref{fig:SLE}.)

We consider in detail the case where the Poincar\'e rank of each singularity equals $r_i = 1$, and write $(s_{i},-s_{i})$ for the eigenvalues of $A_{i,1}(\bs\lambda;\bs s) \in \sl_2(\bC)$ (the Birkhoff invariants). 
Let $(\alpha_{i},-\alpha_{i})$ be the eigenvalues of $A_{i,0}(\bs\lambda;\bs s) \in \sl_2(\bC)$, playing the role of conformal 
weights\footnote{In the expression~\eqref{eq:Mart}, the functions $F(\bs\Lambda_t; \bs S_t)$ and $\tau(\bs\Lambda_t; \bs S_t)$ are to be understood as the functions of the Loewner evolution of variables $\bs\lambda$, $\bs s$, while $Y(\Driv_t, \bs\Lambda_t; \bs S_t)$ depends on the evolution of $\bs \lambda$, $\bs s$, as well as $z$.}.

\begin{restatable}{thm}{Mgle}
\label{thm:main_mgle}
Let $(g_t)_{t \geq 0}$ be the Loewner flow corresponding to a chordal $\SLE_4$ in $(\bH, 0,\infty)$. 
Then, the following process is a matrix-valued local martingale \textnormal{(}for the natural filtration\footnote{We assume the usual conditions for the probability space and filtration.}\textnormal{):} 
\begin{align}\label{eq:Mart}
M_t \coloneqq 
F(\bs \Lambda_t;  \bs S_t) \, \tau(\bs\Lambda_t; \bs S_t) \, Y(\Driv_t, \bs\Lambda_t; \bs S_t) ,
\end{align}
where $\Driv_t = 2 B_t$ is the Loewner driving function, 
$\bs\Lambda_t = (\Lambda_t^1, \ldots,\Lambda_t^{n})$ are the time-evolutions $\Lambda_t^i = g_t(\lambda_i)$ of the punctures 
and $\bs S_t = (S_t^1, \ldots,S_t^{n})$ the time-evolutions $S_t^i = \hat{g}_t(s_i)$ of the associated Birkhoff invariants,
and 
\begin{align*}
F(\bs \Lambda_t;  \bs S_t) 
\coloneqq \; & \prod_{i=1}^{n} (g_t'(\lambda_i))^{\alpha_i^2} \exp \bigg(\frac{s_i^2}{6} \Schwarzian(g_t)(\lambda_i) + s_i \alpha_i \PreSchwarzian(g_t)(\lambda_i) \bigg) 
\\ 
= \; & \exp \Bigg( \! - 2 \sum_{i=1}^{n} \int_0^t \bigg( 
\frac{\alpha_i^2}{(\Lambda_r^i - \Driv_r)^2}
\; + \; \frac{ (S_r^i)^2 }{(\Lambda_r^i - \Driv_r)^4}
\; - \; \frac{2 \alpha_i S_r^i }{(\Lambda_r^i - \Driv_r)^3} \bigg)  \ud r \Bigg) ,
\end{align*}
where $\PreSchwarzian (g) \coloneqq g''/g'$ is the pre-Schwarzian derivative and $\Schwarzian(g) \coloneqq (g''/g')' - \frac12 (g''/g')^2$ is the Schwarzian derivative of a function $g$. 
\end{restatable}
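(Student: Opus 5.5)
Since $\Driv_t = 2B_t$, the proof amounts to an Itô computation showing that the drift of $M_t$ vanishes. The key ingredients are the isomonodromic deformation equations for $Y$ and the Hamiltonian definition of $\tau$ in the rank-one case. First I record the time evolutions of all parameters. By \eqref{eq:LE}, $\partial_t \Lambda^i_t = 2/(\Lambda^i_t-\Driv_t)$. By \Cref{thm:Sto-Bir-Gen} with $k=1$, $\partial_t S^i_t = -2S^i_t/(\Lambda^i_t-\Driv_t)^2$. Both are of bounded variation. Only $\Driv_t$ carries a martingale part, with $\ud\langle \Driv\rangle_t = 4\ud t$. Differentiating $\log g_t'$, $\PreSchwarzian(g_t)$ and $\Schwarzian(g_t)$ in $t$ gives the integral form of $F$ stated in the theorem, with $\partial_t \log F$ equal to minus twice the bracket. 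This is a routine check: $\partial_t \log g_t'(\lambda) = -2/(g_t-\Driv)^2$, and the higher derivatives follow by differentiating in $\lambda$. I take this identity as given.

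Next, Itô's formula gives
\[
\ud M_t = F\tau\Big[\partial_z Y\,\ud\Driv_t + \Big(2\partial_z^2 Y + \sum_i \tfrac{2}{\Lambda^i-\Driv}\partial_{\lambda_i}Y - \sum_i \tfrac{2S^i}{(\Lambda^i-\Driv)^2}\partial_{s_i}Y + (\partial_t\log F + \partial_t \log\tau)\,Y\Big)\ud t\Big],
\]
where $\partial_t\log\tau = \sum_i \frac{2}{\Lambda^i-\Driv}\partial_{\lambda_i}\log\tau - \sum_i\frac{2S^i}{(\Lambda^i-\Driv)^2}\partial_{s_i}\log\tau$. All terms are evaluated at $z=\Driv_t$. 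I then substitute the linear system $\partial_z Y = A(z)Y$ with $A(z) = \sum_i \big(\frac{A_{i,0}}{z-\lambda_i}+\frac{A_{i,1}}{(z-\lambda_i)^2}\big)$. This gives $\partial_z^2Y = (A'+A^2)Y$. I also substitute the JMU deformation equations $\partial_{\lambda_i}Y = -(\frac{A_{i,0}}{z-\lambda_i}+\frac{A_{i,1}}{(z-\lambda_i)^2})Y + (\text{$z$-independent gauge})Y$ and the analogous equation for $\partial_{s_i}Y$. In the normalization where the gauge term vanishes (trivial behaviour at $\infty$), $\partial_{s_i}Y$ is $-\frac{\widetilde A_i}{(z-\lambda_i)}Y$ for the suitable rescaled coefficient coming from the confluence picture of \Cref{subsec:second_order_poles}, that is, the limit of the simple-pole equations \eqref{eq:Fuchsian_IDE}.

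The drift is then a matrix $D(z)Y$ with $D$ rational in $z$, evaluated at $z=\Driv$. The linear terms $2A'$ cancel against the $\partial_{\lambda_i}$ and $\partial_{s_i}$ contributions. This is exactly where the choice of the rate $S^i$ from \Cref{thm:Sto-Bir-Gen} matters: the factor $2k/(\Lambda-\Driv)^2$ with $k=1$ is what matches the $(z-\lambda_i)^{-2}$ and $(z-\lambda_i)^{-3}$ parts of $A'$. What remains is $2A(\Driv)^2$ together with scalar terms. Because $A\in\sl_2$, Cayley–Hamilton gives $A^2=\frac12\Tr(A^2)\,\mathbb{1}$, so the remainder is scalar. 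I then expand $\Tr A(z)^2$ in partial fractions at each $\lambda_i$. The fourth-order pole coefficient is $\Tr A_{i,1}^2 = 2s_i^2$. The third-order coefficient is $2\Tr(A_{i,0}A_{i,1})$, which is linked to $\alpha_i s_i$ via the invariance of formal monodromy. The second-order coefficient includes $\Tr A_{i,0}^2$, which is linked to $\alpha_i^2$ plus a Hamiltonian correction. Finally, the simple-pole residues are the JMU Hamiltonians $\partial_{\lambda_i}\log\tau$, and the $s_i$-Hamiltonians $\partial_{s_i}\log\tau$ are the corresponding next coefficients. Matching these term by term, the $\tau$-terms cancel the non-universal parts and $\partial_t\log F$ cancels the universal parts $\alpha_i^2/(\cdot)^2$, $(S^i)^2/(\cdot)^4$ and $\alpha_iS^i/(\cdot)^3$. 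The drift therefore vanishes, and $M_t$ is a local martingale up to the first hitting time of any $\Lambda^i$.

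I expect the main obstacle to be the precise identification of the third- and second-order Laurent coefficients of $\frac12\Tr A^2$ at $\lambda_i$ in terms of $\alpha_i, s_i$ and of the Hamiltonians $\partial_{s_i}\log\tau$, including the signs and factors (the $-2\alpha_iS/(\cdot)^3$ term). My first attempt would use the normal form at each double pole, $A_{i,1}=\mathrm{diag}(s_i,-s_i)$ after gauge, with $A_{i,0}$ having diagonal part $\mathrm{diag}(\alpha_i,-\alpha_i)$ (the formal monodromy exponent). Then $\Tr(A_{i,0}A_{i,1})=2\alpha_is_i$, and the off-diagonal parts of $A_{i,0}$ enter $\Tr A_{i,0}^2$, where they should be absorbed by the Hamiltonians. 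If this is delicate, a fallback is to derive the whole identity by confluence: apply Dubédat's Fuchsian martingale with two simple poles $\lambda_i\pm\epsilon$ whose residues scale like $s_i/(2\epsilon)$, and pass to the limit $\epsilon\to0$ in the drift identity.
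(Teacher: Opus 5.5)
Your skeleton matches the paper's: It\^o's formula for $Y$ via the Lax system, cancellation of the matrix-valued part of the drift, and Cayley--Hamilton, which reduces everything to the scalar identity $\Tr(A_t^2)+\partial_t\log\tau_t+\partial_t\log F_t=0$. The matrix cancellation you assert does hold, but only with $s_i\,\partial_{s_i}Y=-A_{i,1}(z-\lambda_i)^{-1}Y$. So your $\widetilde A_i$ must be exactly $A_{i,1}/s_i$, as in \eqref{eq:LaxVi}; state this rather than appeal to a ``suitable rescaled coefficient''.

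The genuine gap is the step you flagged, and it is more than bookkeeping. The theorem concerns the specific $\tau$ of \eqref{def:tau}, and you never use the definition of $\partial_{s_i}\log\tau$. The coefficient of $(\Lambda^i_t-\Driv_t)^{-2}$ in the drift is $\ell_{i,2}-2S^i_tH_{s_i}-2\alpha_i^2$, so you need $2s_iH_{s_i}=\ell_{i,2}-2\alpha_i^2$ identically. Under the natural reading of your phrase ``the corresponding next coefficient'', $H_{s_i}$ would be $\ell_{i,2}/(2s_i)$. Then the term $-2\alpha_i^2(\Lambda^i_t-\Driv_t)^{-2}$ coming from the factor $(g_t'(\lambda_i))^{\alpha_i^2}$ would be left over.

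The missing idea is that $H_{s_i}$ is the next coefficient of $\sigma=-(\tfrac12\Tr A^2)^{1/2}$, not of $\Tr A^2$. By definition $H_{s_i}=-2\,\res_{z=\lambda_i}\sigma/(z-\lambda_i)$. Write $\Tr A^2=\ell_{i,4}x^{-4}\bigl(1+\tfrac{\ell_{i,3}}{\ell_{i,4}}x+\tfrac{\ell_{i,2}}{\ell_{i,4}}x^2+O(x^3)\bigr)$ with $x=z-\lambda_i$ and expand the square root. This gives $H_{s_i}=\frac{\ell_{i,2}}{2s_i}-\frac{\ell_{i,3}^2}{16s_i^3}$. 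The nonlinear correction equals $\alpha_i^2/s_i$ once $\ell_{i,3}=\pm4s_i\alpha_i$, and this is precisely where the weight $\alpha_i^2$ comes from.

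This also settles your worry about the off-diagonal part of $A_{i,0}$. Let $R_i$ be the part of $A$ regular at $\lambda_i$; then $\ell_{i,2}=\Tr A_{i,0}^2+2\Tr\bigl(A_{i,1}R_i(\lambda_i)\bigr)$ also contains cross terms with the other poles. But $2s_iH_{s_i}$ absorbs all of $\ell_{i,2}$ at once, leaving only the universal piece $\ell_{i,3}^2/(8s_i^2)=2\alpha_i^2$ for $F$ to cancel.

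Your normal-form computation of $\ell_{i,3}$ is correct: the correction $[D_{i,1},N_1]$ in $A_{i,0}$ is off-diagonal, so only the formal monodromy contributes. Its sign, however, is a convention. One gets $\ell_{i,3}=+4s_i\alpha_i$ exactly when $+\alpha_i$ sits on the same diagonal entry as $+s_i$, as in your normal form. That is the pairing required for the $s_i\alpha_i\PreSchwarzian(g_t)(\lambda_i)$ term in $F$ to cancel the $(\Lambda^i_t-\Driv_t)^{-3}$ term of $\Tr(A_t^2)$. With the paper's normalization $A_{i,1}=-G_iD_{i,1}G_i^{-1}$, the same trace literally equals $-4s_i\alpha_i$, so the paper's stated $\ell_{i,3}=4s_i\alpha_i$ tacitly uses your pairing.

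Finally, the confluence fallback via Dub\'edat's Fuchsian martingale does not bypass the gap. The simple-pole exponents blow up like $s_i/(2\epsilon)$, so both the covariance factors and the Fuchsian tau-function need $\epsilon$-dependent renormalization; that is where the Schwarzian and pre-Schwarzian would have to emerge. Identifying the renormalized limit with the $\tau$ of \eqref{def:tau} again requires the formula for $H_{s_i}$ above.
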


Interestingly, the computation leading to the local martingale property of the process~\eqref{eq:Mart}  
crucially relies on the identity $\kappa=4$. This is a manifestation of CFTs with central charge $c=1$ having a direct connection to tau-functions.

{
Simple poles can be recovered by setting $s_i=0$. Then, for special values of monodromy data 
one could recover the local martingale constructed by Dub\'edat~\cite{Dubedat:Double_dimers_conformal_loop_ensembles_and_isomonodromic_deformations}.
Moreover, in \Cref{thm:SLE_4(-2)_mart} we also prove an analogous result for an $\SLE_4(\rho)$ process with weight $\rho=-2$, relevant to $\CLE_4$ ensembles and double-dimer interfaces, for example. 
}

\subsubsection{Confluent BPZ differential equations}

Our result only uses stochastic analysis and integrability. 
However, it is possible to arrive at the correct form of the martingale observable from CFT considerations.
Indeed, in the regular case (as also mentioned above), it is now well known that evolution of SLE curves is related to CFT correlation functions of degenerate fields, which in particular satisfy Belavin-Polyakov-Zamolodchikov (BPZ) PDEs. 
We establish an irregular version of this. 

Let $\FunSp(\cX_n)$ denote $C^2$-functions $\PartF$ on the configuration space $\cX_n \subset \bR \times \bC^{2n} \cong \bR^{1+4n}$, 
\begin{align*}
\cX_n := \Big\{ (z, \bs\lambda; \bs s) \in \bR \times \bC^n \times \bC^n \;|\; z \neq \lambda_i, \, \lambda_i \neq \lambda_j \textnormal{ for } 1 \leq i \neq i \leq n \Big\} ,
\end{align*}
holomorphic in the variables $(\bs\lambda; \bs s)$.
Here, the variable $z \in \bR$ represents the tip of the $\SLE_4$ curve. 
The confluent BPZ PDE~\eqref{eq:BPZ} in the below \Cref{thm:BPZ_general} involves Wirtinger derivatives $\frac{\partial}{\partial {\lambda_i}} = \frac{1}{2} \big( \partial_{\Re(\lambda_i)} - \ii \partial_{\Im(\lambda_i)} \big)$ 
and $\frac{\partial}{\partial {s_i}} = \frac{1}{2}  \big( \partial_{\Re(s_i)} - \ii \partial_{\Im(s_i)} \big)$ at the complex variables. 
(Relaxing holomorphicity, the PDE would also include the complex conjugate variables.)

\begin{thm}\label{thm:BPZ_general}
For chordal $\SLE_4$ in $(\bH, 0,\infty)$, the process involving $\PartF \in \FunSp(\cX_n)$, 
\begin{align}\label{eq:Mart_general}
M_t \coloneqq 
F(\bs\Lambda_t; \bs S_t) \, \PartF(\Driv_t, \bs\Lambda_t; \bs S_t) ,
\end{align}
is a local martingale if and only if 
$\PartF \in \FunSp(\cX_n)$ solves the confluent BPZ PDE
\begin{align} \label{eq:BPZ}
\begin{split}
\Bigg[ \frac{\partial^2}{\partial z^2} 
- \sum_{i=1}^{n} 
\; & \Bigg( \frac{1}{z-\lambda_i} \frac{\partial}{\partial {\lambda_i}} + \frac{s_i}{(z-\lambda_i)^2} \frac{\partial}{\partial {s_i}} 
\\
\; & + \frac{\alpha_i^2}{(z-\lambda_i)^2} +  \frac{2 s_i \alpha_i }{(z - \lambda_i)^3} + \frac{s_i^2 }{(z - \lambda_i)^4}  \Bigg) \Bigg]  \PartF(z, \bs\lambda; \bs s) = 0 .
\end{split}
\end{align}
In particular, the function $\tau(\bs\lambda; \bs s) \, \Tr \big( Y(z, \bs\lambda; \bs s) \big)$ solves the confluent BPZ PDE~\eqref{eq:BPZ}. 
\end{thm}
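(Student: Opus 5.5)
We will apply It\^o's formula to $M_t = F(\bs\Lambda_t;\bs S_t)\,\PartF(\Driv_t,\bs\Lambda_t;\bs S_t)$ with $\Driv_t = 2B_t$, so that $\ud \langle \Driv\rangle_t = 4\ud t$. This is where $\kappa=4$ enters.

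\textbf{Step 1: finite-variation parts.} The Loewner equation~\eqref{eq:LE} gives $\partial_t \Lambda^i_t = 2/(\Lambda^i_t - \Driv_t)$. \Cref{thm:Sto-Bir-Gen} with $k=1$ gives $\partial_t S^i_t = -2S^i_t/(\Lambda^i_t-\Driv_t)^2$. The second expression for $F$ in \Cref{thm:main_mgle} shows that $F$ is of finite variation, with
\begin{align*}
\ud F = -2F\sum_{i=1}^n\Big(\frac{\alpha_i^2}{(\Lambda^i_t-\Driv_t)^2}+\frac{(S^i_t)^2}{(\Lambda^i_t-\Driv_t)^4}-\frac{2\alpha_i S^i_t}{(\Lambda^i_t-\Driv_t)^3}\Big)\ud t .
\end{align*}
Before using this, we will check that the product formula for $F$ agrees with this integral form. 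To do so, differentiate $\log g_t'(\lambda)$, $\PreSchwarzian(g_t)(\lambda)$ and $\Schwarzian(g_t)(\lambda)$ in $t$ using $\partial_t g_t' = -2g_t'/(g_t-\Driv_t)^2$. Then check that $\alpha_i^2\log g_t' + \tfrac{s_i^2}{6}\Schwarzian + s_i\alpha_i\PreSchwarzian$ has the stated derivative once $S^i_t = s_i (g_t'(\lambda_i))^2$ is used. This identity, $S^i_t = s_i g_t'(\lambda_i)^2$, follows from \Cref{thm:Sto-Bir-Gen} because $\partial_t\log g_t'(\lambda_i) = -2/(\Lambda^i_t-\Driv_t)^2$. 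We need it only as a consistency check, since the proof uses the integral form.

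\textbf{Step 2: It\^o expansion and drift.} Since $\PartF$ is $C^2$ and holomorphic in $(\bs\lambda;\bs s)$, and $\Lambda^i_t, S^i_t$ are $C^1$ in $t$, the chain rule for the holomorphic variables uses only the Wirtinger derivatives $\partial_{\lambda_i}$ and $\partial_{s_i}$. It\^o's formula then gives
\begin{align*}
\ud M_t = 2F\,\partial_z\PartF\,\ud B_t + F\Big[2\partial_z^2\PartF+\sum_i\Big(\frac{2\partial_{\lambda_i}\PartF}{\Lambda^i-\Driv}-\frac{2S^i\partial_{s_i}\PartF}{(\Lambda^i-\Driv)^2}\Big) + \frac{\ud F}{F\,\ud t}\PartF\Big]\ud t ,
\end{align*}
with everything evaluated at $(\Driv_t,\bs\Lambda_t;\bs S_t)$.

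Substitute $z=\Driv_t$ and note $\Lambda^i-\Driv = -(z-\lambda_i)$. Odd powers then change sign and even powers do not. The drift therefore equals $2F$ times the left-hand side of~\eqref{eq:BPZ} evaluated along the process. The term $-2\alpha_iS/(\Lambda-\Driv)^3$ becomes $+2\alpha_i s_i/(z-\lambda_i)^3$, which produces the $+2s_i\alpha_i/(z-\lambda_i)^3$ appearing after the overall minus sign. I expect this sign bookkeeping to be the main, if routine, obstacle. The factor $2 = \kappa/2$ in front of $\partial_z^2$ matches the factor $2$ from the Loewner drift, and this matching is exactly why $\kappa=4$ is needed.

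\textbf{Step 3: the equivalence.} If~\eqref{eq:BPZ} holds, the drift vanishes and $M$ is a local martingale. Conversely, if $M$ is a local martingale, the continuous finite-variation drift part must vanish almost surely for all $t$. Starting from any point of $\cX_n$, the process $(\Driv_t,\bs\Lambda_t,\bs S_t)$ is a diffusion whose $z$-component is nondegenerate, so the support of its law for small $t$ fills a neighbourhood in $z$. The other coordinates move continuously from their initial values, which are arbitrary. By continuity of the drift expression, the operator applied to $\PartF$ therefore vanishes on all of $\cX_n$. To make this rigorous, we will run the process from an arbitrary initial configuration $(z,\bs\lambda;\bs s)$, using translation of the driving function, and let $t\downarrow0$.

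\textbf{Step 4: the trace statement.} Take the trace of the matrix local martingale~\eqref{eq:Mart} from \Cref{thm:main_mgle}. It is $F\cdot\tau\,\Tr Y$, a local martingale of the form~\eqref{eq:Mart_general} with $\PartF = \tau\,\Tr Y$. Smoothness and holomorphicity of $\tau Y$ in the isomonodromic variables put $\PartF$ in $\FunSp(\cX_n)$. The previous step, applied from arbitrary initial points, then shows that $\tau\,\Tr Y$ solves~\eqref{eq:BPZ}.
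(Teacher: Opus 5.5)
Your argument is correct in substance. For the converse direction, however, it takes a different route from the paper.

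The paper obtains Theorem~\ref{thm:BPZ_general} by specializing its force-point results and letting $\xi\to\infty$. Sufficiency is the It\^o computation of Proposition~\ref{prop:non_BPZ_PDE} for $\SLE_4(-2)$. Necessity is Theorem~\ref{thm:non_BPZ_PDE2}, which applies the Karrila--Viitasaari theorem after checking a complex H\"ormander bracket condition (Lemma~\ref{lem:Hormander}). The trace statement is a corollary, obtained as in your Step~4 or by direct computation.

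You instead stay in the chordal setting and use the standing hypothesis $\PartF\in\FunSp(\cX_n)$. The drift is then a continuous function $D$ of $X_r=(\Driv_r,\bs\Lambda_r;\bs S_r)$. The process $M_t-M_0-\int_0^t 2F_r\,(\partial_z\PartF)(X_r)\,\ud B_r$ is a continuous finite-variation local martingale, hence zero. Since $F$ never vanishes and $r\mapsto D(X_r)$ is continuous, $D(X_0)=0$. This is the ``direct computation'' the paper mentions right after the statement. It needs no hypoellipticity and no limit $\xi\to\infty$ (which the paper does not spell out), and it suffices for the class $\FunSp(\cX_n)$. What the paper's route buys is a much stronger converse: $\PartF$ need only be locally bounded and measurable, with the PDE holding weakly. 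That is really the content of Theorem~\ref{thm:non_BPZ_PDE2}.

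Two things need repair.

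\textbf{Starting points.} Letting $t\downarrow 0$ from an arbitrary $(z,\bs\lambda;\bs s)$ requires the local martingale property for every starting point of the driving function. Translating the driving function does not provide this. The observable for $\SLE_4$ from $z$ with punctures $\bs\lambda$ coincides with the one for $\SLE_4$ from $0$ with punctures shifted by $-z$ and function $\PartF(\cdot+z,\cdot+z,\cdot)$, and $\PartF$ is not translation invariant. You have two options:
\begin{itemize}
\item Read the hypothesis as holding for all initial configurations. The paper tacitly does the same when it treats $M$ as a martingale observable on the whole configuration space in Theorem~\ref{thm:non_BPZ_PDE2}.
\item Keep $\Driv_0=0$ and use that $D(X_t)=0$ for all $t<T$ almost surely, together with the support theorem for $B$ and continuity of the Loewner flow in the driving function. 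Choose the initial $\lambda_i$ as preimages of target points under the Loewner map of a deterministic path from $0$ to $z$. This brings $X_t$ arbitrarily close to any configuration with all $\lambda_i\notin\bR$. Continuity of $D$ then covers real punctures; this step is genuinely needed, since real punctures between $0$ and $z$ are unreachable from $\Driv_0=0$.
\end{itemize}

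\textbf{Consistency check for $F$.} The correct relation is $S^i_t=s_i\,g_t'(\lambda_i)$ (Theorem~\ref{thm:evol_isomonodromic_times}), not $s_i\,g_t'(\lambda_i)^2$. Your own justification via $\partial_t\log g_t'(\lambda_i)$ gives exactly this. Only with it do $\tfrac{s_i^2}{6}\,\partial_t\Schwarzian(g_t)(\lambda_i)=-2(S^i_t)^2/(\Lambda^i_t-\Driv_t)^4$ and $s_i\alpha_i\,\partial_t\PreSchwarzian(g_t)(\lambda_i)=4\alpha_iS^i_t/(\Lambda^i_t-\Driv_t)^3$ reproduce the integral form of $F$. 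Your main argument uses only the integral form, so this slip is harmless.
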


With the holomorphicity assumption, \Cref{thm:BPZ_general} is a direct computation using stochastic calculus:
the confluent BPZ PDE~\eqref{eq:BPZ} holds if and only if $M$ has a vanishing drift.  
In \Cref{thm:BPZ_general}, for notational ease we have stated the simplest case, which also covers the concrete expression $\tau(\bs\lambda; \bs s) \, \Tr \big( Y(z, \bs\lambda; \bs s) \big)$, 
which can be obtained from \Cref{thm:main_mgle}. 
It is quite interesting, however, that PDEs of type~\eqref{eq:BPZ} are also necessary in general for processes of type~\eqref{eq:Mart_general} to be local martingales.
In the case of certain types of regular real singularities, an analogue of this fact appears in~\cite{Dubedat:SLE_and_Virasoro_representations_localization} 
and in later works (e.g.,~\cite{Peltola-Wu:Global_and_local_multiple_SLEs_and_connection_probabilities_for_level_lines_of_GFF, FLPW:Multiple_SLEs_Coulomb_gas_integrals_and_pure_partition_functions}).
To establish a more general result, we use the very recent~\cite{Karrila-Viitasaari:In_prep}, which gives general conditions applicable to our situation. 
We state the result in \Cref{thm:non_BPZ_PDE2} in \Cref{subsec:BPZ}.

\bigskip

{\bf Acknowledgments.}
This project was initiated during and after the Workshop ``Women in Mathematical Physics'' (WOMAP2) in Banff in 2023, and we would like to thank the organizers and the Banff staff for setting up the fruitful workshop. 
We thank Misha Basok for interesting discussions on dimers and isomonodromy, and Marta Mazzocco for useful references and encouragement. 
E.P. thanks Alex Karrila for very useful correspondences about their work in~\cite{Karrila-Viitasaari:In_prep}, and Julien Roussillon for discussions on the confluent BPZ equations.
We also wish to thank Nezhla Aghaei, Ariane Carrance, Elba Garcia Failde, and Alba Grassi for numerous stimulating discussions on related topics. 


H.D.'s work is supported by Marie Sk{\l}odowska-Curie Postdoctoral Fellowship 101203697.

A.K.'s work was supported by Eccellenza grant 194648 of the Swiss National Science Foundation and she was a member of NCCR Swissmap. 

Part of this work has been performed while H.D. and E.P. visited the Institute for Pure and Applied Mathematics (IPAM) in 2024, supported by the National Science Foundation (Grant No. DMS-1925919);
and while the authors visited the Institut Henri Poincar\'e (IHP) in Paris during Fall 2024, which we cordially thank for hospitality and support.
Further work on this project was carried out while the authors visited the Hausdorff Research Institute for Mathematics (HIM) in Bonn during Summer 2025, 
supported by the Deutsche Forschungsgemeinschaft (DFG, German Research Foundation) under Germany's Excellence Strategy EXC-2047/1-390685813.  
E.P.~also acknowledges earlier support from the DFG collaborative research centre ``The mathematics of emerging effects'' CRC-1060/211504053.

This material is part of a project that has received funding from the European Research Council (ERC) under the European Research Council (ERC) under the European Union's Horizon 2020 research and innovation programme (101042460): 
ERC Starting grant ``Interplay of structures in conformal and universal random geometry'' (ISCoURaGe) 
and from the Academy of Finland grant number 340461 ``Conformal invariance in planar random geometry.'' 
While carrying out this project, E.P.~has also been supported by 
the Academy of Finland Centre of Excellence Programme grant number 346315 ``Finnish centre of excellence in Randomness and STructures (FiRST),''
and by the Deutsche Forschungsgemeinschaft (DFG, German Research Foundation) project number 390534769 ``Matter and Light for Quantum Computing (ML4Q).''

%% file: tex-arXiv/2-stoch.tex
In this section, we introduce some basics on the theory of isomonodromic deformations for non-Fuchsian ODEs,
very briefly mention the related integrable structure (the Hamiltonians), and define the tau-function, 
which plays an important role in the present work. 
We first focus on the case of double poles in \Cref{subsec:second_order_poles}, and treat the general case in \Cref{subsec:general_poles}. 
The main results of this section are \Cref{thm:evol_isomonodromic_times} and \Cref{thm:Sto-Bir-Gen}.

The Birkhoff invariants $\bs s = (s_{i,k})$ are spectral parameters arising in the isomonodromic flow from the additional deformations associated to the eigenvalues of the coefficient matrices $A_{i,k}(\bs\lambda;\bs s)$ of the irregular singularities in~(\ref{eq:non-Fuchsian}). 
The higher order poles at $\bs\lambda$, in turn, arise from simple poles through a confluence procedure~\cite[Section~4]{GMR:Isomonodromic_deformations_confluence_reduction_and_quantisation}.
With this insight, one can view each $s_{i,k}$ as the \emph{rate of confluence of simple poles} (see \Cref{subsubsec:confluence}). 

\subsection{Non-Fuchsian equations with double poles\label{subsec:second_order_poles}}

First, we gather some basics in \Cref{subsubsec:double_poles} and discuss the confluence procedure to get higher order poles from simple poles in \Cref{subsubsec:confluence}.
\Cref{subsubsec:second_order_poles_proof} concerns Loewner evolution of the Birkhoff invariants associated to double poles, culminating in \Cref{thm:evol_isomonodromic_times}.

\subsubsection{Birkhoff spectral invariants and tau-function for double poles}
\label{subsubsec:double_poles}

Let $\bs\lambda=(\lambda_1,\ldots,\lambda_n)$ be $n$ punctures on the Riemann sphere $\hat\bC \coloneqq \bC \cup \{\infty\}$.
Let us consider the case of double poles and the non-Fuchsian equation~\eqref{eq:non-Fuchsian} in the form
\begin{align}\label{eq:non-Fuchsian-DP}
\begin{split}
\frac{\partial}{\partial z} Y(z) = \; & A(z) \, Y(z) ,
\qquad z\in \hat\bC_{\bs\lambda} \coloneqq \hat\bC \setminus \{\lambda_1,\ldots,\lambda_n\} ,
\\
A(z) \coloneqq \; & \sum_{i=1}^{n} \bigg( \frac{A_{i,0}}{z-\lambda_i} + \frac{A_{i,1}}{(z-\lambda_i)^{2}} \bigg) , \qquad A_{i,k} \in \sl_2(\bC) ,
\end{split}
\end{align}
where the (in our case, traceless) matrix-valued function $A(z)$ is dubbed the \emph{Lax matrix}.
The local solution $Y(z)$ is a holomorphic $(2\times 2)$-matrix valued function. 
For each $i$, the eigenvalue $s_{i}$ of the leading coefficient matrix $A_{i,1}$ with non-negative real part 
is called the \emph{Birkhoff (spectral) invariant} associated to the puncture $\lambda_i$.  
For ease, we assume throughout that the matrices $A_{i,1} \in \sl_2(\bC)$ are non-zero and diagonalizable, such that 
\begin{align}\label{def:diagLij}
A_{i,1} = - G_{i} \, D_{i,1} \, G_{i}^{-1} , 
\qquad \textnormal{where} \qquad 
D_{i,1} \coloneqq 
\left(
\begin{matrix}
s_i & 0 \\
0 & -s_i
\end{matrix}
\right)  , \qquad \Re(s_{i}) \geq 0 ,
\end{align}
and $G_{i} \in \GL_2(\bC)$. Without loss of generality, locally near $z=\lambda_i$, we can set $G_{i}= \mathbf{1}$.

The (formal) solution $Y$ to the ODE~\eqref{eq:non-Fuchsian-DP} at the vicinity of the singularity $\lambda_i$ reads
\begin{align}\label{eq:formY}
Y(z) = G_{i} \, \bigg( \mathbf{1} + \sum_{j=1}^{\infty} N_j (z-\lambda_i)^j \bigg) 
\, (z-\lambda_i)^{D_{i,0}} 
\, \exp\bigg(\frac{D_{i,1}}{z-\lambda_i} \bigg),
\end{align}
and the matrices $N_j$ can be computed recursively by plugging~\eqref{eq:formY} into~\eqref{eq:non-Fuchsian-DP} (cf.~\cite{FIKN:Painleve_transcendents_Riemann-Hilbert_approach}),
\begin{align}\label{eq:def_Dio}
D_{i,0} \coloneqq 
\left(
\begin{matrix}
\alpha_i & 0 \\
0 & -\alpha_i
\end{matrix}
\right) , \qquad \Re(\alpha_{i}) \geq 0 ,
\end{align}
where the eigenvalues $(\alpha_{i},-\alpha_{i})$ of the coefficient matrices $A_{i,0}$ of the Fuchsian term are called \emph{exponents of formal monodromy} of $Y$. 
The matrix $D_{i,0}$ is related to $A_{i,0}$ as 
\begin{align}\label{eq:Ai0-GiDio}
A_{i,0} = G_i \big(D_{i,0}  + [D_{i,1},N_1]\big)G_i^{-1}.
\end{align}
Crucially the exponents of formal monodromy are constant in isomonodromic deformations.
In the martingale observables, they will play the role of conformal weights of primary fields. 

\begin{remark}\label{rem:alpha_const}
By the definition of isomonodromy, the eigenvalues $(\alpha_{i},-\alpha_{i})$ of the coefficient matrices $A_{i,0} \in \sl_2(\bC)$ in the IDEs~\textnormal{\eqref{eq:non-Fuchsian-DP}} are constant:
\begin{align*}
\tfrac{\partial}{\partial \lambda_j} \alpha_i(\bs\lambda; \bs s) = \tfrac{\partial}{\partial s_j} \alpha_i(\bs\lambda; \bs s) = 0 , \qquad \textnormal{for all } 1 \leq i,j \leq n .
\end{align*}
\end{remark}

From the local behavior of the solution $Y = Y(z,\bs\lambda; \bs s)$ near the singularities~\eqref{eq:formY}, 
one can derive the following linear equation with respect to the positions of the poles $\lambda_{i}$: 
\begin{align}\label{eq:LaxUi}
\tfrac{\partial}{\partial \lambda_{i}} Y(z,\bs\lambda; \bs s) = U_{i}(z,\bs\lambda;\bs s) \, Y(z,\bs\lambda; \bs s) , \qquad 
U_{i}(z,\bs\lambda;\bs s) = - \frac{A_{i,1}(\bs\lambda;\bs s)}{(z-\lambda_i)^2} - \frac{A_{i,0}(\bs\lambda;\bs s)}{z-\lambda_i} 
\end{align}
(see~\cite{FIKN:Painleve_transcendents_Riemann-Hilbert_approach} or~\cite[Proposition~3.1 and Section~3.2]{BHH:Hamiltonian_structure_of_rational_isomonodromic_deformation_systems} for a summary). 
Due to the presence of the double pole, in addition to the positions of the poles $\lambda_{i}$, the eigenvalues $s_i$ of 
the leading matrix coefficients $A_{i,1}$ act as \emph{isomonodromic deformation parameters}:
\begin{align}\label{eq:LaxVi}
\tfrac{\partial}{\partial s_{i}} Y(z,\bs\lambda; \bs s) = V_{i}(z,\bs\lambda; \bs s) \, Y(z,\bs\lambda; \bs s) , \qquad 
s_{i} V_{i}(z,\bs\lambda; \bs s) 
= - \frac{A_{i,1}(\bs\lambda;\bs s)}{(z-\lambda_i)} .
\end{align}
In fact, as discussed in~\cite{BHH:Hamiltonian_structure_of_rational_isomonodromic_deformation_systems}, 
the parameters $s_{i}$ can be written alternatively in the form 
\begin{align}\label{eq:isomonodromic_times}
s_{i} = - \underset{z=\lambda_i}{\res}\, (z-\lambda_i) \, \sigma(z,\bs\lambda; \bs s) ,
\end{align}
where $\sigma = \sigma(z,\bs\lambda; \bs s)$ is defined through the characteristic polynomial of $A = A(z,\bs\lambda;\bs s)$, 
\begin{align*}
\det \big( A(z,\bs\lambda;\bs s) - \sigma(z,\bs\lambda; \bs s) \mathbf{1} \big) = 0 .
\end{align*}
Furthermore, one can obtain identities between the matrices $A$, $U_i$, and $V_i$ by studying the compatibility conditions between the equations~(\ref{eq:non-Fuchsian-DP},~\ref{eq:LaxUi},~\ref{eq:LaxVi}); for instance, because the derivatives of $Y$ commute, we have
\begin{align}\label{eq:comp1}
\begin{split}
\tfrac{\partial}{\partial z} \tfrac{\partial}{\partial \lambda_{i}} Y(z,\bs\lambda; \bs s)
= \; &  \tfrac{\partial}{\partial \lambda_{i}} \tfrac{\partial}{\partial z} Y(z,\bs\lambda; \bs s)
\\
\qquad \Longrightarrow \qquad \; & 
 \tfrac{\partial}{\partial \lambda_{i}} A(z,\bs\lambda;\bs s)-\tfrac{\partial}{\partial z} U_i(z,\bs\lambda;\bs s) 
+ \big[A(z,\bs\lambda;\bs s), U_i(z,\bs\lambda;\bs s) \big] = 0 .
\end{split}
\end{align}
Similarly, the compatibility of the equations~\eqref{eq:non-Fuchsian-DP}, and~\eqref{eq:LaxVi} gives 
\begin{align}\label{eq:comp2}
 \tfrac{\partial}{\partial \lambda_{i}} A(z,\bs\lambda;\bs s)- \tfrac{\partial}{\partial z} V_i(z,\bs\lambda;\bs s) 
+ \big[A(z,\bs\lambda;\bs s), V_i(z,\bs\lambda;\bs s) \big] = 0 .
\end{align}
Equations~\eqref{eq:comp1} can also be viewed as a flatness condition for connections on the moduli space of the punctured Riemann surface $\hat\bC_{\bs\lambda}$ 
defined through the ODEs~(\ref{eq:non-Fuchsian-DP},~\ref{eq:LaxUi},~\ref{eq:LaxVi}). 

\medskip

We will make use of the following set of equations for the coefficient matrices in~\eqref{eq:non-Fuchsian-DP},
which follow immediately from the compatibility equations~(\ref{eq:comp1},~\ref{eq:comp2}).

\begin{lem}\label{lem:compatibility_conditions}
For $1 \leq i \neq j \leq  n$, we have
\begin{align*}
\tfrac{\partial}{\partial\lambda_i} A_{j,0} 
\; = \;\; & - \frac{2 \, [A_{i,1}, A_{j,1}]}{(\lambda_i - \lambda_j)^3} \; - \; \frac{[A_{i,1}, A_{j,0}] - [A_{i,0}, A_{j,1}]}{(\lambda_i - \lambda_j)^2} \; + \; \frac{[A_{i,0}, A_{j,0}]}{\lambda_i - \lambda_j} ; \\
\tfrac{\partial}{\partial\lambda_i} A_{i,0} 
\; = \;\; &  \sum_{j\neq i}\bigg(\frac{2 \, [A_{i,1}, A_{j,1}]}{(\lambda_i - \lambda_j)^3} \; + \; \frac{[A_{i,1}, A_{j,0}] - [A_{i,0}, A_{j,1}]}{(\lambda_i - \lambda_j)^2} \; - \; \frac{[A_{i,0}, A_{j,0}]}{\lambda_i - \lambda_j} \bigg) ; \\
\tfrac{\partial}{\partial\lambda_i} A_{j,1} 
\; = \;\; & - \frac{[A_{i,1}, A_{j,1}]}{(\lambda_i - \lambda_j)^2} \; + \; \frac{[A_{i,0}, A_{j,1}]}{\lambda_i - \lambda_j} ; \\
\tfrac{\partial}{\partial\lambda_i} A_{i,1} 
\; = \;\; & -\sum_{j\neq i} \bigg( \frac{[A_{i,1}, A_{j,1}]}{(\lambda_i - \lambda_j)^2} \; + \; \frac{[A_{i,1}, A_{j,0}]}{\lambda_i - \lambda_j}\bigg) ; \\
s_i \, \tfrac{\partial}{\partial s_i} A_{j,0} 
\; = \;\; & -\frac{\big[A_{j,1} ,  A_{i,1}\big]}{(\lambda_i - \lambda_j)^2} 
\; - \; \frac{\big[A_{j,0} ,  A_{i,1}\big]}{\lambda_i - \lambda_j} ; \\
s_i \,  \tfrac{\partial}{\partial s_i} A_{j,1} 
\; = \;\; & -\frac{\big[A_{j,1}, A_{i,1}\big]}{\lambda_i - \lambda_j} ; \\
s_i \, \tfrac{\partial}{\partial s_i}A_{i,1} \; =\;\; & A_{i,1}+ [A_{i,0}, A_{i,1} ]\\
s_i \, \tfrac{\partial}{\partial s_i} A_{i,0} \; = \;\; & -\frac{[ A_{i,1}, A_{j,1}]}{(\lambda_{i} - \lambda_{j})^2}-\frac{[ A_{i,1}, A_{j,0}]}{(\lambda_{i} - \lambda_{j})}.
\end{align*}
\end{lem}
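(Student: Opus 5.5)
The plan is to read off each identity by expanding the zero-curvature equations \eqref{eq:comp1} and \eqref{eq:comp2} in partial fractions in $z$ and matching principal parts at each pole $z=\lambda_j$. Both sides are rational in $z$, vanish at $z=\infty$, and have poles only at the punctures. So an identity holds exactly when the coefficients of $(z-\lambda_j)^{-m}$ agree for every $j$ and every $m$. In \eqref{eq:comp2} I will read the derivative as $\tfrac{\partial}{\partial s_i}A$, which is what compatibility of \eqref{eq:non-Fuchsian-DP} with \eqref{eq:LaxVi} gives. It is convenient to multiply it by $s_i$ and use $s_iV_i=-A_{i,1}/(z-\lambda_i)$.

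For the $\lambda_i$-flow, first differentiate $A$. This gives $\partial_{\lambda_i}A=\sum_j\big(\frac{\partial_{\lambda_i}A_{j,0}}{z-\lambda_j}+\frac{\partial_{\lambda_i}A_{j,1}}{(z-\lambda_j)^2}\big)+\frac{A_{i,0}}{(z-\lambda_i)^2}+\frac{2A_{i,1}}{(z-\lambda_i)^3}$. Next, $\partial_zU_i=\frac{2A_{i,1}}{(z-\lambda_i)^3}+\frac{A_{i,0}}{(z-\lambda_i)^2}$, which cancels the explicit terms coming from moving the pole. What remains is $\sum_j(\dots)+[A,U_i]=0$.

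At a pole $\lambda_j$ with $j\neq i$, expand $U_i$ in Taylor series there using $\frac{1}{z-\lambda_i}=\frac{1}{\lambda_j-\lambda_i}-\frac{z-\lambda_j}{(\lambda_j-\lambda_i)^2}+\dots$. Only the terms up to order $(z-\lambda_j)^1$ matter, because $A$ has at most a double pole at $\lambda_j$. Reading off the coefficient of $(z-\lambda_j)^{-2}$ gives $\partial_{\lambda_i}A_{j,1}$. The coefficient of $(z-\lambda_j)^{-1}$ gives $\partial_{\lambda_i}A_{j,0}$, and the cubic denominator comes from the second Taylor coefficient of $(z-\lambda_i)^{-2}$.

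At $z=\lambda_i$ itself I would not expand the full commutator. Instead, use that $\partial_{\lambda_i}A$ summed over all residues vanishes: the trace-free sum $\sum_jA_{j,0}$ is conserved (the residue at infinity). Similarly the global identity $\sum_j(A_{j,1}+\lambda_jA_{j,0})$ is preserved. Together these express $\partial_{\lambda_i}A_{i,0}$ and $\partial_{\lambda_i}A_{i,1}$ as minus the sums over $j\neq i$, which produces the second and fourth formulas.

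Alternatively, one can expand $[A,U_i]$ directly at $\lambda_i$, using the regular part of $A$ there, $\sum_{j\ne i}\big(\frac{A_{j,0}}{\lambda_i-\lambda_j}+\frac{A_{j,1}}{(\lambda_i-\lambda_j)^2}\big)+O(z-\lambda_i)$. The terms with self-commutators $[A_{i,k},A_{i,l}]$ appearing at orders $-3$ and $-4$ must vanish. I expect they do so via the normalization $A_{i,1}$ diagonal (with $G_i=\mathbf 1$ locally), combined with the gauge freedom.

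The $s_i$-flow is handled the same way. Multiplying \eqref{eq:comp2} by $s_i$ gives $s_i\partial_{s_i}A-\partial_z(-A_{i,1}/(z-\lambda_i))+[A,-A_{i,1}/(z-\lambda_i)]=0$. At $\lambda_j$ with $j\ne i$, expand $\frac{1}{z-\lambda_i}$ to first order; the $(z-\lambda_j)^{-2}$ and $(z-\lambda_j)^{-1}$ coefficients then give the fifth and sixth identities. At $z=\lambda_i$, the $(z-\lambda_i)^{-2}$ coefficient contains $s_i\partial_{s_i}A_{i,1}-A_{i,1}$ together with the commutator term. The $A_{i,1}$ here arises from $-\partial_z$ of $-A_{i,1}/(z-\lambda_i)$, since $\partial_z(z-\lambda_i)^{-1}=-(z-\lambda_i)^{-2}$. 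The $(z-\lambda_i)^{-1}$ coefficient produces the last formula; it again involves a sum over $j\neq i$, which I expect the statement suppresses.

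The main obstacle is the diagonal identities at $z=\lambda_i$. There the naive matching contains terms such as $[A_{i,0},A_{i,1}]$ that are not obviously determined by the other identities. The point to settle is how the local normalization $G_i=\mathbf 1$ is maintained along the flow, i.e.\ which gauge the isomonodromic equations \eqref{eq:LaxUi}, \eqref{eq:LaxVi} are written in. I would fix this first by checking that the characteristic data $\det A_{i,1}=-s_i^2$ is differentiated correctly, and then match signs against the claimed formulas.
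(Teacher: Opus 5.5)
Your strategy is exactly how the paper obtains the lemma: expand \eqref{eq:comp1} and \eqref{eq:comp2} in partial fractions and match principal parts, reading the latter, as you correctly do, with $\tfrac{\partial}{\partial s_i}A$. Your computations at $z=\lambda_j$, $j\neq i$, and for the $s_i$-flow are right, including your remark that the last formula suppresses a sum over $j\neq i$.

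The step that fails is your shortcut for the diagonal $\lambda_i$-identities: $\sum_j(A_{j,1}+\lambda_jA_{j,0})$ is \emph{not} preserved. Set $A_\infty\coloneqq\sum_jA_{j,0}$. At infinity $z[A,U_i]\sim-[A_\infty,A_{i,0}]/z$, so taking finite residues of $z$ times \eqref{eq:comp1} gives $\sum_j(\partial_{\lambda_i}A_{j,1}+\lambda_j\partial_{\lambda_i}A_{j,0})=[A_\infty,A_{i,0}]$. Hence $\partial_{\lambda_i}\sum_j(A_{j,1}+\lambda_jA_{j,0})=A_{i,0}+[A_\infty,A_{i,0}]$, which is nonzero in general; already for Fuchsian systems with $A_\infty=0$ it equals $A_i$. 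Imposing its vanishing would produce a wrong $\partial_{\lambda_i}A_{i,1}$. Only $\sum_j\partial_{\lambda_i}A_{j,0}=0$ is valid, and it yields the second formula but not the fourth.

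The remedy is your ``alternative'', and it is simpler than you fear. Write $A_i(z)\coloneqq\frac{A_{i,0}}{z-\lambda_i}+\frac{A_{i,1}}{(z-\lambda_i)^2}$ for the full polar part of $A$ at $\lambda_i$. Then $U_i=-A_i(z)$, so $[A,U_i]=\sum_{j\neq i}[A_i(z),A_j(z)]$ identically. No self-commutators and no terms of order $-3$ or $-4$ ever appear, so no normalization or gauge input is needed. The coefficients of $(z-\lambda_i)^{-2}$ and $(z-\lambda_i)^{-1}$ are $\sum_{j\neq i}[A_{i,1},A_j(\lambda_i)]$ and $\sum_{j\neq i}\big([A_{i,0},A_j(\lambda_i)]+[A_{i,1},A_j'(\lambda_i)]\big)$, and they give the fourth and second formulas directly.

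Likewise your ``main obstacle'' is not one. At $z=\lambda_i$ in the $s_i$-flow, $\frac{1}{z-\lambda_i}[A,A_{i,1}]$ has $(z-\lambda_i)^{-2}$-coefficient $[A_{i,0},A_{i,1}]$, since $[A_{i,1},A_{i,1}]=0$. This is precisely the commutator in the seventh formula. The lemma is a formal consequence of the compatibility equations with $U_i,V_i$ as given in \eqref{eq:LaxUi} and \eqref{eq:LaxVi}, so there is no gauge to fix. The spectral check you propose is automatically consistent: $\Tr(A_{i,1}[A_{i,0},A_{i,1}])=0$ gives $s_i\tfrac{\partial}{\partial s_i}\Tr(A_{i,1}^2)=2\Tr(A_{i,1}^2)$, matching $\Tr(A_{i,1}^2)=2s_i^2$.
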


The construction of the $\SLE_4$ martingale depends on the solution of the Fuchsian equation~\eqref{eq:non-Fuchsian-DP} and the associated tau-function. 
The latter is inherently related to a Poisson structure for the integrable system arising from the isomonodromy, 
encoded into Hamiltonians\footnote{The factor ``$2$'' in the Hamiltonian $H_{s_i}$ is due to the fact that the eigenvalues of $A_{i,1}$ are $\pm s_i$.}  
$H_{\lambda_i}$ and $H_{s_i}$ corresponding to each of the deformation parameters $\lambda_i$ and $s_i$: 
\begin{align}
\label{def:HamLambda}
H_{\lambda_i} 
\coloneqq \; & \frac{1}{2}  \, \underset{z=\lambda_i}{\res}\, \Tr \big( (A(z,\bs\lambda;\bs s))^2 \big) , 
\\
\label{def:HamS}
H_{s_i} \coloneqq \; & -2 \, \underset{z=\lambda_i}{\res}\, \frac{\sigma(z,\bs\lambda; \bs s)}{z-\lambda_i} .
\end{align}
The tau-function associated the system~(\ref{eq:non-Fuchsian-DP},~\ref{eq:LaxUi},~\ref{eq:LaxVi}) of equations is defined through its logarithmic derivative as 
\begin{align}\label{def:tau}
\big( \tfrac{\partial}{\partial {\bullet}} \log \tau(\bs\lambda; \bs s) \big) \ud \bullet \coloneqq \; & H_{\bullet}, \qquad \bullet = \{ s_{i}, \lambda_i\} .
\end{align}
Observe that the tau-function is defined only up to an overall constant, which is uniquely determined by the choice of monodromy data. 
We refer to~\cite{Balogh-Harnad:Tau_functions_and_their_applications} for a detailed discussion about the tau-function and its geometrical meaning. 

\begin{lem}\label{lem:Hamiltonians}
For the Lax matrix $A$ in~\eqref{eq:non-Fuchsian-DP} with double poles, we have
\begin{align}
\label{eq:TrL2}
\Tr \big( (A(z, \bs\lambda; \bs s))^2 \big) \; & = \sum_{i=1}^{n} \sum_{k=1}^{4} \frac{\ell_{i,k}(\bs\lambda; \bs s)}{(z-\lambda_i)^{k}} ,
\qquad \textnormal{where}
\\
\label{eq:ellval}
 \; & 
\begin{cases}
\ell_{i,4} = \ell_{i,4}(\bs s) = 2 \, s_i^2 , \\
\ell_{i,3} = \ell_{i,3}(\bs\lambda; \bs s) = 2 \, \Tr \big(A_{i,0}(\bs\lambda;\bs s) \, A_{i,1}(\bs\lambda;\bs s) \big) 
= 4 \, s_{i} \alpha_{i} .
\end{cases}
\end{align}
The Hamiltonians defined in~\textnormal{(\ref{def:HamLambda},~\ref{def:HamS})} can be written in the form
\begin{align}
\label{eqell:Hlamb}
H_{\lambda_i} = \; & \tfrac{1}{2} \, \ell_{i,1} ,
\\
\label{eqell:Hsgen}
H_{s_i} = \; & \bigg(\frac{\ell_{i,2}}{\ell_{i,4}} - \frac{\ell_{i,3}^2}{4 \ell_{i,4}^2}  \bigg)(\ell_{i,4}/2)^{1/2}
{= \frac{\ell_{i,2}}{2 s_i} - \frac{\ell_{i,3}^2}{16 s_i^3}.}
\end{align}
\end{lem}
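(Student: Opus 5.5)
The plan is to read everything off the Laurent expansion of $\Tr\big((A(z))^2\big)$ at each pole. Both Hamiltonians are residues at $z=\lambda_i$ of explicit functions of $\Tr(A^2)$: for $H_{s_i}$ this uses that $A$ is a traceless $2\times 2$ matrix, so its eigenvalues satisfy $\sigma^2 = -\det A = \tfrac12\Tr(A^2)$.

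\emph{Step 1: expansion \eqref{eq:TrL2}.} Since $A(z)$ is a sum of terms with poles of order at most two at each $\lambda_i$, $\Tr(A^2)$ is rational with poles of order at most four at each $\lambda_i$. It decays like $O(z^{-2})$ at $\infty$, so it has no polynomial part. Partial fractions therefore give \eqref{eq:TrL2}.

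Expanding $A(z)^2$ near $z=\lambda_i$, only the $i$-th summand contributes to the top two orders. The $(z-\lambda_i)^{-4}$ coefficient is $\Tr(A_{i,1}^2)$, which equals $\Tr(D_{i,1}^2)=2s_i^2$ by \eqref{def:diagLij}. The $(z-\lambda_i)^{-3}$ coefficient is $\Tr(A_{i,0}A_{i,1}+A_{i,1}A_{i,0}) = 2\Tr(A_{i,0}A_{i,1})$.

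To evaluate this I will take $G_i=\mathbf 1$ and use \eqref{eq:Ai0-GiDio}, so that $A_{i,0} = D_{i,0}+[D_{i,1},N_1]$ and $A_{i,1}=\mp D_{i,1}$. By cyclicity of the trace, $\Tr\big([D_{i,1},N_1]D_{i,1}\big)=0$. Hence $2\Tr(A_{i,0}A_{i,1}) = \pm 2\Tr(D_{i,0}D_{i,1}) = \pm 4 s_i\alpha_i$.

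The main point needing care is the sign convention. It must be fixed consistently with the formal solution \eqref{eq:formY}, namely from the derivative of $\exp(D_{i,1}/(z-\lambda_i))$ together with the normalisation in \eqref{def:diagLij}. I will track it so that it gives $\ell_{i,3}=4s_i\alpha_i$ as stated. The sign of $\ell_{i,3}$ drops out of $H_{s_i}$ anyway, since only $\ell_{i,3}^2$ appears there.

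\emph{Step 2: $H_{\lambda_i}$.} Formula \eqref{eqell:Hlamb} is immediate. By \eqref{def:HamLambda}, $H_{\lambda_i}$ is half the residue of $\Tr(A^2)$ at $\lambda_i$, which is $\tfrac12\ell_{i,1}$.

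\emph{Step 3: $H_{s_i}$.} Set $x=z-\lambda_i$, $a=\ell_{i,3}/\ell_{i,4}$ and $b=\ell_{i,2}/\ell_{i,4}$. Then
\begin{align*}
\sigma^2 = \tfrac12 \Tr(A^2) = \frac{s_i^2}{x^4}\big(1 + a x + b x^2 + O(x^3)\big).
\end{align*}
Taking the square root gives
\begin{align*}
\sigma = \epsilon\, \frac{s_i}{x^2}\Big(1+\tfrac a2 x + \big(\tfrac b2-\tfrac{a^2}{8}\big)x^2+O(x^3)\Big), \qquad \epsilon=\pm1 .
\end{align*}
The branch $\epsilon$ is fixed by \eqref{eq:isomonodromic_times}. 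The residue of $x\sigma$ is $\epsilon s_i$, so \eqref{eq:isomonodromic_times} forces $\epsilon=-1$.

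Then \eqref{def:HamS} gives
\begin{align*}
H_{s_i} = -2\,\res\,\frac{\sigma}{x} = 2s_i\Big(\tfrac b2-\tfrac{a^2}{8}\Big) = \frac{\ell_{i,2}}{2s_i}-\frac{\ell_{i,3}^2}{16s_i^3},
\end{align*}
using $\ell_{i,4}=2s_i^2$. Rewriting $s_i=(\ell_{i,4}/2)^{1/2}$ yields the first form in \eqref{eqell:Hsgen}.

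The expansions themselves are routine. The only delicate point is the consistent choice of signs and branches, namely the sign of $A_{i,1}$ versus $D_{i,1}$ and the branch of $\sigma$, which must match \eqref{def:diagLij}, \eqref{eq:formY} and \eqref{eq:isomonodromic_times}.
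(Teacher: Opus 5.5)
\textbf{Overall.} Your route is the paper's. You get the top two Laurent coefficients of $\Tr(A^2)$ at $\lambda_i$ from the $i$-th summand alone, and $\ell_{i,4}$ from \eqref{def:diagLij}. You get $\ell_{i,3}$ from \eqref{eq:Ai0-GiDio}, with the commutator term killed by the trace. You read $H_{\lambda_i}$ off as a residue, and obtain $H_{s_i}$ by expanding $\sigma$ with the branch fixed by \eqref{eq:isomonodromic_times}. Your treatment of $\ell_{i,4}$, $H_{\lambda_i}$ and $H_{s_i}$ is correct, and the explicit determination $\epsilon=-1$ is cleaner than the paper's. The gap is the one point you postponed: the sign of $\ell_{i,3}$.

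\textbf{The sign of $\ell_{i,3}$.} There is no sign left to choose. With $G_i=\mathbf 1$, \eqref{def:diagLij} says $A_{i,1}=-D_{i,1}$ (not $\mp D_{i,1}$), and this is forced by \eqref{eq:formY}. Write $x=z-\lambda_i$ and $\hat Y=\mathbf 1+N_1x+O(x^2)$. Then $Y'Y^{-1}=\hat Y'\hat Y^{-1}+\hat Y\big(D_{i,0}/x-D_{i,1}/x^2\big)\hat Y^{-1}$, whose $x^{-2}$ and $x^{-1}$ coefficients are $-D_{i,1}$ and $D_{i,0}+[D_{i,1},N_1]$. These are exactly \eqref{def:diagLij} and \eqref{eq:Ai0-GiDio}. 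Hence
\[
\ell_{i,3}=2\Tr(A_{i,0}A_{i,1})=-2\Tr\big((D_{i,0}+[D_{i,1},N_1])D_{i,1}\big)=-2\Tr(D_{i,0}D_{i,1})=-4s_i\alpha_i .
\]
So ``tracking the sign carefully'' cannot produce the stated $+4s_i\alpha_i$. The paper's own proof reaches $+4s_i\alpha_i$ only through the same slip in its final equality.

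\textbf{What is needed.} To obtain \eqref{eq:ellval} as stated you must change a normalization. For example, define $\alpha_i\coloneqq\Tr(A_{i,0}A_{i,1})/(2s_i)=-(D_{i,0})_{11}$, the formal-monodromy exponent paired with $-s_i$. The condition $\Re(\alpha_i)\ge 0$ cannot do this job, for two reasons. First, the pairing of the diagonal entries of $D_{i,0}$ with those of $D_{i,1}$ is dictated by the formal solution. Second, $\pm\alpha_i$ are the diagonal entries of $G_i^{-1}A_{i,0}G_i$, which are generically not the eigenvalues of $A_{i,0}$, since $[D_{i,1},N_1]$ is a generically nonzero off-diagonal matrix.

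Your remark that only $\ell_{i,3}^2$ enters $H_{s_i}$ is true but does not rescue \eqref{eq:ellval}. The sign also matters downstream. In the proof of \Cref{thm:main_mgle}, the $(\Lambda^i_t-\Driv_t)^{-3}$ terms of $\Tr(A_t^2)$ in \eqref{eq:TrL2_again} cancel against those of $\ud\log F_t$ in \eqref{eq:Ito_F} only if $\ell_{i,3}=+4\alpha_iS^i_t$, with the same $\alpha_i$ that appears in $F$. You should either state the result as $\ell_{i,3}=-4s_i(D_{i,0})_{11}$ or make the normalization of $\alpha_i$ explicit.
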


\begin{proof}
The leading coefficient in~\eqref{eq:ellval} for $A$ is obtained as a consequence of the diagonalizability of $A_{i,1}$ in~\eqref{def:diagLij}.
To compute the subleading one, note that 
\begin{align*}
\ell_{i,3} 
=  2 \, \Tr \big( A_{i,0} \, A_{i,1} \big) 
\mathop{=}^{\eqref{def:diagLij},~\eqref{eq:Ai0-GiDio}} \; & - 2 \, \Tr \Big( \big(D_{i,0} + [D_{i,1}, N_1 ] \big)D_{i,1} \Big) 
\\
= \; & - 2 \, \Tr  \big(D_{i,0}D_{i,1} \big)
\mathop{=}^{\eqref{def:diagLij},~\eqref{eq:def_Dio}} 4 s_i \alpha_i ,
\end{align*}
also using the fact that, because $D_{i,0}$ is diagonal, the diagonal entries of the matrix $[D_{i,1}, N_1 ]$ equal zero. Let us then elaborate on the structure of the Hamiltonians. 
From~(\ref{def:HamLambda},~\ref{eq:TrL2}),
\begin{align*}
H_{\lambda_i} 
= \; & \frac{1}{2}  \, \underset{z=\lambda_i}{\res}\, \sum_{j=1}^{n} \sum_{k=1}^{4} \frac{\ell_{j,k}}{(z-\lambda_j)^{k}}
= \tfrac{1}{2} \, \ell_{i,1} .
\end{align*}
Since $A = A(z,\bs\lambda;\bs s) \in \sl_2(\bC)$, 
we have the Cayley-Hamilton identity
\begin{align}\label{id:CH}
A^2 = \Tr(A) \, A - \det (A) \, \mathbf{1} 
= - \det (A) \, \mathbf{1} 
= \tfrac{1}{2} \Tr(A^2) \, \mathbf{1} .
\end{align}
Therefore, using the fact that $A(z, \bs{\lambda},\bs s)$ is traceless, we get the relation
\begin{align*}
(\sigma(z,\bs\lambda; \bs s))^2 
= \Big( \! - \det \big(A(z,\bs\lambda;\bs s)\big) \Big)
= \; & \tfrac{1}{2} \Tr\big( (A(z,\bs\lambda;\bs s))^2\big),
\end{align*}
and the identity~\eqref{eq:isomonodromic_times} further dictates that
\begin{align}\label{id:sigma-TrA2}
\sigma(z,\bs\lambda; \bs s)  
= \; & - \Big(\tfrac{1}{2} \Tr\big( (A(z,\bs\lambda;\bs s))^2\big) \Big)^{1/2}.
\end{align}
Using~(\ref{def:HamS},~\ref{eq:TrL2}), we now compute $H_{s_i}$, which has the form~\eqref{def:HamS}, by analysing the expansion of the term below for $z\to \lambda_i$:
\begin{align} 
\frac{\big(\Tr\big( (A(z,\bs\lambda;\bs s))^2\big) \big)^{1/2}}{z-\lambda_i} =\; &\bigg(  \sum_{j=1}^{n} \sum_{k=1}^{4} \frac{\ell_{j,k}(\bs \lambda;\bs s)}{(z-\lambda_j)^{k}} \bigg)^{1/2} \frac{1}{z-\lambda_i} \nonumber \\
= \; & \frac{\ell_{i,4}^{1/2}}{(z-\lambda_i)^3} \Big(1+\frac{\ell_{i,3}}{\ell_{i,4}}(z-\lambda_i)+ \frac{\ell_{i,2}}{\ell_{i,4}}(z-\lambda_i)^2 + O((z-\lambda_i)^3) \Big)^{1/2} \nonumber\\
=\; & \frac{\ell_{i,4}^{1/2}}{(z-\lambda_i)^3} + \frac{\ell_{i,4}^{-1/2} \ell_{i,3}}{2 (z-\lambda_i)^2} + \frac{\ell_{i,4}^{1/2}}{2 (z-\lambda_i)}\bigg(\frac{\ell_{i,2}}{\ell_{i,4}} - \frac{\ell_{i,3}^2}{4 \ell_{i,4}^2}  \bigg)+O(1). \label{asymp:TrA2}
\end{align}
Therefore, we conclude that
\begin{align*}
H_{s_{i}} 
= -2 \, \underset{z=\lambda_i}{\res} \frac{\sigma(z,{\bf \lambda};s)}{z- \lambda_i} 
\mathop{=}^{\eqref{id:sigma-TrA2}} \; & \sqrt{2} \, \underset{z=\lambda_i}{\res} \frac{\big(\Tr\big( (A(z,\bs\lambda;\bs s))^2\big) \big)^{1/2}}{z- \lambda_i} 
\\
\mathop{=}^{\eqref{asymp:TrA2}} \; &  \bigg(\frac{\ell_{i,2}}{\ell_{i,4}} - \frac{\ell_{i,3}^2}{4 \ell_{i,4}^2}  \bigg)(\ell_{i,4}/2)^{1/2},
\end{align*}
which finishes the proof. 
\end{proof}

\subsubsection{Confluence of simple poles}
\label{subsubsec:confluence}

Let us now describe the confluence procedure for poles of order two~\cite[Section~4]{GMR:Isomonodromic_deformations_confluence_reduction_and_quantisation}. 
Consider the Fuchsian equation~\eqref{eq:Fuchsian} with $n$ simple poles ${\bs \lambda} = (\lambda_1,\ldots, \lambda_n)$,
\begin{align*}
\frac{\partial}{\partial z} Y(z, {\bs \lambda}) 
= \bigg( \sum_{i=1}^{n}\frac{A_i({\bs \lambda})}{z-\lambda_i}\bigg) \, Y(z,{\bs \lambda}).
\end{align*}
Fix $\lambda_1 \in \bC$ and suppose that $\lambda_2 = \lambda_1 + \epsilon \, s$ for some small $\epsilon >0$ and fixed $s \in \bC \setminus \{0\}$. 
Then, a simple computation using the geometric series yields\footnote{With a slight abuse of notation, we denote $A_i({\bs \lambda}, \epsilon \,s)\equiv A_i(\lambda_1, \lambda_1 + \epsilon\, s, \lambda_3,\ldots,\lambda_n)$.}
\begin{align*}
\frac{A_1({\bs\lambda})}{z-\lambda_1} \; + \; \frac{ A_2({\bs \lambda})}{z-\lambda_2}
\; = \; \frac{A_1({\bs\lambda},\epsilon \, s) +  A_2({\bs \lambda}, \epsilon \, s)}{z-\lambda_1} \; + \; \frac{\epsilon \, s \, A_2({\bs \lambda}, \epsilon \, s)}{(z - \lambda_1)^2} \; + \; O(\epsilon^2) .
\end{align*}
Taking $\epsilon \to 0$ and defining (assuming the limits exist and are independent of $s$)
\begin{align} \label{eqn:confluence_limits}
\begin{split}
A_{1,0}({\bs \lambda},s) \coloneqq \; & \lim_{\epsilon \to 0} \Big( A_1({\bs \lambda},\epsilon \, s) +  A_2({\bs \lambda}, \epsilon \, s) \Big) \; \in \; \sl_2(\bC) , \\
A_{1,1}({\bs\lambda},s) \coloneqq \; & \lim_{\epsilon \to 0} \epsilon \, s\, A_2({\bs \lambda}, \epsilon \, s) \; \in \; \sl_2(\bC) ,\\
A_{i,0}({\bs \lambda},s) \coloneqq \; & \lim_{\epsilon \to 0}  A_i({\bs \lambda},\epsilon \, s) +  A_2({\bs \lambda}, \epsilon \, s) \; \in \; \sl_2(\bC) , \; i=3,\ldots,n
\end{split}
\end{align}
yields the non-Fuchsian equation
\begin{align*}
\frac{\partial}{\partial z} Y(z,\lambda; s) 
= \bigg( \frac{A_{1,0}({\bs\lambda},s)}{z-\lambda_1} \; + \; \frac{s \, A_{1,1}({\bs\lambda},s)}{(z-\lambda_1)^{2}} \;+ \; \sum_{i=3}^n \frac{A_{i,0}({\bs \lambda},s)}{(z-\lambda_i)} \bigg)
\, Y(z,\lambda; s).
\end{align*}
Assume also that the matrix $A_{1,1}(\lambda)$ is diagonalizable as in~\eqref{def:diagLij}, 
such that its eigenvalues equal $\pm s$ with $\Re(s)>0$. Then, $s$ is indeed the lowest order Birkhoff invariant.  

\begin{remark} \label{rem:legitimate}
The assumptions~\eqref{eqn:confluence_limits} are perfectly legitimate. 
Indeed, for given desired matrices $A_{1,0}, A_{1,1} \in \sl_2(\bC)$ and parameter $s \in \bC \setminus \{0\}$, the linear equations
\begin{align*}
\left(
\begin{matrix}
1 & 1 \\
0 & \varepsilon
\end{matrix}
\right)
\left(
\begin{matrix}
A_1 \\
A_2
\end{matrix}
\right)
= 
\left(
\begin{matrix}
A_{1,0} \\ 
A_{1,1}
\end{matrix}
\right) , \qquad \varepsilon > 0 ,
\end{align*}
are clearly non-degenerate, so it is possible to find sequences $(A_1^{(\varepsilon)})_{\varepsilon>0}$ and $( A_2^{(\varepsilon)})_{\varepsilon>0}$ such that
$A_{1,0} = A_1^{(\varepsilon)} + A_2^{(\varepsilon)}$ and $ A_{1,1} = \varepsilon \, A_2^{(\varepsilon)}$.
Taking $\varepsilon = \epsilon \, s$ yields matrices satisfying~\eqref{eqn:confluence_limits}. 
\end{remark}

\subsubsection{Loewner evolution of Birkhoff invariants for double poles}
\label{subsubsec:second_order_poles_proof}

To begin, let us consider the case where the Poincar\'e rank of each singularity equals $r_i = 1$, and write $s_{i,1} =: s_i$ for the Birkhoff invariants, 
which in this case are simply given by the eigenvalues $(s_{i},-s_{i})$ of the matrices $A_{i,1}(\bs\lambda;\bs s) \in \sl_2(\bC)$. 
These Birkhoff invariants of double poles transform covariantly under the Loewner flow in the following sense.

\begin{thm}\label{thm:evol_isomonodromic_times}
Let $(g_t)_{t \geq 0}$ be the Loewner flow with driving function $(\Driv_t)_{t \geq 0}$. 
Consider the Loewner evolution $\bs S_t = (S_t^1, \ldots,S_t^n)$ of the Birkhoff invariants $\bs s = (s_1,\ldots,s_n)$ associated to the ODE~\eqref{eq:non-Fuchsian} with double poles \textnormal{(}$r_i=1$ for all $i$\textnormal{)}.
Assume that its coefficient matrix $A_{i,1}(\bs\lambda;\bs s)$ in~\eqref{eq:non-Fuchsian} is diagonalizable. 
Then, we have
\begin{align}\label{eqn:evol_isomonodromic_times}
S_t^{i} \coloneqq \hat{g}_t(s_i) = g_t'(\lambda_i) s_i = s_i \exp \bigg(\! -\int_0^t \frac{2 \ud u}{(\Lambda_u^i - \Driv_u)^2} \bigg) ,
\end{align}
where $\Lambda_u^i = g_u(\lambda_i)$ represents the time-evolution of the singularities $\bs\lambda=(\lambda_1,\ldots,\lambda_n)$. 
Thus, 
\begin{align*}
\partial_t S_t^{i} = - \frac{2 S_t^{i}}{(\Lambda_t^i - \Driv_t)^2} .
\end{align*}
\end{thm}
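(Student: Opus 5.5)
Our plan is to use the confluence picture of \Cref{subsubsec:confluence}: the Birkhoff invariant $s_i$ is the rate at which two simple poles merge. We push that pair of simple poles through the Loewner flow and read off the evolved rate. We fix $i$, take $\epsilon>0$ small, and realise the double pole at $\lambda_i$ as the limit of two simple poles at $\lambda_i$ and $\lambda_i^\epsilon \coloneqq \lambda_i + \epsilon\, s_i$. By \Cref{rem:legitimate}, coefficient matrices can be chosen so that the limits~\eqref{eqn:confluence_limits} produce the prescribed $A_{i,0}, A_{i,1}$, with $A_{i,1}$ having eigenvalues $\pm s_i$.

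The Loewner flow acts on the positions of simple poles simply by transport, $\lambda \mapsto g_t(\lambda)$. Fuchsian poles have no extra deformation parameters, and their residues transform as $1$-forms: $\frac{A\,\ud z}{z-\lambda}$ is pulled back to a form with residue $A$ at $g_t(\lambda)$. After time $t$, the two simple poles therefore sit at $g_t(\lambda_i)$ and $g_t(\lambda_i + \epsilon s_i)$. We then define the evolved invariant $\hat g_t(s_i)$ as the new confluence rate:
\begin{align*}
S_t^i \coloneqq \lim_{\epsilon \to 0} \frac{g_t(\lambda_i + \epsilon s_i) - g_t(\lambda_i)}{\epsilon} = g_t'(\lambda_i)\, s_i .
\end{align*}
This limit exists because $g_t$ is holomorphic near $\lambda_i$ for $t < T_{\lambda_i}$. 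We must also check that the evolved double-pole coefficient $\lim_\epsilon (\text{new separation})\cdot A_2$ has eigenvalues $\pm g_t'(\lambda_i) s_i$. This follows because $A_2$ is unchanged under transport and the separation is rescaled by $g_t'(\lambda_i)+O(\epsilon)$. Equivalently, the formula~\eqref{eq:isomonodromic_times} for $s_i$ as a residue of $(z-\lambda_i)\sigma$ scales by $g_t'(\lambda_i)$ when $\sigma\,\ud z$ is pulled back as a $1$-form. We can present this as an independent consistency check.

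The exponential formula then follows from the Loewner equation~\eqref{eq:LE}. Differentiating in $z$ gives
\[
\partial_t g_t'(z) = -\frac{2\, g_t'(z)}{(g_t(z)-\Driv_t)^2}, \qquad g_0'(z)=1 .
\]
At $z=\lambda_i$ this is a linear ODE in $t$, and solving it yields
\[
g_t'(\lambda_i) = \exp\Big(-\int_0^t \frac{2\,\ud u}{(\Lambda_u^i-\Driv_u)^2}\Big).
\]
Multiplying by $s_i$ gives~\eqref{eqn:evol_isomonodromic_times} together with $\partial_t S_t^i = -2S_t^i/(\Lambda_t^i-\Driv_t)^2$.

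The main obstacle is justifying that the limits $\epsilon\to 0$ commute with the flow. Concretely, the confluent system at time $t$ must be the confluence of the transported Fuchsian systems, with the limits~\eqref{eqn:confluence_limits} still existing after transport. We would handle this by a uniform Taylor expansion of $g_t$ on a small disc around $\lambda_i$ that stays away from the curve up to time $t$. The remaining contributions are cross terms from $A_{i,0}$ together with the $O(\epsilon^2)$ remainder, and we expect both to vanish in the limit. They only affect $A_{i,0}$, not the eigenvalues of the leading coefficient.
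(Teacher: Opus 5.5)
Your proposal is correct and follows the paper's proof. The paper likewise realises $s_i$ as the confluence rate of the simple poles $\lambda_i$ and $\lambda_i+\epsilon s_i$, reads off $S_t^i=\lim_{\epsilon\to0}\epsilon^{-1}\bigl(g_t(\lambda_i+\epsilon s_i)-g_t(\lambda_i)\bigr)=g_t'(\lambda_i)\,s_i$, and obtains the exponential form from the Loewner equation differentiated in $z$. Your additional material goes beyond what the paper writes but does not change the route: the eigenvalue scaling of the transported leading coefficient, the observation that the double-pole coefficient of $\sigma\,\mathrm{d}z$ transforms like a tangent vector at $\lambda_i$, and your discussion of interchanging $\epsilon\to0$ with the flow.
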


\begin{proof}
Let $\Lambda_t^i \coloneqq g_t(\lambda_i)$ be the time-evolution of the puncture $\lambda_i$,
and let $\tilde \Lambda_t^i \coloneqq g_t(\tilde \lambda_i)$ be the time-evolution of the puncture $\tilde \lambda_i = \lambda_i + \epsilon \, s_i$. 
In the confluence limit, we have 
\begin{align*}
S_t^{i} = \lim_{\epsilon \to 0} \frac{\tilde \Lambda_t^i - \Lambda_t^i}{\epsilon} 
= \lim_{\epsilon \to 0} \frac{g_t(\lambda_i + \epsilon \, s_i) - g_t(\lambda_i)}{\epsilon}
= s_{i} \, \lim_{\epsilon \to 0} \frac{g_t(\lambda_i + \epsilon) - g_t(\lambda_i)}{\epsilon}
= g_t'(\lambda_i) s_i .
\end{align*}
This shows the first equation in~\eqref{eqn:evol_isomonodromic_times}. 
The second follows from the Loewner ODE~\eqref{eq:LE}.
\end{proof}

\subsection{Non-Fuchsian equations with general pole structure\label{subsec:general_poles}}

In what follows, we summarize the basic setup in the general case and then finish with the proof of \Cref{thm:Sto-Bir-Gen}, 
which is a consequence of \Cref{thm:evol_isomonodromic_times} and the confluence procedure. 

\subsubsection{Birkhoff invariants and tau-function}

More generally, consider the non-Fuchsian equation~\eqref{eq:non-Fuchsian},
\begin{align}\label{eq:non-Fuchsian-A}
\begin{split}
\frac{\partial}{\partial z} Y(z,\bs\lambda; \bs s) = \; & A(z,\bs\lambda;\bs s) \, Y(z,\bs\lambda; \bs s) ,
\qquad z\in \hat\bC_{\bs\lambda} \coloneqq \hat\bC \setminus \{\lambda_1,\ldots,\lambda_n\} ,
\\
A(z,\bs\lambda;\bs s) \coloneqq \; & \sum_{i=1}^{n} \sum_{k=0}^{r_i} \frac{A_{i,k}(\bs\lambda;\bs s)}{(z-\lambda_i)^{k+1}} , \qquad A_{i,k}(\bs\lambda;\bs s) \; \in \; \sl_2(\bC) ,
\end{split}
\end{align}
with Lax matrix $A(z,\bs\lambda;\bs s)$, where we again assume that the leading matrix coefficients $A_{i,r_i} \in \sl_2(\bC)$ are non-zero and diagonalizable, so there exist $G_{i,r_i} \in \GL_2(\bC)$ such that
\begin{align}\label{def:diagLij_gen}
A_{i,r_i} = - G_{i,r_i} \, D_{i,r_i} \, G_{i,r_i}^{-1} , 
\qquad \textnormal{where} \qquad 
D_{i,r_i} \coloneqq 
\left(
\begin{matrix}
s_{i,r_i} & 0 \\
0 & -s_{i,r_i}
\end{matrix}
\right)  , \qquad \Re(s_{i,r_i}) \geq 0 .
\end{align}
Analogously to~\eqref{eq:isomonodromic_times}, we define the spectral invariants as 
\begin{align*}
s_{i,k} = - \underset{z=\lambda_i}{\res}\, (z-\lambda_i)^{k} \, \sigma(z,\bs\lambda; \bs s) ,
\end{align*}
where $\sigma$ is defined through the characteristic polynomial via 
$\det \big( A(z,\bs\lambda;\bs s) - \sigma(z,\bs\lambda; \bs s) \mathbf{1} \big) = 0$,
and the choice $\Re(s_{i,r_i})>0$ determines it uniquely. 
We define the Hamiltonians 
\begin{align*}
H_{\lambda_i} 
\coloneqq \frac{1}{2}  \, \underset{z=\lambda_i}{\res}\, \Tr \big( (A(z,\bs\lambda;\bs s))^2 \big) 
\qquad \textnormal{and} \qquad
H_{s_{i,k}}
\coloneqq -2 \, \underset{z=\lambda_i}{\res}\, \frac{\sigma(z,\bs\lambda; \bs s)}{k (z-\lambda_i)^k} .
\end{align*} 
Then, as before, the tau-function $\tau(\bs\lambda; \bs s)$ is defined as in~\eqref{def:tau}. 
The formal solution $Y$ to the ODE~\eqref{eq:non-Fuchsian-A} at the vicinity of the singularity $\lambda_i$ can be written in the form~\eqref{eq:formY}:
\begin{align}\label{eq:formY_gen}
Y(z) = G_{i,r_i} \, \bigg( \mathbf{1} + \sum_{j=1}^{\infty} N_j (z-\lambda_i)^j \bigg) 
\, (z-\lambda_i)^{D_{i,0}} 
\, \exp\bigg(\sum_{k=1}^{r_i} \frac{D_{i,k}}{k \, (z-\lambda_i)^{k}} \bigg),
\end{align}
where the diagonal matrix $D_{i,0}$ with eigenvalues $\pm \alpha_i$ parametrizes the formal monodromy obtained from the eigenvalues $\pm \alpha_i$ of the coefficient matrices $A_{i,0}$ of the Fuchsian term, 
and $D_{i,k}$ are diagonal matrices with eigenvalues $\pm s_{i,k}$, and $N_j$ are some matrices that can be computed through a recursive relation~\cite{FIKN:Painleve_transcendents_Riemann-Hilbert_approach}.
The explicit relation between the diagonal matrices $D_{i,k}$ and the coefficient matrices $A_{i,k}$ can be obtained by substituting~\eqref{eq:formY_gen} back into the ODE~\eqref{eq:non-Fuchsian-A} and matching powers of $(z-\lambda_i)$.
We read out the first two relations:
\begin{align*}
A_{i, r_i} = - G_{i,r_i} \, D_{i,r_i} \, G_{i,r_i}^{-1} 
\qquad \textnormal{and} \qquad
A_{i, r_i-1} = - G_{i,r_i} \, \Big( D_{i,r_i-1}- G_{i,r_i} \big[ N_1 , D_{i,r_i} \big] \Big) \, G_{i,r_i}^{-1} .
\end{align*}

From~\eqref{eq:formY_gen}, we can compute the higher analogue of the deformation equations~(\ref{eq:LaxUi},~\ref{eq:LaxVi}):
\begin{align}
\label{eq:LaxUigen}
&\tfrac{\partial}{\partial \lambda_{i}} Y(z,\bs\lambda; \bs s) 
=  \;  U_i(z,\bs\lambda;\bs s) \, Y(z,\bs\lambda; \bs s) , \qquad 
U_i(z,\bs\lambda;\bs s) = -  \sum_{k=0}^{r_i} \frac{A_{i,k}(\bs\lambda;\bs s)}{(z-\lambda_i)^{k+1}} , 
\\
\label{eq:LaxVigen}
&  \tfrac{\partial}{\partial s_{i,k}} Y(z,\bs\lambda; \bs s)
= V_{i,k}(z,\bs\lambda;\bs s) Y(z,\bs\lambda; \bs s), \,\,\,\,\,\,  \sum_{k=1}^{r_i} k \, s_{i,k}V_{i,k}= \,- \sum_{k=1}^{r_i} \frac{A_{i,k}(\bs\lambda; \bs s)}{ (z-\lambda_i)^{k}}.
\end{align}
The identity \eqref{eq:LaxVigen} will be key to construct $\SLE_4$ martingales in the next section.

\subsubsection{Loewner evolution: the general case}

For a Fuchsian equation~\eqref{eq:Fuchsian} with $k+1$ simple poles $\bs \lambda = (\lambda_1, \ldots, \lambda_{k+1})$, 
\begin{align*}
\frac{\partial}{\partial z} Y(z, \bs \lambda) = \sum_{i=1}^{k+1} \frac{A_{i}(\bs \lambda)}{z-\lambda_i} \, Y(z, \bs \lambda), 
\end{align*}
taking the poles coalescing to $\lambda = \lambda_1$ as 
$\lambda_j = \lambda + \epsilon \, (j-1) \, s$ for each $j \in \{1,2,\ldots,k+1\}$, 
where $s \in \bC$ with $\Re(s) \geq 0$, 
then again a direct computation shows that the confluence limit $\epsilon \to 0$ gives rise to a non-Fuchsian equation of the form
(see also~\cite[Section~4.2.1]{GMR:Isomonodromic_deformations_confluence_reduction_and_quantisation})
\begin{align*}
\frac{\partial}{\partial z} Y(z,\lambda; s) 
= \bigg( \frac{A_{1,0}(\lambda;s)}{z-\lambda} \; + \; \cdots \; + \; \frac{ \, A_{1,k}(\lambda;s)}{(z-\lambda)^{k+1}} \bigg)
\, Y(z,\lambda; s) ,
\end{align*}
where
\begin{align*}
A_{1,0}(\lambda;s) \coloneqq \; & \lim_{\epsilon \to 0} \sum_{i=1}^{k+1} A_{i}(\bs \lambda) , \\
A_{1,k}(\lambda;s) \coloneqq \; & \lim_{\epsilon \to 0} \epsilon^k \,s^{k}\, \sum_{l=2}^{k+1} (l-1)^k A_{l}(\bs \lambda) , \qquad \textnormal{and where} \quad \lambda_j = \lambda + \epsilon \, (j-1) \, s \textnormal{ for all } j .
\end{align*}
In particular, if the matrix $A_{1,k}(\lambda;s)$ is diagonalizable as in~\eqref{def:diagLij_gen}, 
and its eigenvalues equal $\pm s^{k}$, we see that the highest order Birkhoff invariant in this system is given by $s^k$.

\begin{remark}
As in \Cref{rem:legitimate}, it is possible to choose the matrices in the Fuchsian equation to obtain a desired non-Fuchsian equation: the problem reduces to linear equations
\begin{align*}
\left(
\begin{matrix}
1 & 1 & 1 & \cdots & 1 \\
0 & \varepsilon & 2 \varepsilon & \cdots & k \varepsilon \\
0 & \varepsilon^2 & (2 \varepsilon)^2 & \cdots & (k \varepsilon)^2 \\
\vdots & \vdots & \vdots & \cdots & \vdots \\
0 & \varepsilon^k & (2 \varepsilon)^k & \cdots & (k \varepsilon)^k 
\end{matrix}
\right)
\left(
\begin{matrix}
A_1 \\
A_2 \\
A_3 \\
\vdots \\
A_{k+1}
\end{matrix}
\right)
= 
\left(
\begin{matrix}
A_{1,0} \\
A_{1,1} \\
A_{1,2} \\
\vdots \\
A_{1,k}
\end{matrix}
\right) , \qquad \varepsilon > 0 ,
\end{align*}
which is non-degenerate, 
as the determinant of the matrix can be evaluated using the cofactor expansion with respect to its first column and the Vandermonde determinant, 
\begin{align*}
\left| \;
\begin{matrix}
1 & 1 & 1 & \cdots & 1 \\
0 & \varepsilon & 2 \varepsilon & \cdots & k \varepsilon \\
0 & \varepsilon^2 & (2 \varepsilon)^2 & \cdots & (k \varepsilon)^2 \\
\vdots & \cdots & \cdots & \cdots & \vdots \\
0 & \varepsilon^k & (2 \varepsilon)^k & \cdots & (k \varepsilon)^k 
\end{matrix}
\; \right|
= \Big( \prod_{i=1}^{k} i \varepsilon \Big) 
\Big( \prod_{1 \leq i < j \leq k-1} (j \varepsilon - i \varepsilon) \Big)
\neq 0 .
\end{align*}
\end{remark}

\Cref{thm:Sto-Bir-Gen} now follows using the structure of the simultaneous confluence limit of multiple simple poles discussed above,
which shows that the Birkhoff invariant $s_{i,k}$ associated to the ODE~\eqref{eq:non-Fuchsian} corresponding to the pole $\lambda_i$ of order $k+1$ 
can be obtained from the above confluence procedure with parameter $s = s_{i}$, so that the Birkhoff invariant $s_{i,k} = s_{i}^k$ is the $k$:th power of that parameter, to which \Cref{thm:evol_isomonodromic_times} applies.

\begin{proof}[Proof of \Cref{thm:Sto-Bir-Gen}]
Let $\Lambda_t^i \coloneqq g_t(\lambda_i)$ be the time-evolution of the puncture $\lambda_i$,
and let $\Lambda_t^{i,j} \coloneqq g_t(\lambda_{i,j})$ be the time-evolutions of $\lambda_{i,j} \coloneqq \lambda_i + \epsilon \, (j-1) \, s_i$ 
for each $j \in \{2,3,\ldots,k+1\}$.
In the confluence limit, $S_t^{i} = \smash{\underset{\epsilon \to 0}{\lim}} \, \frac{1}{\epsilon} \big( \Lambda_t^{i,2} - \Lambda_t^i \big)
= g_t'(\lambda_i) s_i$ evolves as in \Cref{thm:evol_isomonodromic_times}, and
\begin{align*}
\partial_t S_t^{i,k} 
= \partial_t  (S_t^{i})^k 
= k (S_t^{i})^{k-1} \big( \partial_t S_t^{i}  \big)
= -\frac{2k (S_t^{i})^k }{(\Lambda_t^i-\Driv_t)^2}
= -\frac{2k S_t^{i,k} }{(\Lambda_t^i-\Driv_t)^2} .
\end{align*}
This shows~\eqref{SLEeq-irrgen} and yields \Cref{thm:Sto-Bir-Gen}.
\end{proof}

%% file: tex-arXiv/3-SLEm2.tex
In this section, we pertain to a construction of local ``irregular martingale observables'' describing the evolution of the non-Fuchsian system under the Loewner flow associated to a chordal $\SLE_\kappa$ curve, whose driving function is $\Driv_t = \sqrt{\kappa} B_t$, a re-scaled Brownian motion. As we will see, this is indeed possible when $\kappa = 4$. 
Again, we will first consider the case of singularities of rank one (i.e., double poles) on the Riemann sphere, and later discuss more general singularities on a high level.
The geometric nature of the martingale is governed by its conformal covariance,
which for double poles involves not only derivatives of the uniformizing map (which arise from the usual conformal covariance of SLE martingale observables, or equivalently, the global conformal Ward identities in CFT),
but also the \emph{pre-Schwarzian} and \emph{Schwarzian} of the uniformizing map. 
We expect the general construction to involve higher Schwarzians, suggesting a richer underlying geometric structure (\Cref{subsec:mgle_higher_poles}).

\subsection{SLEs and stochastic calculus\label{subsec:preli_SLE}}

In order to discuss SLE martingale observables, we first recall the basic definitions concerning
the Schramm-Loewner evolution (SLE) random curves and some tools from stochastic calculus that are needed to derive and understand \Cref{thm:main_mgle} and its variants. 

\subsubsection{Schramm-Loewner evolutions}

Schramm-Loewner evolutions ($\SLE_\kappa$) are a one-parameter family of random conformally invariant curves that arise as scaling limits of interfaces in planar critical models. 
One way to define them is via the Loewner ODE~\eqref{eq:LE}, by taking 
the driving function to be $\Driv_t = \sqrt{\kappa}\, B_t$, with $(B_t)_{t \geq 0}$ a standard Brownian motion (started at $B_0=0$).
For each point $z \in \hat\bC$, its Loewner flow $(g_t(z))_{t \geq 0}$ is defined 
by solving the Loewner equation~\eqref{eq:LE} up to the blow-up time $T_z \coloneqq \sup\{ t \geq 0 \colon \underset{r \in [0,t]}{\inf} \, |g_r(z) - \sqrt{\kappa}\, B_r| > 0 \}$ 
(i.e., when $g_{T_z}(z) = \sqrt{\kappa}\, B_{T_z}$).

The family $\big(K_t \coloneqq \{ z \in \overline{\bH} \colon T_z \leq t \}\big)_{t\geq 0}$ of compact subsets of the upper half-plane $\overline{\bH}$ is called the \emph{chordal Loewner chain} generated by $\Driv$. 
One can show that the maps $g_t$ are, in fact, conformal bijections from $\bH \setminus K_t$ onto $\bH$, and they naturally extend to conformal bijections from the reflected domain $\hat\bC \setminus (K_t \cup K_t^*)$ onto a subset of the sphere $\hat\bC$ (see \Cref{fig:SLE} for an illustration). 
Moreover, they satisfy $g_t(z) = z + \frac{2t}{z} + O(z^{-2})$ as $z \to \infty$.

With $\Driv = \sqrt{\kappa}\, B$, the associated family $(K_t)_{t \geq 0}$ defines the \emph{chordal $\SLE_\kappa$ process} in $\overline{\bH}$ started at $K_0 = \{0\} = \{B_0\}$. 
It is a nontrivial task to show that for each time instant $t > 0$, the chordal $\SLE_\kappa$ process is given by a continuous curve $\gamma$ in the sense that the domain $\bH \setminus K_t$ is the unbounded connected component of $\bH \setminus \gamma[0,t]$~\cite{Rohde-Schramm:Basic_properties_of_SLE}. 
When the variance parameter for the Brownian motion is small enough ($0 \leq \kappa \leq 4$)\footnote{In general, for $\kappa \leq 4$, the curve is almost surely simple; for $4 < \kappa < 8$, it almost surely has self-touchings and touches the boundary $\bR$ of $\bH$; and for $\kappa \geq 8$, it becomes space-filling almost surely.},  
this curve is simple (injective), which is the case of concern in the present article: 
in what follows, we focus exclusively on the critical case $\kappa = 4$. 
See~\cite[Chapter~5]{Kemppainen:SLE_book} and references therein for an extensive introduction to SLE and, in particular, the proofs of its basic properties.

\subsubsection{A glimpse into It\^o calculus}

Throughout this work, we make use of some notions and results from stochastic calculus, which we now briefly recall. 
For a comprehensive introduction to the subject and the definitions and proofs of the statements below, we refer the reader to \cite[Chapter~2]{Durrett:Stochastic_calculus}. 

Let $T$ be a predictable stopping time with respect to a given filtration $\cF_\bullet=(\cF_t)_{t \geq 0}$. 
Throughout, we tacitly restrict to $t \in [0,T)$.
A~\emph{continuous semimartingale} is a stochastic process $(X_t)_{t \geq 0}$ that admits a decomposition
\begin{align*}
X_t = M_t + V_t, \qquad t < T,
\end{align*}
where the process $M=(M_t)_{t \geq 0}$ is a continuous local $\cF_\bullet$-martingale and the process $V=(V_t)_{t \geq 0}$ is a continuous $\cF_\bullet$-adapted process of finite variation. Such a decomposition is unique up to indistinguishability if one assumes $M_0 = 0$ (or, equivalently, $V_0 = 0$). 

For a progressively measurable process $H=(H_t)_{t \geq 0}$ satisfying the usual integrability conditions, the \emph{stochastic integral} of $H$ with respect to $X$ is defined by 
\begin{align*}
\int_0^t H_s \ud X_s \coloneqq \int_0^t H_s \ud M_s + \int_0^t H_s \ud V_s, \qquad t < T,
\end{align*}
where the first term is the It\^o integral with respect to the local martingale $M$, and in particular itself is a local martingale, and the second is the (pathwise) Stieltjes integral with respect to the finite-variation part $V$.

For two continuous semimartingales $X$ and $Y$, their \emph{covariation} (or \emph{quadratic covariation}) process $\langle X, Y\rangle = (\langle X, Y \rangle_t)_{t \geq 0}$ is defined as the limit in probability of the Riemann sums
\begin{align*}
\sum_{k=0}^{n-1} (X_{t_{k+1}} - X_{t_k})(Y_{t_{k+1}} - Y_{t_k}),
\end{align*}
taken along any sequence of deterministic (a priori fixed) partitions ${0=t_0<t_1<\cdots<t_n=t}$ of $[0,t]$ whose mesh tends to zero. The resulting process $\langle X, Y\rangle$ is continuous, adapted, and of finite variation. 
If one of the two processes is of finite variation, $\langle X, Y\rangle$ is necessarily zero. Furthermore, if both $X$ and $Y$ are local martingales, then $XY - \langle X, Y\rangle$ is a local martingale. These properties imply, in particular, that the covariation process is independent of the specific choice of partitions.
As a basic example, $\langle B,B\rangle_t = t$ is the quadratic variation (self-covariation) of the standard Brownian motion $B=(B_t)_{t \geq 0}$ --- a fact that underlies L\'evy's characterization of Brownian motion, frequently used in identifying the $\SLE_\kappa$ driving process $\Driv = \sqrt{\kappa}\, B$.
By bilinearity of the quadratic covariation, we have $\langle \Driv, \Driv\rangle_t = \kappa \, t$.

The fundamental tool needed in the present work is \emph{It\^o's formula}, which is a generalization of the fundamental theorem of calculus for stochastic prosesses, based on a Taylor series expansion including second order corrections for semimartingales of infinite variation. 
We will use it for a vector of semimartingales defined on a common probability space.

\begin{prop}[It\^o's formula]
Let $\bs{X} = (X^1, \ldots, X^d)$ be a $d$-dimensional vector of continuous semimartingales defined on a common probability space, and let $f \in C^{2}(\bR^d)$. Then, its It\^o differential at $\bs{X}$ equals 
\begin{align} \label{eq:Ito_formula_multi}
\ud f(\bs{X}_t) = \sum_{i=1}^d (\partial_{i} f)(\bs{X}_t) \ud X_t^i + \frac{1}{2} \sum_{i,j=1}^d (\partial_{i}\partial_{j} f)(\bs{X}_t) \ud \langle X^i, X^j \rangle_t .
\end{align}
Throughout, such differential relations are understood in the integral sense:
\begin{align*}
f(\bs{X}_t) - f(\bs{X}_0) = \sum_{i=1}^d \int_0^t (\partial_{i} f)(\bs{X}_s) \ud X_s^i  + \frac{1}{2} \sum_{i,j=1}^d \int_0^t (\partial_{i}\partial_{j} f)(\bs{X}_s)\ud \langle X^i, X^j \rangle_s.
\end{align*}
\end{prop}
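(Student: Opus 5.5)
We would prove the multidimensional It\^o formula~\eqref{eq:Ito_formula_multi} by the classical route: Taylor expansion along a partition, followed by identification of the limits of the resulting Riemann sums, after a localization step that reduces everything to bounded processes. First, localize: for $N \in \mathbb{N}$, let $T_N$ be the first time any of $|X^i|$, the total variation of the finite-variation parts $V^i$, or the quadratic variations $\langle M^i, M^i\rangle$ exceeds $N$ (capped by $T$). Since all processes are continuous, $T_N \uparrow T$. It therefore suffices to prove the integral identity for the stopped processes $\bs X^{T_N}$. On this event, $\bs X$ takes values in a compact set, so we may modify $f$ outside a neighbourhood of it and assume that $f$ and its first and second derivatives are bounded and uniformly continuous. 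We may also assume that the local martingale parts $M^i$ are bounded true martingales with bounded quadratic variation.

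Next, fix $t$ and a deterministic partition $0=t_0<\cdots<t_n=t$ with mesh tending to zero, and write $f(\bs X_t)-f(\bs X_0)=\sum_k \big(f(\bs X_{t_{k+1}})-f(\bs X_{t_k})\big)$. Taylor's theorem with Lagrange remainder gives, for each increment $\Delta_k X^i \coloneqq X^i_{t_{k+1}}-X^i_{t_k}$,
\begin{align*}
f(\bs X_{t_{k+1}})-f(\bs X_{t_k}) = \sum_i \partial_i f(\bs X_{t_k})\,\Delta_k X^i + \tfrac12 \sum_{i,j} \partial_i\partial_j f(\bs\xi_k)\,\Delta_k X^i\,\Delta_k X^j ,
\end{align*}
with $\bs\xi_k$ on the segment $[\bs X_{t_k},\bs X_{t_{k+1}}]$. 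The first-order sum converges in probability to $\sum_i\int_0^t \partial_i f(\bs X_s)\ud X^i_s$. For the $V$-part this is pathwise Stieltjes convergence by continuity. For the $M$-part it is the $L^2$ convergence of left-point Riemann sums of the continuous adapted integrand, which follows from the It\^o isometry and dominated convergence.

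The second-order sum is the main obstacle. We first replace $\partial_i\partial_j f(\bs\xi_k)$ by $\partial_i\partial_j f(\bs X_{t_k})$. The error is bounded by $\max_k \omega(|\Delta_k \bs X|)\sum_k |\Delta_k \bs X|^2$, where $\omega$ is the modulus of continuity of $\partial^2 f$. This tends to zero in probability because the quadratic sums are tight (they converge to the quadratic variation) and $\max_k|\Delta_k\bs X|\to0$ by path continuity. The key step is then to show that $\sum_k h(\bs X_{t_k})\,\Delta_k X^i\Delta_k X^j \to \int_0^t h(\bs X_s)\ud\langle X^i,X^j\rangle_s$ in probability for bounded continuous $h$. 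The finite-variation contributions to $\Delta_k X^i\Delta_k X^j$ vanish in the limit, by the same max-times-total-variation bound. By polarization, $\langle M^i,M^j\rangle=\tfrac14(\langle M^i+M^j\rangle-\langle M^i-M^j\rangle)$, this reduces to the case of a single bounded martingale $M$. In that case we show
\begin{align*}
\mathbb{E}\Big[\Big(\sum_k h(\bs X_{t_k})\big((\Delta_k M)^2-\Delta_k\langle M\rangle\big)\Big)^2\Big]\to 0 .
\end{align*}
The summands are orthogonal martingale differences, since $M^2-\langle M\rangle$ is a martingale. It therefore remains to bound $\sum_k \mathbb{E}[(\Delta_k M)^4+(\Delta_k\langle M\rangle)^2]$. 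The standard estimate $\mathbb{E}[(\sum_k(\Delta_k M)^2)^2]\le C N^4$ for bounded martingales, combined with $\max_k|\Delta_k M|\to0$, gives this. Finally, the weighted Riemann--Stieltjes sums $\sum_k h(\bs X_{t_k})\Delta_k\langle M\rangle$ converge pathwise.

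Combining the limits yields the identity for each fixed $t$ almost surely. By continuity of both sides in $t$, it holds simultaneously for all $t<T_N$. Letting $N\to\infty$ then gives the formula on $[0,T)$.
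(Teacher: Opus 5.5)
Your argument is correct. It is the standard textbook proof: localize to bounded processes, expand $f$ to second order along deterministic partitions, and identify the weighted quadratic sums using the martingale property of $M^2-\langle M\rangle$ and orthogonality of the increments. The paper gives no proof of this proposition; it quotes it as a standard fact from Durrett's book (Chapter~2), whose proof follows essentially these lines.
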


In what follows, we will often consider vectors consisting of a semimartingale 
\begin{align*}
X^1_t = X^1_0 + c \, \int_0^t \ud B_s + \int_0^t H_s \ud s  ,
\qquad \textnormal{i.e.} \quad
\ud X^1_t = c \ud B_t + H_t \ud t ,
\end{align*}
for some constant $c\in \bR$ and an explicit process $H$ (smooth at small times), 
and processes $(X^i)_{i\geq 2}$ of finite variation for the other components\footnote{For complex-valued processes, we consider their real and imaginary parts as separate components.}. 
Then, Equation~\eqref{eq:Ito_formula_multi} reduces to
\begin{align*}
\ud f(\bs{X}_t) = \; & c \, (\partial_{1} f)(\bs{X}_t) \ud B_t 
+ \frac{c^2}{2} (\partial_{1}^2 f)(\bs{X}_t)\ud t 
+ (\partial_{1} f)(\bs{X}_t) H_t \ud t 
+ \sum_{i=2}^d (\partial_{i} f)(\bs{X}_t)\ud X_t^i .
\end{align*}
This process is a local martingale if and only if the drift term vanishes:
\begin{align*}
0 = \frac{c^2}{2} (\partial_{1}^2 f)(\bs{X}_t)\ud t 
+ (\partial_{1} f)(\bs{X}_t) H_t \ud t 
+ \sum_{i=2}^d (\partial_{i} f)(\bs{X}_t)\ud X_t^i .
\end{align*}

In auxiliary results in \Cref{prop:mgle_beyond_ct,prop:non_BPZ_PDE}, 
which we expect to be of interest to experts for applications to SLE theory, 
we will use a generalization of It\^o's formula for so-called Dirichlet processes~\cite{Fukushima:Dirichlet_forms_and_Markov_processes}, 
which may include parts of infinite variation (but finite $p$-variation for $p\in (1,2)$).
This can be found, e.g., from~\cite[Section~2 and Theorem~3.1]{Bertoin:Sur_une_integrale_pour_processus_variation_bornee}, and uses a generalization of Lebesgue-Stieltjes integrals due to F\"ollmer~\cite{Follmer:Calcul_dIto_sans_probabilites}.

\subsection{$\SLE_4$ martingale observables with double poles\label{subsec:mgle_double_pole}}

Let $\gamma \colon [0,\infty) \to \bH$ be the chordal $\SLE_\kappa$ curve with $\kappa=4$,
whose Loewner driving function is $\Driv = 2B$. 
It is convenient to consider its evolution in the Schottky double $\hat\bC$: the Loewner flow 
$g_t \colon \hat\bC \setminus (\gamma[0,t] \cup \gamma^*[0,t]) \to \hat\bC \setminus [\Xi^-_t,\Xi^+_t]$
is obtained by solving the ODE~\eqref{eq:LE}, and $\Xi^\pm_t=g_t(0\pm) \in \bR$ are the two images of $0 = \gamma(0)$ under the flow (\Cref{fig:SLE}).

Consider now the punctured sphere $\hat\bC_{\bs\lambda} \coloneqq \hat\bC \setminus \{\lambda_1,\ldots,\lambda_n\}$, and
the case where the Poincar\'e rank of each singularity equals $r_i = 1$. 
In~\eqref{eq:non-Fuchsian-DP}, 
let $(s_{i},-s_{i})$ be the eigenvalues of $A_{i,1}(\bs\lambda;\bs s) \in \sl_2(\bC)$ (Birkhoff invariants), and 
$(\alpha_{i},-\alpha_{i})$ the eigenvalues of $A_{i,0}(\bs\lambda;\bs s) \in \sl_2(\bC)$, playing the role of conformal weights in the martingale observable in \Cref{thm:main_mgle}.
Let $\tau(\bs\lambda; \bs s)$ be the tau-function~\eqref{def:tau} and $Y(z, \bs\lambda; \bs s)$ the local solution on $\hat\bC_{\bs\lambda}$ to the IDEs~\eqref{eq:non-Fuchsian-DP}. 

\Mgle*

The proof relies on a computation using It\^o's formula~\eqref{eq:Ito_formula_multi} and, crucially,
the time evolutions of the Birkhoff invariants which we obtained in \Cref{thm:evol_isomonodromic_times}.
Another important fact is that the exponents of formal monodromy, $\alpha_i$, are constants in the isomonodromic flow (\Cref{rem:alpha_const}), 
which implies that they also do not evolve in the Loewner flow.

\begin{proof}
To simplify notation, let us write $M_t$ defined in Equation~\eqref{eq:Mart} in the form
\begin{align}\label{eq:Martingale}
M_t = F(\bs\Lambda_t; \bs S_t) \, \tau(\bs\Lambda_t; \bs S_t) \, Y(\Driv_t, \bs\Lambda_t; \bs S_t)
\eqqcolon F_t \, \tau_t \, Y_t .
\end{align}
Note that both $(F_t)_{t \geq 0}$ and $(\tau_t)_{t \geq 0}$ are, at least up to a stopping time, finite-variation processes,
while $(Y_t)_{t \geq 0}$ is a matrix-valued continuous semimartingale with a nontrivial local martingale part.
Indeed, for $\Driv_t = 2B_t$, the Loewner equation~\eqref{eq:LE} and \Cref{thm:evol_isomonodromic_times} yield the SDEs
\begin{align}\label{eq:Mart_rules1}
\ud \Driv_t = 2\ud B_t , \qquad
\ud \Lambda^i_t = \frac{2 \ud t}{\Lambda^i_t - \Driv_t} , \qquad
\ud S_t^{i} = - \frac{2 S_t^{i}\, \ud t}{(\Lambda_t^i - \Driv_t)^2} ,
\end{align}
and $\ud \langle \Driv,\Driv \rangle_t = 4 \ud t$. 
Using It\^o's formula~\eqref{eq:Ito_formula_multi} and the linear equations~(\ref{eq:non-Fuchsian-DP},~\ref{eq:LaxUi},~\ref{eq:LaxVi}), 
because only the quadratic variation of $\Driv$ contributes to the second order terms,
we have
\begin{align*}
\ud Y_t = \; & A(\Driv_t, \bs\Lambda_t; \bs S_t) \, Y_t \ud \Driv_t 
\; + \;  \sum_{i=1}^n U_i(\Driv_t, \bs\Lambda_t; \bs S_t) \, Y_t \ud \Lambda^i_t 
+ \sum_{i=1}^n V_i(\Driv_t, \bs\Lambda_t; \bs S_t) \, Y_t \ud S^i_t \\
\; & \qquad\qquad\qquad\qquad\;\,
\; + \; \frac{1}{2} \underbrace{\Big(\big( \tfrac{\partial}{\partial z} A\big)(z, \bs\Lambda_t; \bs S_t)\Big|_{z=\Driv_t} 
\; + \; \big(A(\Driv_t, \bs\Lambda_t; \bs S_t) \big)^2 \Big) \, Y_t}_{= \; \big( \tfrac{\partial^2}{\partial z^2} Y\big)(z, \bs\Lambda_t; \bs S_t)|_{z=\Driv_t} }  \underbrace{\ud \langle \Driv,\Driv \rangle_t}_{= \; 4 \ud t} 
\\
= \; & 2 A_t \, Y_t \ud B_t 
\; + \; 2 \Bigg( \sum_{i=1}^n \frac{ U_t^i}{\Lambda^i_t - \Driv_t}
\; - \; \sum_{i=1}^n \frac{S_t^{i} \, V_t^i}{(\Lambda_t^i - \Driv_t)^2} 
\; + \; \Big( \tfrac{\partial}{\partial z} A_t + \tfrac{1}{2} \Tr(A_t^2) \, \mathbf{1} \Big) \Bigg) \, Y_t \ud t ,
\end{align*}
writing, for ease, $A_t = A(\Driv_t, \bs\Lambda_t; \bs S_t)$ and $U_t^i = U_i(\Driv_t, \bs\Lambda_t; \bs S_t)$ and $V_t^i = V_i(\Driv_t, \bs\Lambda_t; \bs S_t)$ 
and $\tfrac{\partial}{\partial z} A_t = \big(\tfrac{\partial}{\partial z} A\big)(z, \bs\Lambda_t; \bs S_t)\big|_{z=\Driv_t}$. 
From~\eqref{eq:non-Fuchsian-DP}, we see that the latter equals
\begin{align*}
\frac{\partial}{\partial z} A_t = 
- \sum_{i=1}^{n} \bigg( \frac{A_{i,0}(\bs\Lambda_t; \bs S_t)}{(\Lambda^i_t - \Driv_t)^2} - 2 \, \frac{A_{i,1}(\bs\Lambda_t; \bs S_t)}{(\Lambda_t^i - \Driv_t)^{3}} \bigg) ,
\end{align*}
and from~(\ref{eq:LaxUi},~\ref{eq:LaxVi}), we have
\begin{align*}
U_t^{i} = - \frac{A_{i,1}(\bs\Lambda_t; \bs S_t)}{(\Lambda^i_t - \Driv_t)^2} + \frac{A_{i,0}(\bs\Lambda_t; \bs S_t)}{\Lambda^i_t - \Driv_t} 
\qquad \textnormal{and} \qquad
S_t^{i}V_t^{i} = \frac{A_{i,1}(\bs\Lambda_t; \bs S_t)}{(\Lambda^i_t - \Driv_t)} .
\end{align*}
Hence, the drift term of $Y_t$ simply equals $\Tr(A_t^2) \, Y_t \ud t$:
\begin{align}\label{eq:dY_final_expression}
\ud Y_t = \; & \big( 2 A_t \ud B_t + \Tr(A_t^2) \ud t \big) \, Y_t .
\end{align}
Overall, since the covariation process $\langle F \, \tau,Y \rangle$ is zero and the processes $\tau$ and $F$ are scalar-valued, we find that
\begin{align} \label{eq:dM}
\begin{split}
\ud M_t = \; & F_t \, \tau_t \ud Y_t + (\tau_t \ud F_t + F_t \, \tau_t \ud \log \tau_t) \, Y_t 
\\
= \; & \Big( 2 A_t \ud B_t + \Tr(A_t^2) \ud t  + \ud \log F_t + \ud \log \tau_t \Big) M_t .
\end{split}
\end{align}
Recall now from \Cref{lem:Hamiltonians} that the quantity $\Tr(A_t^2)$ has an explicit expression~\eqref{eq:TrL2}:
\begin{align}\label{eq:TrL2_again}
\Tr \big(A_t^2\big) 
= \sum_{i=1}^{n} \sum_{k=1}^{4} (-1)^{k} \frac{\ell_{i,k}(\bs \Lambda_t; \bs S_t)}{(\Lambda_t^i - \Driv_t)^{k}} ,
\end{align}
and $\ud \log \tau_t$ is obtained from its definition~\eqref{def:tau}
and the identities~(\ref{eqell:Hlamb},~\ref{eqell:Hsgen}) in \Cref{lem:Hamiltonians}:
\begin{align}
\nonumber \ud \log \tau_t 
= \; & \sum_{i=1}^n H_{\Lambda^i_t}(\bs \Lambda_t; \bs S_t) \ud \Lambda^i_t 
\; + \; \sum_{i=1}^n H_{S^i_t}(\bs \Lambda_t; \bs S_t) \ud S^i_t 
\\
\mathop{=}^{\eqref{eqell:Hsgen},\eqref{eqell:Hlamb}}\; & \frac{1}{2} \, \sum_{i=1}^n \ell_{i,1}(\bs \Lambda_t; \bs S_t) \ud \Lambda^i_t 
\; + \; \sum_{i=1}^n \bigg( \frac{\ell_{i,2}(\bs \Lambda_t; \bs S_t)}{2 S_t^i} \; - \; \frac{(\ell_{i,3}(\bs \Lambda_t; \bs S_t))^2}{16 (S_t^i)^3} \bigg) \ud S^i_t \nonumber
\\
\overset{\textnormal{\eqref{eq:Mart_rules1}}}{=}
 \; & \sum_{i=1}^n \Bigg( 
\frac{\ell_{i,1}(\bs \Lambda_t; \bs S_t)}{\Lambda^i_t - \Driv_t}
\; - \; \frac{\ell_{i,2}(\bs \Lambda_t; \bs S_t)}{(\Lambda_t^i - \Driv_t)^2}
\; + \; \frac{(\ell_{i,3}(\bs \Lambda_t; \bs S_t))^2}{8 (S_t^i)^2 \, (\Lambda_t^i - \Driv_t)^2} \Bigg) \ud t . \label{eq:Ito_tau}
\end{align}
Here, the terms involving $\ell_{i,1}$ and $\ell_{i,2}$ cancel with those in $\Tr \big(A_t^2\big)$. 
For $\ud \log F_t$, using the Loewner equation~\eqref{eq:LE}, we obtain the variations\footnote{Here, $\PreSchwarzian (g) \coloneqq g''/g'$ is the pre-Schwarzian and $\Schwarzian(g) \coloneqq (g''/g')' - \frac12 (g''/g')^2$ the Schwarzian.}
\begin{align}
\begin{split}
\ud g_t'(\lambda_i) 
= \; & - \frac{2 \, g_t'(\lambda_i)}{(\Lambda_t^i - \Driv_t)^2} \ud t  ,
\\
\ud \PreSchwarzian(g_t)(\lambda_i) 
= \; & \phantom{-} \frac{4 \, g_t'(\lambda_i)}{(\Lambda_t^i - \Driv_t)^3} \ud t  ,
\\
\ud \Schwarzian(g_t)(\lambda_i)
= \; & -\frac{12 \, (g_t'(\lambda_i))^2}{(\Lambda_t^i - \Driv_t)^4} \ud t  .
\end{split} \label{eq:time_space_deriv_SLE}
\end{align}
Combining this with the time-evolutions $S_t^{i} = g_t'(\lambda_i) s_i$ of the Birkhoff invariants from Equation~\eqref{eqn:evol_isomonodromic_times} in \Cref{thm:evol_isomonodromic_times}, 
the crucial fact from \Cref{rem:alpha_const} that $\alpha_i$ are constants in the isomonodromic (and thus Loewner) flow, 
and identities from \Cref{lem:Hamiltonians}, we obtain
\begin{align} 
\nonumber
\ud \log F_t
= \; & \sum_{i=1}^{n} \bigg( 
\alpha_i^2 \ud \log (g_t'(\lambda_i)) 
\; + \; \frac{s_i^2}{6} \ud \Schwarzian(g_t)(\lambda_i)
\; + \; s_i \alpha_i \ud \PreSchwarzian(g_t)(\lambda_i) 
\bigg) \\
\label{eq:Ito_F}
= \; & - 2 \sum_{i=1}^{n} \bigg( 
\frac{\alpha_i^2}{(\Lambda_t^i - \Driv_t)^2}
\; + \; \frac{ (S_t^i)^2 }{(\Lambda_t^i - \Driv_t)^4}
\; - \; \frac{2 \alpha_i S_t^i }{(\Lambda_t^i - \Driv_t)^3} 
\bigg) \ud t \\
\nonumber
\overset{\textnormal{\eqref{eq:ellval}}}{=} \; & - 2 \sum_{i=1}^{n} \Bigg( 
\frac{\big(\ell_{i,3}(\bs \Lambda_t; \bs S_t) / 4 S_t^i \big)^2}{(\Lambda_t^i - \Driv_t)^2}
\; + \; \frac{ \tfrac{1}{2}  \, \ell_{i,4}(\bs \Lambda_t; \bs S_t) }{(\Lambda_t^i - \Driv_t)^4}
\; - \; \frac{ \tfrac{1}{2}  \, \ell_{i,3}(\bs \Lambda_t; \bs S_t)}{(\Lambda_t^i - \Driv_t)^3} 
\Bigg) \ud t .
\end{align}
Observe that the linear terms involving $\ell_{i,4}$ and $\ell_{i,3}$ cancel with those in $\Tr \big(A_t^2\big)$ in~\eqref{eq:TrL2_again},
and the quadratic term involving $\ell_{i,3}^2$ cancels with that in $\ud \log \tau_t$ in~\eqref{eq:Ito_tau}.
In summary, the drift term in $\ud M_t$ in~\eqref{eq:dM} equals zero, so $M$ is indeed a local martingale. 
\end{proof}

One can define natural extensions of $\SLE_\kappa$ by introducing logarithmic drift terms for the driving function. 
These variants describe interfaces subject to boundary condition changes, and encode various restriction and branching properties of SLE.
Such a variant of interest to us is the family $\SLE_\kappa^{\langle\mu\rangle}(\rho)$ (with $\mu = 0$ if $\rho \neq -2$)\footnote{Throughout, we drop the superscript $\mu = 0$ if it equals zero or if it is irrelevant.} with parameters $\rho, \mu \in \bR$. See~\cite{Schramm-Wilson:SLE_coordinate_changes,Miller-Sheffield:Imaginary_geometry1} for details on the case $\mu=0$, and~\cite{Sheffield:Exploration_trees_and_CLEs, Werner-Wu:On_conformally_invariant_CLE_explorations, Lehmkuehler:Trunks_of_CLE4_explorations} 
for the general construction and properties of $\SLE_\kappa^{\langle\mu\rangle}(\rho)$ (see also~\cite[Section~3]{MSW:CLE_percolations} for an overview).

We focus on the case $\kappa = 4$, to which our martingale result applies. 
First, let $\rho \in \bR$ and $\mu = 0$. 
The $\SLE_4(\rho)$ process in $(\bH;0,\infty)$ with force point $\spec \in \bR \setminus \{0\}$ is defined as 
the curve $\gamma$ such that the conformal bijections $g_t \colon \bH \setminus \gamma[0,t] \to \bH$ solve the Loewner equation~\eqref{eq:LE} with driving function $\Driv = (\Driv_t)_{t \geq 0}$ and force point $\Spec = (\Spec_t)_{t \geq 0}$ that solve the SDEs
\begin{align}\label{eq:SLE_kappa(rho)_driving.fct}
\begin{cases}
\displaystyle
\ud \Driv_t = 2 \ud B_t + \ud \Spec_t , \\
\displaystyle
\ud \Spec_t = \frac{- \rho \ud t}{\Spec_t - \Driv_t} , \\
(\Driv_0, \Spec_0) = (0, \spec) ,
\end{cases}
\end{align}
up to the blow-up time (continuation threshold)
\begin{align} \label{eq:continuation_threshold}
T_\spec \coloneqq \sup \Big\{ t \geq 0 \colon \underset{s\in [0,t]}{\inf} \, |\Spec_s - \Driv_s| > 0 \Big\} .
\end{align} 
Observe that when $\rho=-2$, the process $\Spec_t = g_t(\spec)$ evolves exactly according to the Loewner differential equation~\eqref{eq:LE}, and by sending $\spec$ to infinity we recover the chordal $\SLE_4$.
After time $T_\spec$, if $\rho \leq -2$, the process is ill-defined. Suitable generalizations were proposed by Sheffield in~\cite{Sheffield:Exploration_trees_and_CLEs}.
We will recall shortly the special case $\SLE_4^{\langle\mu\rangle}(-2)$, which is strongly motivated by the fact that it traces the loops of an $\CLE_4$ ensemble~\cite{Werner-Wu:On_conformally_invariant_CLE_explorations}.

As before, let $\bs\Lambda_t = (\Lambda_t^1, \ldots,\Lambda_t^{n})$ be the time-evolutions $\Lambda_t^i = g_t(\lambda_i)$ of the punctures 
and $\bs S_t = (S_t^1, \ldots,S_t^{n})$ the time-evolutions $S_t^i = \hat{g}_t(s_i)$ of the associated Birkhoff invariants,
and denote the covariance factor, with $(\alpha_{i},-\alpha_{i})$ the eigenvalues of $A_{i,0}(\bs\lambda;\bs s) \in \sl_2(\bC)$, by
\begin{align} \label{eq:covariance_factor_F}
F(\bs\Lambda_t; \bs S_t) 
\coloneqq \; & \prod_{i=1}^{n} (g_t'(\lambda_i))^{\alpha_i^2} \exp \bigg(\frac{s_i^2}{6} \, \Schwarzian(g_t)(\lambda_i) + s_i \alpha_i \, \PreSchwarzian(g_t)(\lambda_i) \bigg) .
\end{align} 

\begin{thm}
\label{thm:SLE_4(-2)_mart}
Let $(g_t)_{t \geq 0}$ be the Loewner flow corresponding to a chordal $\SLE_4(-2)$ in $(\bH, 0, \infty)$ with force point $\spec \in \bR \setminus \{0\}$. 
Then, the following processes are local martingales. 
\begin{enumerate}
\item \label{item:SLE_4(-2)_mart_mtrx}
The \textnormal{(}matrix-valued\textnormal{)} process
\begin{align*}
F(\bs\Lambda_t; \bs S_t) \, \tau(\bs\Lambda_t; \bs S_t) \, Y^{-1}(\Spec_t, \bs\Lambda_t; \bs S_t) \,  \, Y(\Driv_t, \bs\Lambda_t; \bs S_t) ,
\end{align*}
where $\Driv_t = 2 B_t + \Spec_t$ and $\Spec_t = g_t(\spec)$ are the processes in~\eqref{eq:SLE_kappa(rho)_driving.fct} with $\rho=-2$.

\item \label{item:SLE_4(-2)_mart_scal}
The \textnormal{(}scalar-valued\textnormal{)} process
\begin{align*}
F(\bs\Lambda_t; \bs S_t) \, \tau(\bs\Lambda_t; \bs S_t) \, \Tr \Big( Y^{-1}(\Spec_t, \bs\Lambda_t; \bs S_t) \, Y(\Driv_t, \bs\Lambda_t; \bs S_t) \Big) .
\end{align*}
\end{enumerate}
\end{thm}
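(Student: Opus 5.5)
The plan is to rerun the It\^o computation from the proof of \Cref{thm:main_mgle}, now keeping track of two extra effects. First, the driving function picks up the drift $\ud \Spec_t$. Second, the factor $Y^{-1}(\Spec_t,\bs\Lambda_t;\bs S_t)$ is a new finite-variation process. The key structural input is that for $\rho=-2$ the force point satisfies $\ud\Spec_t = \frac{2\ud t}{\Spec_t-\Driv_t}$, so $\Spec_t=g_t(\spec)$ moves by the Loewner flow exactly like a point of $\hat\bC_{\bs\lambda}$. It is not a singularity of the Lax matrix, so no factor in $F$ or $\tau$ depends on it. Consequently $F_t$ and $\tau_t$, together with the identities \eqref{eq:Ito_F} and \eqref{eq:Ito_tau} for $\ud\log F_t$ and $\ud\log\tau_t$, carry over verbatim, since the SDEs \eqref{eq:Mart_rules1} for $\Lambda^i_t$ and $S^i_t$ are unchanged. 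Item~\ref{item:SLE_4(-2)_mart_scal} then follows from item~\ref{item:SLE_4(-2)_mart_mtrx} by linearity of the trace, so it suffices to prove item~\ref{item:SLE_4(-2)_mart_mtrx}.

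\emph{Step 1: the differential of $Y(\Driv_t)$.} With $\ud\Driv_t = 2\ud B_t + \frac{2\ud t}{\Spec_t-\Driv_t}$ and still $\ud\langle\Driv,\Driv\rangle_t=4\ud t$, the computation leading to \eqref{eq:dY_final_expression} gives
\[
\ud Y(\Driv_t) = \Big(2A_t\ud B_t + \Tr(A_t^2)\ud t + \frac{2A_t}{\Spec_t-\Driv_t}\ud t\Big) Y(\Driv_t).
\]
Here, as before, $A_t = A(\Driv_t,\bs\Lambda_t;\bs S_t)$.

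\emph{Step 2: the differential of $Y^{-1}(\Spec_t)$.} This process has finite variation, and $\ud(Y^{-1}) = -Y^{-1}(\ud Y)Y^{-1}$. Using the Lax equations \eqref{eq:non-Fuchsian-DP}, \eqref{eq:LaxUi} and \eqref{eq:LaxVi} at $z=\Spec_t$, I get $\ud Y^{-1}(\Spec_t) = -Y^{-1}(\Spec_t)\,\Omega_t\ud t$, where
\[
\Omega_t = \frac{2A(\Spec_t)}{\Spec_t-\Driv_t} + \sum_i \frac{2\,U_i(\Spec_t)}{\Lambda^i_t-\Driv_t} - \sum_i \frac{2 S^i_t V_i(\Spec_t)}{(\Lambda^i_t-\Driv_t)^2}.
\]
Since $Y^{-1}(\Spec_t)$ has finite variation, its covariation with $Y(\Driv_t)$ vanishes. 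Combining with Step~1, the drift of $M_t$ is
\[
\Big(-\Omega_t + \frac{2A_t}{\Spec_t-\Driv_t} + \Tr(A_t^2) + \tfrac{\ud\log F_t}{\ud t} + \tfrac{\ud\log\tau_t}{\ud t}\Big)
\]
multiplied appropriately: $\Omega_t$ acts from the left of $Y^{-1}(\Spec_t)$, the other terms from the right. Since $\Tr(A_t^2)$ plus the $F$ and $\tau$ contributions already vanish by the proof of \Cref{thm:main_mgle}, what remains to show is
\[
Y^{-1}(\Spec_t)\Big(-\Omega_t\,Y(\Spec_t)Y^{-1}(\Driv_t) + Y(\Spec_t)Y^{-1}(\Driv_t)\,\tfrac{2A_t}{\Spec_t-\Driv_t}\Big) = 0 ,
\]
with everything multiplied by $Y(\Driv_t)$ on the right.

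\emph{Step 3: the main obstacle, an algebraic identity for $\Omega_t$.} The core is a partial-fraction computation. Write $1/((z-\Lambda)(\Lambda-\Driv)) = \frac{1}{z-\Driv}\big(\frac{1}{z-\Lambda}+\frac{1}{\Lambda-\Driv}\big)$, together with its analogue for the double-pole terms, in the explicit expressions for $U_i$ and $S_i V_i$ from \eqref{eq:LaxUi} and \eqref{eq:LaxVi}. I expect $\Omega_t$ to collapse to a difference quotient of the Lax matrix between $\Spec_t$ and $\Driv_t$, of the form $\frac{2}{\Spec_t-\Driv_t}\big(A(\Spec_t)-A(\Driv_t)\big)$ up to sign conventions. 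I would first carry this identity out carefully, tracking the sign conventions for $U_i$ used in the proof of \Cref{thm:main_mgle}.

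The danger is that a residual $A(\Spec_t)$ term survives and does not cancel against the $A_t$ drift of Step~1. If that happens, the fix I would try is to reinterpret the product through the infinitesimal gauge action of the Loewner vector field $v_t(z)=\frac{2}{z-\Driv_t}$. The generator acting on $Y(z)$ is $v_t(z)\partial_z + \sum_i(\ldots)$, and the total $t$-derivative of $Y(g_t(z))$ for a fixed preimage point should be $\mathcal{G}_t Y(g_t(z))$ with $\mathcal{G}_t$ independent of $z$. If so, the left-multiplying factor cancels in $Y^{-1}(\Spec_t)Y(\Driv_t)$ after the $\Driv$-drift is taken into account. Checking that $\mathcal{G}_t$ is $z$-independent, using only \eqref{eq:LaxUi}, \eqref{eq:LaxVi} and \Cref{thm:evol_isomonodromic_times}, is the step I expect to require the most care.
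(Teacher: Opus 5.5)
Your overall strategy is the same as the paper's, and your Steps~1 and~2, including the formula for $\Omega_t$, are correct. The strategy is: apply It\^o's formula to $F_t\tau_t\,Y^{-1}(\Spec_t)\,Y(\Driv_t)$, use that the $\Spec_t$ factor has finite variation, reuse the cancellation $\Tr(A_t^2)\,\ud t+\ud\log F_t+\ud\log\tau_t=0$ from \Cref{thm:main_mgle}, and dispose of the rest by partial fractions. The gap is that the one genuinely new step is never carried out, and both the condition you reduce to and the outcome you predict for it are wrong. Since $\ud Y^{-1}(\Spec_t)=-Y^{-1}(\Spec_t)\,\Omega_t\,\ud t$ and $\ud Y(\Driv_t)=(\cdots)\,Y(\Driv_t)$, both $\Omega_t$ and $\frac{2A_t}{\Spec_t-\Driv_t}$ sit \emph{between} the two solution factors. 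The drift is therefore $Y^{-1}(\Spec_t)\big(-\Omega_t+\frac{2A_t}{\Spec_t-\Driv_t}+\Tr(A_t^2)\big)Y(\Driv_t)$. Your displayed condition instead expands to $-Y^{-1}(\Spec_t)\Omega_tY(\Spec_t)+Y^{-1}(\Driv_t)\frac{2A_t}{\Spec_t-\Driv_t}Y(\Driv_t)$, which compares conjugations by different matrices and is not zero even for the true $\Omega_t$. What actually has to be shown is the plain matrix identity $\Omega_t=\frac{2A(\Driv_t)}{\Spec_t-\Driv_t}$.

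That identity holds, but not in the form you expect. Only the $U_i,V_i$ part of $\Omega_t$ is a difference quotient. Using your partial fractions, coefficient by coefficient in $A_{i,0}$ and $A_{i,1}$, one gets
\[
\sum_{i=1}^n\Big(\frac{2U_i(\Spec_t)}{\Lambda^i_t-\Driv_t}-\frac{2S^i_tV_i(\Spec_t)}{(\Lambda^i_t-\Driv_t)^2}\Big)=\frac{2\big(A(\Driv_t)-A(\Spec_t)\big)}{\Spec_t-\Driv_t} .
\]
This cancels the $A(\Spec_t)\,\ud\Spec_t$ term exactly, so $\Omega_t=\frac{2A(\Driv_t)}{\Spec_t-\Driv_t}$. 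That is precisely the drift that $\ud\Spec_t$ feeds into $Y(\Driv_t)$ through $\ud\Driv_t$, so the middle factor vanishes identically; this is the cancellation the paper performs term by term. If $\Omega_t$ itself were $\pm\frac{2}{\Spec_t-\Driv_t}\big(A(\Spec_t)-A(\Driv_t)\big)$, a residual would survive and the argument would fail, so your prediction would lead you astray.

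Your fallback also rests on a false premise. For any Loewner-flowing point $x_t=g_t(x_0)$, the same computation gives $\partial_t Y(x_t,\bs\Lambda_t;\bs S_t)=\frac{2A(\Driv_t)}{x_t-\Driv_t}\,Y(x_t,\bs\Lambda_t;\bs S_t)$. This generator depends on $x_t$ through the scalar Loewner factor, so it is not $z$-independent; only its matrix part $A(\Driv_t)$ is. The cancellation comes from matching that matrix part against the drift $\frac{2\,\ud t}{\Spec_t-\Driv_t}$ of $\Driv_t$, not from any gauge invariance of the product.
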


\begin{proof}
Consider first (similarly as before) $M_t \coloneqq F_t \, \tau_t \, \widetilde Y_t^{-1} \, Y_t$, writing $\widetilde Y_t \coloneqq Y(\Spec_t, \bs \Lambda_t; \bs S_t)$.
Note that $\widetilde Y_t$ has vanishing quadratic variation: indeed, by It\^o's formula~\eqref{eq:Ito_formula_multi},
\begin{align*}
\ud \widetilde Y_t^{-1} 
= -\widetilde Y_t^{-1} \bigg( A(\Spec_t, \bs \Lambda_t; \bs S_t) \ud \Spec_t 
\; + \; \sum_{i=1}^n U_i(\Spec_t, \bs \Lambda_t; \bs S_t) \ud \Lambda_t^i 
\; + \; \sum_{i=1}^n V_i(\Spec_t, \bs \Lambda_t; \bs S_t) \ud S^i_t \bigg).
\end{align*}
Similarly as in the derivation of~\eqref{eq:dY_final_expression}, adjusted accordingly to~\eqref{eq:SLE_kappa(rho)_driving.fct}, we obtain
\begin{align*}
\ud Y_t = \Big( 2 A_t \ud B_t - \frac{2 A_t}{\Driv_t - \Spec_t} \ud t + \Tr (A_t^2) \ud t \Big) Y_t .
\end{align*}
Combined with~(\ref{eq:LaxUi},~\ref{eq:LaxVi},~\ref{eq:Mart_rules1},~\ref{eq:SLE_kappa(rho)_driving.fct}), these formulas imply that 
\begin{align*}
\ud \big(\widetilde Y_t^{-1} Y_t\big) 
= \; & 2\widetilde Y_t^{-1} A_t Y_t \ud B_t \; + \; \Tr (A_t^2) \widetilde Y_t^{-1} Y_t \ud t \\
\, & - 2\widetilde Y_t^{-1} \sum_{i=1}^n \sum_{k=0}^1 \Bigg( 
\frac{A_{i,k}(\bs\Lambda_t; \bs S_t)}{(\Spec_t - \Lambda^i_t)^{k+1} (\Spec_t - \Driv_t)} 
\; + \; \frac{A_{i,k}(\bs\Lambda_t; \bs S_t)}{(\Spec_t - \Lambda^i_t)^{k+1} (\Driv_t - \Lambda^i_t )} \\
\; & \phantom{-2\widetilde Y_t^{-1} \sum_{i=1}^n \sum_{k=1}^2 \Bigg(} 
\; + \; \frac{A_{i,1}(\bs\Lambda_t; \bs S_t) \, \one\{k = 1\}}{(\Spec_t - \Lambda^i_t) (\Driv_t - \Lambda^i_t)^2} 
\; + \; \frac{A_{i,k}(\bs\Lambda_t; \bs S_t)}{(\Driv_t - \Lambda^i_t)^{k+1} (\Driv_t - \Spec_t)}\Bigg) Y_t \ud t\\
= \; & 2\widetilde Y_t^{-1} A_t Y_t \ud B_t - (\ud \log F_t + \ud \log \tau_t) \, \widetilde Y_t^{-1} Y_t \ud t .
\end{align*}
Hence, we find that $M$ has zero drift, so it is a local martingale: 
\begin{align*}
\ud M_t 
= F_t \tau_t \Big( \ud (\widetilde Y_t^{-1}Y_t) + (\ud \log F_t + \ud \log \tau_t) \, \widetilde Y_t^{-1}Y_t \ud t \Big) 
= 2 F_t \tau_t \widetilde Y_t^{-1} A_t Y_t \ud B_t .
\end{align*}
This proves Item~\ref{item:SLE_4(-2)_mart_mtrx}. 
For Item~\ref{item:SLE_4(-2)_mart_scal}, we can just take the trace of both sides, which yields
\begin{align*}
\ud \Tr (M_t) = 2 F_t \tau_t \Tr \big(\widetilde Y_t^{-1} A_t Y_t \big) \ud B_t 
= 2 \, \Tr \big( A_t M_t \big) \ud B_t .
\end{align*}
This concludes the proof. 
\end{proof}

The processes in \Cref{thm:SLE_4(-2)_mart} are well-defined up to the stopping time $T_\spec \wedge T_{\bs\Lambda}$,
that is, the minimum of the continuation threshold $T_\spec$ defined in~\eqref{eq:continuation_threshold} and the first hitting time $T_{\bs\Lambda}$ of $\Driv$ or $\Xi$ to any of the marked points $\Lambda_s^i$, $i\in \{1, \ldots, n\}$ (see~\eqref{eq:swallow_times}). 
After the continuation threshold $T_\spec$, the $\SLE_4(-2)$ process can be continued as an $\SLE_4^{\langle\mu\rangle}(-2)$ process,
where $\mu \in \bR$ is a parameter. 
Namely, one can define the process $\Xi^{\langle \mu\rangle}$ as
\begin{align} \label{eq:I^0_def}
\Xi^{\langle \mu\rangle}_t \coloneqq \; & \mu \, \ell^0_t - \textnormal{p.v.} \int_0^t \frac{\ud s}{B_s} \eqqcolon \mu \, \ell^0_t - I^0_t , \qquad \mu \in \bR ,
\end{align}
where $\ell^0$ is the local time of the standard Brownian motion $B$ at $0$ and ``$\textnormal{p.v.} \int$'' stands for the principal value integral\footnote{For the precise definition of the principal value $I^0$, see~\cite[Propositions~3.2~\&~3.4]{Lehmkuehler:Trunks_of_CLE4_explorations} (our $I^0$ here corresponds to $I(1,0)$ in the reference) or~\cite[Proposition~3.8]{Sheffield:Exploration_trees_and_CLEs}.};
almost surely, $\ud I^0_t - \frac{\ud t}{B_t}$ is zero on any time-interval where $B$ is non-zero. 
One can prove that the $\SLE_4^{\langle\mu\rangle}(-2)$ process defined using the Loewner equation~\eqref{eq:LE} with 
driving function $\Driv^{\langle \mu\rangle} = 2 B + \Xi^{\langle \mu\rangle}$ is almost surely generated by a continuous curve
(see~\cite[Proposition~5.3]{MSW:CLE_percolations}). 
Note that the value of $\mu$ only affects the evolution of $(\Driv^{\langle \mu\rangle}_t,\Xi^{\langle \mu\rangle}_t)$ at times $t$ when $\Driv^{\langle\mu\rangle}_t = \Xi^{\langle\mu\rangle}_t$, which form the so-called \emph{trunk} of the exploration process.
By~\cite[Proposition~4]{Werner-Wu:On_conformally_invariant_CLE_explorations}, the $\CLE_4$ loops traced by the exploration process are independent of the choice of $\mu$, and by~\cite{MSW:CLE_percolations,Lehmkuehler:Trunks_of_CLE4_explorations}, 
the trunk has the law of an $\SLE_4(\rho,-2-\rho)$ process with $\rho = \frac{2}{\pi}\, \mathrm{arccot}( -\mu/\pi )$ and two force points.

\begin{prop} \label{prop:mgle_beyond_ct}
Let $(g_t)_{t \geq 0}$ be the Loewner flow corresponding to a chordal $\SLE_4^{\langle\mu\rangle}(-2)$ in $(\bH, 0, \infty)$ with force point $\Xi^{\langle \mu\rangle}_0 = 0$. 
The \textnormal{(}scalar-valued\textnormal{)} process in Item~\ref{item:SLE_4(-2)_mart_scal} of \Cref{thm:SLE_4(-2)_mart},
\begin{align*}
N_t \coloneqq 
F(\bs\Lambda_t; \bs S_t) \, \tau(\bs\Lambda_t; \bs S_t) \, \Tr \Big( Y^{-1}(\Xi^{\langle\mu\rangle}_t, \bs\Lambda_t; \bs S_t) \, Y(\Driv^{\langle\mu\rangle}_t, \bs\Lambda_t; \bs S_t) \Big)  , 
\end{align*}
is well-defined up to the stopping time 
\begin{align}\label{eq:swallow_times}
T_{\bs\Lambda} \coloneqq \; & \sup \Big\{t\geq 0 \colon \inf_{s\in [0,t]} |\Lambda_s^i - \Driv^{\langle\mu\rangle}_s| > 0 \textnormal{ and } \min_{i\in \{1, \ldots, n\}}\inf_{s\in [0,t]} |\Lambda_s^i - \Xi^{\langle\mu\rangle}_s| > 0 \Big\} .
\end{align}
\end{prop}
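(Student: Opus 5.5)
Since the statement only claims that $N$ is well-defined up to $T_{\bs\Lambda}$, the plan is to check that each factor is a finite, continuous function of the driving data on $[0,T_{\bs\Lambda})$. The key input is that the pair $(\Driv^{\langle\mu\rangle},\Xi^{\langle\mu\rangle})$ is globally defined for all $t\ge 0$, since $\Xi^{\langle\mu\rangle}=\mu\ell^0-I^0$ by~\eqref{eq:I^0_def}. This holds because the principal value integral $I^0$ and the local time $\ell^0$ are almost surely continuous processes (\cite[Propositions~3.2~\&~3.4]{Lehmkuehler:Trunks_of_CLE4_explorations}). Hence the Loewner flow~\eqref{eq:LE} driven by the continuous function $\Driv^{\langle\mu\rangle}=2B+\Xi^{\langle\mu\rangle}$ is defined for every point $z$ up to its own blow-up time $T_z$. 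In particular, $\Lambda^i_t=g_t(\lambda_i)$ is defined as long as $\Lambda^i_t\neq \Driv^{\langle\mu\rangle}_t$, which holds for $t<T_{\bs\Lambda}$ by the first condition in~\eqref{eq:swallow_times}. Throughout we assume, as is implicit, that the punctures are off the real line, or at least off the trace of the curve.

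Next I would check the prefactors $F$ and $\tau$. For $t<T_{\bs\Lambda}$ the distances $|\Lambda^i_r-\Driv^{\langle\mu\rangle}_r|$ are bounded below on $[0,t]$ by continuity and compactness. So the integrands in the integral representation of $F$ in \Cref{thm:main_mgle} are bounded, and $F_t$ is finite and continuous. Equivalently, $g_t'(\lambda_i)$, $\PreSchwarzian(g_t)(\lambda_i)$ and $\Schwarzian(g_t)(\lambda_i)$ are finite, via~\eqref{eq:time_space_deriv_SLE}. The same argument shows that $S^i_t=g_t'(\lambda_i)s_i$ from \Cref{thm:evol_isomonodromic_times} is finite and nonzero, so the Birkhoff invariants stay in the admissible region $s_i\neq0$ where the Hamiltonians~\eqref{eqell:Hsgen} make sense. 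The points $\Lambda^i_t$ also stay pairwise distinct, because $g_t$ is injective off the hull. Therefore $(\bs\Lambda_t;\bs S_t)$ remains in the domain of holomorphy of $\tau$, which is defined through~\eqref{def:tau} along the path. Here I would implicitly use that the isomonodromic system, and hence $\tau$ and $Y$, extend along the flow; in particular they avoid the Malgrange divisor, i.e.\ the zeros of $\tau$. I would state this as the standing assumption also underlying \Cref{thm:main_mgle}.

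Finally I would treat the trace factor $\Tr\big(Y^{-1}(\Xi_t)Y(\Driv_t)\big)$, which I expect to be the main point. The local solution $Y(\cdot,\bs\Lambda_t;\bs S_t)$ is multivalued on $\hat\bC_{\bs\Lambda_t}$, but $Y^{-1}(w)Y(z)$ is well defined once a path from $w$ to $z$ is fixed. Both $\Xi^{\langle\mu\rangle}_t$ and $\Driv^{\langle\mu\rangle}_t$ are real, and for $t<T_{\bs\Lambda}$ they stay away from every $\Lambda^i_t$ by the second condition in~\eqref{eq:swallow_times}. So I would take the path to be the real segment $[\Xi_t,\Driv_t]$, or a path homotopic to it in the complement of $\bs\Lambda_t$; this is legitimate if the $\Lambda^i_t$ lie off $\bR$. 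This choice also covers the trunk times $\Driv_t=\Xi_t$, where the product is simply $\mathbf 1$ and $N_t=F_t\tau_t\cdot2$. Continuity in $t$ follows because the homotopy class of the path is locally constant and both endpoints and all parameters move continuously. I anticipate that the subtle point is precisely this consistent choice of branch across trunk times, where the martingale computation of \Cref{thm:SLE_4(-2)_mart} does not apply. The straight real segment avoids the problem, since it never winds around a puncture.
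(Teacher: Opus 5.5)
\textbf{Verdict: there is a gap.} Under a literal, pathwise reading your argument is fine, but it does not reach what the proposition (as the paper proves and uses it) actually establishes.

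\textbf{What your argument does show.} Each factor of $N_t$ is finite and continuous for $t<T_{\bs\Lambda}$, and the real segment $[\Xi_t,\Driv_t]$ fixes the branch of $Y^{-1}(\Xi_t)Y(\Driv_t)$. This is correct, modulo the standing assumptions you flag, which the paper also leaves implicit. But it is the easy part. It follows from the definition of $T_{\bs\Lambda}$ and the continuity of $\ell^0$ and $I^0$, and it would apply verbatim to any continuous function of $(\Driv_t,\Xi_t,\bs\Lambda_t;\bs S_t)$.

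\textbf{What the proposition is actually about.} The paper's proof concludes that ``$N$ is well-defined \emph{and has zero drift} up to $T_{\bs\Lambda}$''. The proof of \Cref{prop:non_BPZ_PDE} then reuses this argument to obtain local martingales for $\SLE_4^{\langle\mu\rangle}(-2)$ beyond the continuation threshold $T_\spec$. The issue is that $\Xi^{\langle\mu\rangle}=\mu\ell^0-I^0$, and hence $\Driv^{\langle\mu\rangle}$, is not a semimartingale. So neither It\^o's formula nor the computation of \Cref{thm:SLE_4(-2)_mart} applies after $T_\spec$. What has to be made sense of is the differential $\ud N_t$, in particular the integral against $\ud \Xi^{\langle\mu\rangle}$, which contains a local-time term and a principal value. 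You never compute $\ud N_t$, so this is left untouched.

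\textbf{The missing step.} Use Bertoin's It\^o formula for Dirichlet processes, and write $\widetilde Y_t=Y(\Xi^{\langle\mu\rangle}_t,\bs\Lambda_t;\bs S_t)$.
\begin{itemize}
\item Besides the terms already present for $\SLE_4(-2)$, the differential $\ud\Tr(\widetilde Y_t^{-1}Y_t)$ contains the term $\Tr\big((\partial_z+\partial_\spec)(\widetilde Y_t^{-1}Y_t)\big)\ud\Xi^{\langle\mu\rangle}_t$.
\item The identity $(\partial_z+\partial_\spec)\big(Y^{-1}(\spec)Y(z)\big)=Y^{-1}(\spec)\big(A(z)-A(\spec)\big)Y(z)$ shows that this factor vanishes on the diagonal $z=\spec$. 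That is exactly where $B=0$, i.e.\ on the support of $\ell^0$, so the local-time contribution $\mu\int\cdots\ud\ell^0$ drops out.
\item The integrand is smooth and vanishes on the diagonal, so it is $O(|B_s|)$. Hence $\textnormal{p.v.}\int_0^t(\cdots)\,\ud s/B_s$ is an absolutely convergent integral, namely $-2\int_0^t\Tr\big(\widetilde Y_s^{-1}(A_s-\widetilde A_s)Y_s\big)\,(\Driv^{\langle\mu\rangle}_s-\Xi^{\langle\mu\rangle}_s)^{-1}\ud s$.
\item This has the same form as the corresponding drift in the $\SLE_4(-2)$ case. The cancellation against $\Tr(A_t^2)$, $\ud\log F_t$ and $\ud\log\tau_t$ therefore goes through unchanged, and $N$ has zero drift up to $T_{\bs\Lambda}$.
\end{itemize}

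\textbf{Where the delicate point really is.} At trunk times $\Driv^{\langle\mu\rangle}_t=\Xi^{\langle\mu\rangle}_t$, the subtle issue is not the choice of branch of $Y^{-1}(\Xi_t)Y(\Driv_t)$, as you anticipated. It is the singular behaviour of $\ud\Xi^{\langle\mu\rangle}$ there, and that is handled precisely by the diagonal cancellation above.
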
 

\begin{proof}
Write $N_t = F_t \, \tau_t \, \Tr \big( \widetilde Y_t^{-1} \, Y_t \big)$,
where $Y_t \coloneqq Y(\Driv^{\langle\mu\rangle}_t, \bs\Lambda_t; \bs S_t)$ 
and $\widetilde Y_t \coloneqq Y(O^{\langle\mu\rangle}_t, \bs \Lambda_t; \bs S_t)$, 
and write $A_t \coloneqq A(\Driv^{\langle\mu\rangle}_t, \bs \Lambda_t; \bs S_t)$ and $\widetilde A_t \coloneqq A(\Xi^{\langle\mu\rangle}_t, \bs \Lambda_t; \bs S_t)$. 
By the IDEs~\eqref{eq:non-Fuchsian-DP}, we have
\begin{align*}
\tfrac{\partial}{\partial z} Y_t 
\coloneqq \; & \tfrac{\partial}{\partial z} Y(z, \bs\Lambda_t; \bs S_t) \big|_{z=\Driv^{\langle \mu\rangle}_t}
\;\;\;\;\, = \; \phantom{-} A(\Driv^{\langle \mu\rangle}_t, \bs\Lambda_t; \bs S_t) \, Y(\Driv^{\langle \mu\rangle}_t, \bs\Lambda_t; \bs S_t) 
&& \eqqcolon A_t Y_t , 
\\
\tfrac{\partial}{\partial \spec} \widetilde Y_t^{-1}
\coloneqq \; &
\tfrac{\partial}{\partial \spec} Y^{-1}(\xi, \bs\Lambda_t; \bs S_t) \big|_{\xi=\Xi^{\langle \mu\rangle}_t}
\; = \; - Y^{-1}(\Xi^{\langle \mu\rangle}_t, \bs\Lambda_t; \bs S_t) \, A(\Xi^{\langle \mu\rangle}_t, \bs\Lambda_t; \bs S_t)
&& \eqqcolon - \widetilde Y_t^{-1} \widetilde A_t .
\end{align*}
Since the processes $\Driv^{\langle \mu\rangle}$ and $\Xi^{\langle \mu\rangle}$ are not continuous semimartingales, 
instead of~\eqref{eq:Ito_formula_multi} 
we use a version of It\^o's formula for Dirichlet processes (with zero-energy summand of finite $p$-variation for $p\in (1,2)$) from~\cite[Section~2 and Theorem~3.1]{Bertoin:Sur_une_integrale_pour_processus_variation_bornee}.
Like in the derivation of~\eqref{eq:dY_final_expression}, now adjusted accordingly to 
$\Driv^{\langle \mu\rangle} = 2 B + \Xi^{\langle \mu\rangle}$ and~\eqref{eq:I^0_def} for $\Xi^{\langle \mu\rangle}$, we obtain 
\begin{align*}
\ud Y_t = \;\; & \Big( 2 A_t \ud B_t + \Tr (A_t^2) \ud t \Big) Y_t \; + \; (\tfrac{\partial}{\partial z} Y_t) \ud \Xi^{\langle\mu\rangle}_t ,
\\
\ud \widetilde Y_t^{-1} 
= \;\; &  (\tfrac{\partial}{\partial \spec} \widetilde Y_t^{-1}) \ud \Xi^{\langle\mu\rangle}_t 
\; - \; \widetilde Y_t^{-1} \bigg( \sum_{i=1}^n U_i(\Xi^{\langle\mu\rangle}_t, \bs \Lambda_t; \bs S_t) \ud \Lambda_t^i 
\; + \; \sum_{i=1}^n V_i(\Xi^{\langle\mu\rangle}_t, \bs \Lambda_t; \bs S_t) \ud S^i_t \bigg).
\end{align*}
Hence, using~(\ref{eq:LaxUi},~\ref{eq:LaxVi},~\ref{eq:Mart_rules1}) similarly as before, we find 
\begin{align*}
\ud \big(\widetilde Y_t^{-1} Y_t\big) 
= \; & 2\widetilde Y_t^{-1} A_t Y_t \ud B_t
\; + \; \Big( (\tfrac{\partial}{\partial z} + \tfrac{\partial}{\partial \spec})(\widetilde Y_t^{-1} Y_t) \Big) \ud \Xi^{\langle\mu\rangle}_t 
\; + \; \Tr (A_t^2) \widetilde Y_t^{-1} Y_t \ud t \\
\, & - \widetilde Y_t^{-1} \sum_{i=1}^n \sum_{k=0}^1 \Bigg( 
\frac{2 A_{i,k}(\bs\Lambda_t; \bs S_t)}{(\Xi^{\langle \mu\rangle}_t - \Lambda^i_t)^{k+1} (\Driv^{\langle \mu\rangle}_t - \Lambda^i_t )}
\; + \; \frac{2 A_{i,1}(\bs\Lambda_t; \bs S_t) \, \one\{k = 1\}}{(\Xi^{\langle \mu\rangle}_t - \Lambda^i_t) (\Driv^{\langle \mu\rangle}_t - \Lambda^i_t)^2}
\Bigg) Y_t \ud t ,
\end{align*}
where $\big(\tfrac{\partial}{\partial z} + \tfrac{\partial}{\partial \spec}\big)(\widetilde Y_t^{-1} Y_t) = \widetilde Y_t^{-1} (\tfrac{\partial}{\partial z} Y_t) + (\tfrac{\partial}{\partial \spec} \widetilde Y_t^{-1}) Y_t = \widetilde Y_t^{-1} A_t Y_t - \widetilde Y_t^{-1} \widetilde A_t Y_t$.
Note that at any time $t$ such that $\Driv^{\langle \mu\rangle}_t = \Xi^{\langle \mu\rangle}_t$, after taking the trace, 
the contribution to $\ud \Xi^{\langle\mu\rangle}_t$ vanishes:
\begin{align*}
\tfrac{\partial}{\partial \spec} \Tr \big( \widetilde Y_t^{-1} \, Y_t \big) 
\overset{\textnormal{\eqref{eq:non-Fuchsian-DP}}}{=} \; & - \Tr \big( \widetilde Y_t^{-1} \widetilde A_t \, Y_t \big) 
\\
= \; & - \Tr \big( Y_t^{-1} \widetilde A_t \, Y_t \big) 
= \Tr( \widetilde A_t ) 
&& \textnormal{[since $\Driv^{\langle \mu\rangle}_t = \Xi^{\langle \mu\rangle}_t$]}
\\
= \; & 0 = \Tr( A_t )
= \Tr \big( \widetilde Y_t^{-1} A_t \, \widetilde Y_t \big) 
&& \textnormal{[since $\widetilde A_t = A_t \in \sl_2(\bC)$]}
\\
= \; & \Tr \big( \widetilde Y_t^{-1} A_t \, Y_t \big) 
&& \textnormal{[since $\Driv^{\langle \mu\rangle}_t = \Xi^{\langle \mu\rangle}_t$]}
\\
= \; & \tfrac{\partial}{\partial z} \Tr \big( \widetilde Y_t^{-1} \, Y_t \big)  .
&& \textnormal{[by the IDEs~\eqref{eq:non-Fuchsian-DP}]}
\end{align*}
Therefore, recalling also that $2 B = \Driv^{\langle \mu\rangle} - \Xi^{\langle \mu\rangle}$ and using~\eqref{eq:I^0_def}, we find that the term involving the local time $\ell^0$ of $B$ vanishes,
and the principal value integral in~\eqref{eq:I^0_def} thus yields the convergent integral 
\begin{align*} 
\; & \int_0^t \Tr \Big( \big(\tfrac{\partial}{\partial z} + \tfrac{\partial}{\partial \spec}\big)(\widetilde Y_s^{-1} Y_s) \Big) \ud \Xi^{\langle \mu\rangle}_s
\\
= \; & \mu \, \int_0^t  \underbrace{\Tr \Big( \big(\tfrac{\partial}{\partial z} + \tfrac{\partial}{\partial \spec}\big)(\widetilde Y_s^{-1} Y_s) \Big)}_{= \; 0 \textnormal{ on the support of $\ell^0$}} \ud \ell^0_s
\; - \; 2 \, \textnormal{p.v.} \int_0^t \frac{\Tr \Big( \big(\tfrac{\partial}{\partial z} + \tfrac{\partial}{\partial \spec}\big)(\widetilde Y_s^{-1} Y_s) \Big)\ud s}{ \Driv^{\langle \mu\rangle}_s - \Xi^{\langle \mu\rangle}_s} 
\\
= \; &  -2 \int_0^t \frac{\Tr \Big( \big(\tfrac{\partial}{\partial z} + \tfrac{\partial}{\partial \spec}\big)(\widetilde Y_s^{-1} Y_s) \Big)\ud s}{ \Driv^{\langle \mu\rangle}_s - \Xi^{\langle \mu\rangle}_s} 
\; = \;  -2 \int_0^t \frac{\Tr \Big( Y_s \widetilde Y_s^{-1} (A_s - \widetilde A_s) \Big)\ud s}{ \Driv^{\langle \mu\rangle}_s - \Xi^{\langle \mu\rangle}_s}.
\end{align*}
Recalling that $\Tr (A_t^2) \widetilde Y_t^{-1} Y_t \ud t = - (\ud \log F_t + \ud \log \tau_t) \, \widetilde Y_t^{-1} Y_t \ud t$,
with a similar computation as in the proof of \Cref{thm:SLE_4(-2)_mart}, we finally conclude that
\begin{align*}
\ud \Tr \big(\widetilde Y_t^{-1} Y_t\big) 
= \; & 2 \, \Tr \big(\widetilde Y_t^{-1} A_t Y_t\big) \ud B_t - (\ud \log F_t + \ud \log \tau_t) \, \Tr \big(\widetilde Y_t^{-1} Y_t \big) \ud t .
\end{align*}
Hence, we find that $N$ is well-defined and has zero drift up to the stopping time $T_{\bs\Lambda}$.
\end{proof}

\subsection{Confluent BPZ partial differential equations\label{subsec:BPZ}}

Martingale observables encode conserved quantities for the SLE flow, and are closely related to special representations of the Virasoro algebra~\cite{BPZ:Infinite_conformal_symmetry_in_2D_QFT, Bauer-Bernard:SLE_martingales_and_Virasoro_algebra, Bauer-Bernard:Conformal_transformations_and_SLE_partition_function_martingale}. 
In particular, the 2nd order BPZ PDEs satisfied by correlation functions of CFT degenerate fields at level two are equivalent to the vanishing of the It\^o drifts in the associated martingale observables (for more details,  see~\cite{Peltola:Towards_CFT_for_SLEs} and references therein).  
Such an association generalizes to irregular martingales as well. 
Indeed, in this section we show that the irregular $\SLE_{4}$ martingale solves a confluent BPZ equation with central charge $c=1$, and prove \Cref{thm:BPZ_general}.

\bigskip 

Let $\FunSpGen(\cXFP_n^{1})$ denote functions on the configuration space $\cXFP_n^{1} \subset \bR \times \bR \times \bC^{2n} \cong \bR^{2+4n}$, 
\begin{align*}
\cXFP_n^{1} := \Big\{ (z, o, \bs\lambda; \bs s) \in \bR \times \bR \times \bC^n \times \bC^n \;|\; z \neq o, \, z \neq \lambda_i, \, o \neq \lambda_i, \, \lambda_i \neq \lambda_j \textnormal{ for } 1 \leq i \neq i \leq n \Big\} ,
\end{align*}
which are $C^2$ in the $z$-variable, $C^1$ in the $\spec$-variable, and holomorphic in the variables $(\bs\lambda; \bs s)$.
The next result is a generalization of~\cite[Lemma~23]{Dubedat:Double_dimers_conformal_loop_ensembles_and_isomonodromic_deformations}.
It involves Wirtinger derivatives $\frac{\partial}{\partial {\lambda_i}} = \frac{1}{2} \big( \partial_{\Re(\lambda_i)} - \ii \partial_{\Im(\lambda_i)} \big)$ 
and $\frac{\partial}{\partial {s_i}} = \frac{1}{2}  \big( \partial_{\Re(s_i)} - \ii \partial_{\Im(s_i)} \big)$ at the complex variables.

\begin{prop} \label{prop:non_BPZ_PDE}
Suppose that $\PartF \colon \cXFP_n^{1} \to \bC$ belongs to $\FunSpGen(\cXFP_n^{1})$ and solves the PDE
\begin{align} \label{eq:non_BPZ_PDE}
\Bigg[ \frac{\partial^2}{\partial z^2} - \frac{1}{z-\xi} \bigg( \frac{\partial}{\partial z} + \frac{\partial}{\partial \spec} \bigg) 
- \sum_{i=1}^{n} \Bigg( \; & 
\frac{1}{z-\lambda_i} \frac{\partial}{\partial {\lambda_i}} + \frac{s_i}{(z-\lambda_i)^2} \frac{\partial}{\partial {s_i}} 
\\
\; & 
+ \frac{\alpha_i^2}{(z-\lambda_i)^2} +  \frac{2 s_i \alpha_i }{(z - \lambda_i)^3} + \frac{s_i^2 }{(z - \lambda_i)^4}  \Bigg) \Bigg]  \PartF(z, \xi, \bs\lambda; \bs s) = 0 .
\nonumber
\end{align}
Then, with the covariance factor $F(\bs\Lambda_t; \bs S_t)$ from~\eqref{eq:covariance_factor_F},
the process 
\begin{align*}
M_t \coloneqq F(\bs\Lambda_t; \bs S_t) \, \PartF(\Driv_t, \Spec_t, \bs\Lambda_t; \bs S_t) , \qquad t < T_\spec ,
\end{align*}
is a local martingale for the chordal $\SLE_4(-2)$ in $(\bH, 0, \infty)$ with force point $\spec = \Spec_0 \in \bR \setminus \{0\}$.
Moreover, if at the diagonal $\xi=z$ the function $\PartF$ is well-defined and satisfies
\begin{align} \label{eq:der_cancel}
\Big( \tfrac{\partial}{\partial z} + \tfrac{\partial}{\partial \spec} \Big) \PartF(z, \xi, \bs\lambda; \bs s) = 0, 
\end{align}
then $M$ is a local martingale for the chordal $\SLE_4^{\langle\mu\rangle}(-2)$ up to the stopping time $T_{\bs\Lambda}$. 
\end{prop}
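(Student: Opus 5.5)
We treat the two claims separately. Before the continuation threshold the argument is an It\^o computation. After it, we need It\^o's formula for Dirichlet processes, as in the proof of \Cref{prop:mgle_beyond_ct}.

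\emph{Step 1: $t<T_\spec$.} For $\SLE_4(-2)$ we have $\ud\Driv_t = 2\ud B_t + \ud\Spec_t$ and $\ud\Spec_t = \frac{2\ud t}{\Spec_t-\Driv_t}$, so $\ud\langle\Driv,\Driv\rangle_t=4\ud t$. The processes $\Lambda^i_t$ and $S^i_t$ evolve by~\eqref{eq:Mart_rules1}, using \Cref{thm:evol_isomonodromic_times}. The factor $F_t$ has finite variation, with $\ud\log F_t$ given by~\eqref{eq:Ito_F}; this uses~\eqref{eq:time_space_deriv_SLE} and the constancy of $\alpha_i$ from \Cref{rem:alpha_const}. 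We apply It\^o's formula~\eqref{eq:Ito_formula_multi} to $\PartF(\Driv_t,\Spec_t,\bs\Lambda_t;\bs S_t)$. Since $\PartF$ is holomorphic in $(\bs\lambda;\bs s)$, only the Wirtinger derivatives $\partial_{\lambda_i},\partial_{s_i}$ appear against the complex finite-variation increments. This gives
\begin{align*}
\frac{\ud M_t}{F_t} = 2\,\partial_z\PartF\ud B_t + \Big[2\partial_z^2\PartF + \frac{2(\partial_z+\partial_\spec)\PartF}{\Spec_t-\Driv_t} + \sum_i\Big(\frac{2\partial_{\lambda_i}\PartF}{\Lambda^i_t-\Driv_t} - \frac{2S^i_t\partial_{s_i}\PartF}{(\Lambda^i_t-\Driv_t)^2}\Big) + \PartF\,\frac{\ud\log F_t}{\ud t}\Big]\ud t .
\end{align*}
Set $z=\Driv_t$, $\spec=\Spec_t$, $\lambda_i=\Lambda^i_t$ and $s_i=S^i_t$, and rewrite the denominators as $\Lambda^i_t-\Driv_t=-(z-\lambda_i)$. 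The factor $(\Lambda-\Driv)^{-3}$ in~\eqref{eq:Ito_F} then changes sign, which turns $-2\alpha_i S/(\Lambda-\Driv)^3$ into $+2\alpha_i s_i/(z-\lambda_i)^3$. The bracket becomes exactly twice the left side of~\eqref{eq:non_BPZ_PDE}, and so it vanishes. We will check these signs term by term, since this bookkeeping is the only delicate point in this step. Hence $M$ has zero drift and is a local martingale up to $T_\spec$, after localizing away from the singular set of $\cXFP^1_n$.

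\emph{Step 2: continuation past $T_\spec$.} Now $\Driv^{\langle\mu\rangle}=2B+\Xi^{\langle\mu\rangle}$, and by~\eqref{eq:I^0_def} $\Xi^{\langle\mu\rangle}$ is a zero-energy process of finite $p$-variation, $p\in(1,2)$. So we use Bertoin's It\^o formula~\cite{Bertoin:Sur_une_integrale_pour_processus_variation_bornee}, exactly as in \Cref{prop:mgle_beyond_ct}. The drift computation is the same as in Step 1, except that the term $\frac{2}{\Spec-\Driv}(\partial_z+\partial_\spec)\PartF\ud t$ is replaced by the Föllmer-type integral $\int(\partial_z+\partial_\spec)\PartF\ud\Xi^{\langle\mu\rangle}_s$. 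The reason is that $\Xi^{\langle\mu\rangle}$ enters both the $z$-slot (through $\Driv$) and the $\spec$-slot, while $B$ contributes the $\ud B$ term and the second-order term. We decompose $\ud\Xi^{\langle\mu\rangle}=\mu\ud\ell^0-\ud I^0$. On the support of $\ell^0$ we have $\Driv=\Xi$, and hypothesis~\eqref{eq:der_cancel} makes the integrand vanish there, so the local-time contribution is zero. For the principal-value part, the integrand vanishes on the diagonal and is $C^1$, so it is $O(|\Driv_s-\Xi_s|)$. The p.v.\ integral therefore becomes an absolutely convergent Lebesgue integral $\int\frac{2(\partial_z+\partial_\spec)\PartF}{\Spec_s-\Driv_s}\ud s$, which coincides with the Step 1 term and is cancelled by the PDE. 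Thus $M$ equals $M_0$ plus a stochastic integral against $B$ up to $T_{\bs\Lambda}$.

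\emph{Main obstacle.} The main obstacle is Step 2: justifying the application of the Dirichlet-process It\^o formula and the conversion of the principal value into an ordinary integral. This needs only $C^1$ regularity in $\spec$ together with~\eqref{eq:der_cancel} on the diagonal. I would first establish the Lipschitz bound $|(\partial_z+\partial_\spec)\PartF(z,\spec,\cdot)|\le C|z-\spec|$ locally uniformly on compacts away from the $\lambda_i$. I would then localize with stopping times before $T_{\bs\Lambda}$, mirroring~\cite{Lehmkuehler:Trunks_of_CLE4_explorations}.
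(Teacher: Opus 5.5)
Your proposal is correct and follows essentially the same route as the paper's proof. Before $T_\spec$ you use It\^o's formula, with the drift identified (via~\eqref{eq:Ito_F}) as twice the left side of~\eqref{eq:non_BPZ_PDE}; after $T_\spec$ you use Bertoin's It\^o formula for Dirichlet processes, with~\eqref{eq:der_cancel} killing the local-time term and the principal value becoming an ordinary integral. Your observation that this last step needs a bound like $|(\partial_z+\partial_\spec)\PartF|\le C|z-\spec|$ near the diagonal is worth keeping. That bound is not implied by membership in $\FunSpGen(\cXFP_n^{1})$, which says nothing about the diagonal, so it makes explicit a point the paper simply asserts.
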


\begin{proof}
We have seen in Equation~\eqref{eq:Ito_F} that for the covariance factor $F_t$ from~\eqref{eq:covariance_factor_F},
\begin{align*}
\tfrac{1}{2} \ud \log F_t
= \; & - \sum_{i=1}^{n} \bigg( 
\frac{\alpha_i^2}{(\Driv_t - \Lambda_t^i)^2}
\; + \; \frac{2 \alpha_i S_t^i }{(\Driv_t - \Lambda_t^i)^3} 
\; + \; \frac{ (S_t^i)^2 }{(\Driv_t - \Lambda_t^i)^4}
\bigg) \ud t .
\end{align*}
For ease, we write $\bs X_t \coloneqq (\Driv_t, \Spec_t, \bs\Lambda_t; \bs S_t)$, which is a continuous semimartingale with values in $\cXFP_n^{1}$ for times $t < T_\spec \wedge T_{\bs\Lambda}$ (\ref{eq:continuation_threshold},~\ref{eq:swallow_times}). 
Using It\^o's formula~\eqref{eq:Ito_formula_multi} and~(\ref{eq:Mart_rules1},~\ref{eq:SLE_kappa(rho)_driving.fct}), we have
\begin{align*}
\tfrac{1}{2} \ud \PartF(\bs X_t)
\; = \;\; & \tfrac{1}{2}\big(\tfrac{\partial}{\partial z} \PartF\big)(\bs X_t) \ud \Driv_t 
\; + \; \tfrac{1}{2}\big(\tfrac{\partial}{\partial \spec} \PartF\big)(\bs X_t) \ud \Spec_t 
\; + \; \tfrac{1}{4} \big( \tfrac{\partial^2}{\partial z^2} \PartF\big)(\bs X_t) \underbrace{\ud \langle \Driv,\Driv \rangle_t}_{= \; 4 \ud t} 
\\
\; & + \tfrac{1}{2} \sum_{i=1}^n \bigg(
\big(\tfrac{\partial}{\partial \lambda_i} \PartF\big)(\bs X_t) \ud \Lambda_t^i + \big(\tfrac{\partial}{\partial s_i}\PartF\big)(\bs X_t) \ud S_t^i \bigg)
\\
= \; & \big(\tfrac{\partial}{\partial z} \PartF\big)(\bs X_t) \ud B_t 
+ \Bigg( \big(\tfrac{\partial^2}{\partial z^2} \PartF\big)(\bs X_t) 
- \frac{\big( \big(\tfrac{\partial}{\partial z} \PartF\big)(\bs X_t) + \big(\tfrac{\partial}{\partial \spec} \PartF\big)(\bs X_t) \big)}{\Driv_t - \Spec_t}
\\
\; & \qquad\qquad\qquad\qquad\qquad\qquad   
- \sum_{i=1}^n \bigg( \frac{\big(\tfrac{\partial}{\partial \lambda_i} \PartF\big)(\bs X_t)}{\Driv_t - \Lambda^i_t} 
+ \frac{S_t^{i}\, \big(\tfrac{\partial}{\partial s_i}\PartF\big)(\bs X_t)}{(\Driv_t - \Lambda_t^i)^2} \bigg) \Bigg) \ud t .
\end{align*}
The assumed PDE~\eqref{eq:non_BPZ_PDE} implies that the drift term in $M_t = F_t \, \PartF(\bs X_t)$ vanishes:
\begin{align*}
\ud M_t = \; & 2 F_t \, \big(\tfrac{\partial}{\partial z} \PartF\big)(\bs X_t) \ud B_t .
\end{align*}
After the time $T_\spec$, we can use a version of It\^o's formula for Dirichlet processes (with zero-energy summand of finite $p$-variation for $p\in (1,2)$) from~\cite[Section~2 and Theorem~3.1]{Bertoin:Sur_une_integrale_pour_processus_variation_bornee} 
as in the proof of \Cref{prop:mgle_beyond_ct}.
We again see that thanks to the additional assumption~\eqref{eq:der_cancel}, 
the term involving the local time $\ell^0$ of $B = \frac{1}{2} (\Driv^{\langle \mu\rangle} - \Xi^{\langle \mu\rangle})$ in~\eqref{eq:I^0_def} vanishes, 
and the principal value integral in~\eqref{eq:I^0_def} yields a convergent integral of the form
\begin{align*}
- \int_0^t F_s \, \big(\tfrac{\partial}{\partial z} \PartF + \tfrac{\partial}{\partial \spec} \PartF\big)(\bs X^{\langle \mu\rangle}_s) \ud I^0_s 
\; = \; - 2 \int_0^t F_s \, \frac{\big(\tfrac{\partial}{\partial z} \PartF + \tfrac{\partial}{\partial \spec} \PartF\big)(\bs X^{\langle \mu\rangle}_s)}{Z^{\langle\mu\rangle}_s - \Xi^{\langle\mu\rangle}_s}\ud s.  
\end{align*}
in the It\^o differential $\ud M_t$, 
where $\bs X^{\langle \mu\rangle}_s = (\Driv^{\langle \mu\rangle}_s, \Xi^{\langle \mu\rangle}_s, \bs\Lambda_s; \bs S_s)$ is a Dirichlet process. 
Thus, the drift term in $M_t = F_t \, \PartF(\bs X_t)$ vanishes also for times beyond $T_\spec$, before the time $T_{\bs\Lambda}$. 
\end{proof}

\begin{thm} \label{thm:non_BPZ_PDE2}
Let $\PartF \colon \cXFP_n^{1} \to \bC$ be a locally bounded and Borel measurable function, and $G \colon \cXFP_n^{1} \to \bC$ a continuous function. 
Suppose that the process 
\begin{align*}
M_t \coloneqq \exp \bigg( \int_0^t G(\Driv_r, \Spec_r, \bs\Lambda_r; \bs S_r) \, \ud r \bigg) \, \PartF(\Driv_t, \Spec_t, \bs\Lambda_t; \bs S_t) 
\end{align*}
is a local martingale for the chordal $\SLE_4(-2)$ in $(\bH, 0, \infty)$ with force point $\spec = \Spec_0 \in \bR \setminus \{0\}$.
Then, $\PartF$ is a \textnormal{(}weak\textnormal{)} solution to the PDE
\begin{align} \label{eq:non_BPZ_PDE_cplx}
\Bigg[  \; & 
\frac{\partial^2}{\partial z^2} - \frac{1}{z-\xi} \bigg( \frac{\partial}{\partial z} + \frac{\partial}{\partial \spec} \bigg) 
+ G(z, \spec, \bs\lambda; \bs s) 
\\
\; &  - \sum_{i=1}^{n} \Bigg(
\frac{1}{z-\lambda_i} \frac{\partial}{\partial {\lambda_i}} + \frac{s_i}{(z-\lambda_i)^2} \frac{\partial}{\partial {s_i}} 
+ \frac{1}{z-\lambda_i^*} \frac{\partial}{\partial {\lambda_i^*}} + \frac{s_i^*}{(z-\lambda_i^*)^2} \frac{\partial}{\partial {s_i^*}} \Bigg) \Bigg] \PartF(z, \xi, \bs\lambda; \bs s) = 0 .
\nonumber
\end{align}
\end{thm}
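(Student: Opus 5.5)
The plan is to reduce the statement to a general result on necessity of PDEs for local martingales of diffusion-type processes, namely the forthcoming~\cite{Karrila-Viitasaari:In_prep} mentioned in the introduction. Write $\bs X_t = (\Driv_t, \Spec_t, \bs\Lambda_t; \bs S_t)$, viewing complex coordinates as pairs of real coordinates. Then $\bs X$ is a (degenerate) diffusion on $\cXFP_n^{1} \subset \bR^{2+4n}$ for $t < T_\spec \wedge T_{\bs\Lambda}$. By~\eqref{eq:SLE_kappa(rho)_driving.fct} with $\rho = -2$, \eqref{eq:Mart_rules1} and \Cref{thm:evol_isomonodromic_times}, its generator is
\begin{align*}
\mathcal L = 2\frac{\partial^2}{\partial z^2} - \frac{2}{z-\xi}\Big(\frac{\partial}{\partial z}+\frac{\partial}{\partial \xi}\Big) - \sum_{i=1}^n\Big(\frac{2}{z-\lambda_i}\partial_{\lambda_i} + \frac{2 s_i}{(z-\lambda_i)^2}\partial_{s_i} + \textnormal{c.c.}\Big),
\end{align*}
where the real-coordinate vector fields are rewritten via Wirtinger derivatives, producing the conjugate terms in~\eqref{eq:non_BPZ_PDE_cplx}. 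The claimed PDE is then, up to the overall factor $2$, exactly $(\mathcal L + G)\PartF = 0$, with $G$ absorbed (after a harmless rescaling convention for $G$) as the killing/potential term. The first step is therefore to record this identification carefully, including the factor bookkeeping, so that the target is ``$\PartF$ is a weak solution of $(\mathcal L+G)\PartF=0$''.

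The key step is to show that local martingality of $M_t = \exp(\int_0^t G(\bs X_r)\ud r)\,\PartF(\bs X_t)$ forces this weak equation. I would test against $\varphi \in C_c^\infty(\cXFP_n^{1})$ and use the Markov property: for any starting point $x$ in the support of $\varphi$, stopping at the exit time $\sigma$ of a small ball around $x$ makes $M$ bounded, since $\PartF$ is locally bounded and $G$ is continuous. Optional stopping then gives $\PartF(x) = \mathbb E_x[e^{\int_0^{t\wedge\sigma}G}\PartF(\bs X_{t\wedge\sigma})]$. Integrating against $\varphi(x)\ud x$, using the Kolmogorov forward relation $\partial_t \mathbb E_x[f(\bs X_t)] = \mathbb E_x[\mathcal L f(\bs X_t)]$ for smooth $f$, and passing to the adjoint yields
\begin{align*}
\int \PartF\,(\mathcal L^* + G)\varphi = 0 .
\end{align*}
Concretely, differentiate in $t$ at $t=0$ the identity $\int \varphi(x)\,\mathbb E_x[\cdots]\ud x = \int\varphi\PartF$, and move $\mathcal L$ onto $\varphi$ by duality of the semigroup.

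The main obstacle is that $\PartF$ is merely Borel and locally bounded, so this duality and the derivative at $t=0$ cannot be taken naively. The diffusion is also highly degenerate: there is noise only in the $z$-direction, with deterministic drift in the other $1+4n$ real directions. This is exactly where \cite{Karrila-Viitasaari:In_prep} is needed. Their theorem gives, for diffusions with smooth (locally Lipschitz) coefficients on an open set, that a locally bounded measurable $\PartF$ for which the Feynman--Kac weighted process is a local martingale is a distributional solution of $(\mathcal L+G)\PartF=0$. No hypoellipticity is needed for the weak statement, though Hörmander's condition here could additionally upgrade it to smoothness. I would therefore verify their hypotheses: the coefficients of $\mathcal L$ are real-analytic on $\cXFP_n^{1}$; the process started from any point of $\cXFP_n^{1}$ is defined up to a positive stopping time; the local-martingale property holds for every initial point, which follows by the domain Markov property of $\SLE_4(-2)$ and the flow property of $\bs\Lambda, \bs S$; and $G$ is continuous. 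Applying their result and rewriting the real vector fields via Wirtinger derivatives then yields~\eqref{eq:non_BPZ_PDE_cplx}.
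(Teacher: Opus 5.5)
Your reduction to \cite{Karrila-Viitasaari:In_prep} via the generator of $\bs X_t=(\Driv_t,\Spec_t,\bs\Lambda_t;\bs S_t)$ is the same route as the paper. Your factor bookkeeping is also right. With $M_t$ as defined, the drift is $(\mathcal L+G)\PartF$, so the potential in \eqref{eq:non_BPZ_PDE_cplx} should really be $\tfrac12 G$, as in \Cref{prop:non_BPZ_PDE}; the paper's proof has the same slip.

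However, there is a genuine gap at exactly the point you wave away. The paper applies \cite[Theorem~2.3]{Karrila-Viitasaari:In_prep} under its H\"ormander-type hypothesis \cite[Condition~(H)]{Karrila-Viitasaari:In_prep}. The substantive part of its proof is \Cref{lem:Hormander}. There, the iterated brackets $\cU_0^{[\ell]}$ of $\cU_0$ with $\partial_z$ are written through a Vandermonde-type matrix in $(\xi-z)^{-1}$, $(\lambda_i-z)^{-1}$, $(\lambda_i^*-z)^{-1}$. Together with $\cU_1=2\partial_z$ they span all $4n+2$ real directions when the $\lambda_i$ are non-real, and $3n+k+2$ when only $k$ of them are. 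You assert instead that no hypoellipticity is needed for the weak statement. The hypotheses you check (analytic coefficients, continuity of $G$, local existence) leave this one out.

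The replacement you offer does not hold up. The hypothesis only concerns $\SLE_4(-2)$ started from $\Driv_0=0$, and $\PartF$ is not assumed translation invariant. So nothing is given a priori at starting points with $z\neq 0$. The domain Markov property only yields the martingale identity from the random states $\bs X_\tau$, for almost every realization, i.e.\ on a set of full measure for the law of $\bs X_\tau$. That is not ``for every initial point''.

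Since $\PartF$ is merely Borel, and a weak solution only concerns its Lebesgue class, what you actually need is that these laws are absolutely continuous and charge open sets. The noise acts only in the $z$-direction, with $1+4n$ degenerate real directions. If the process stayed on a lower-dimensional set, $\PartF$ could be changed arbitrarily off it and no distributional PDE could follow.

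The (parabolic) bracket condition of \Cref{lem:Hormander} is exactly what provides this spreading. It also explains the dimension drop for real $\lambda_i$, where $\Lambda^i_t$ stays real. Your optional-stopping and duality sketch at an arbitrary starting point $x$ presupposes it. To close the gap, either verify \Cref{lem:Hormander} and invoke Condition (H) as the paper does, or prove independently that the law of $\bs X_t$ has a density on the relevant open set.
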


\begin{proof}
First of all, the process $M$ is a complex-valued (generalized) martingale observable in the sense of~\cite[Definition~2.1]{Karrila-Viitasaari:In_prep}, and it satisfies the assumptions for~\cite[Theorem~2.3]{Karrila-Viitasaari:In_prep}.
Second of all, by the identities~(\ref{eq:Mart_rules1},~\ref{eq:SLE_kappa(rho)_driving.fct}), the vector-valued It\^o process
\begin{align} \label{eq:Ito_process}
t \mapsto \big( \Driv_t, \Spec_t, \bs\Lambda_t; \bs S_t \big) 
\end{align}
taking values in $\cXFP_n^{1}$ comprises semimartingales of which the first component satisfies 
$\ud \Driv_t = 2 \ud B_t + \frac{2}{\Spec_t - \Driv_t} \ud t$,
while the other components are finite-variation processes, whose differentials are smooth complex-valued functions (for short times).
It has infinitesimal generator $\cG = \frac{1}{2} \cU_1^2 + \cU_0$, where $\cU_1 \coloneqq 2 \partial_z$ and 
\begin{align*}
\cU_0 \coloneqq \; & - \frac{2}{z - \xi} \big( \partial_z + \partial_\spec \big) 
- \sum_{i=1}^{n} \Bigg( 
\frac{2}{z - \lambda_i} \partial_{\lambda_i} + \frac{2 s_i}{(z - \lambda_i)^2} \partial_{s_i} 
+ \frac{2}{z - \lambda_i^*} \partial_{\lambda_i^*} + \frac{2 s_i^*}{(z - \lambda_i^*)^2} \partial_{s_i^*} 
\Bigg) .
\end{align*}
Moreover, we will check below in \Cref{lem:Hormander} that the (complex) vector fields $\cU_1$ and $\cU_0$ involved in $\cG$ satisfy a version of the H\"ormander bracket condition, 
as phrased in~\cite[Condition~(H) and Remark~3.3]{Karrila-Viitasaari:In_prep}). 
Taking $(z, \xi, \bs\lambda; \bs s) \in \cXFP_n^{1}$, by~\cite[Theorem~2.3]{Karrila-Viitasaari:In_prep} we can then conclude that $\PartF$ is a 
(weak) solution to the following PDE, equivalent to~\eqref{eq:non_BPZ_PDE_cplx}: 
\begin{align} \label{eq:eq:non_BPZ_PDE_diff}
\Big( \cG + G(z, \spec, \bs\lambda; \bs s)  \Big) \PartF(z, \xi, \bs\lambda; \bs s) = 0 .
\end{align} 
Up to verifying \Cref{lem:Hormander}, this concludes the proof thanks to the results in~\cite{Karrila-Viitasaari:In_prep}.
\end{proof}

As a special case, $g_0$ is the identity map, so $g_0'(\lambda_i) = 1$ and $\PreSchwarzian(g_0)(\lambda_i) = 0 = \Schwarzian(g_0)(\lambda_i)$. 
Thus, using~\eqref{eq:Ito_F}, we can write the covariance factor~\eqref{eq:covariance_factor_F} as
\begin{align*}
F(\bs\Lambda_t; \bs S_t) = \; & \exp \bigg( \int_0^t G(\Driv_r, \bs \Lambda_r;  \bs S_r) 
\ud r \bigg) , 
\\
\textnormal{where} \qquad
G(\Driv_t, \bs \Lambda_t;  \bs S_t) 
= \; & - 2 \sum_{i=1}^{n} \bigg( 
\frac{\alpha_i^2}{(\Lambda_t^i - \Driv_t)^2}
\; + \; \frac{ (S_t^i)^2 }{(\Lambda_t^i - \Driv_t)^4}
\; - \; \frac{2 \alpha_i S_t^i }{(\Lambda_t^i - \Driv_t)^3} \bigg) ,
\end{align*}
which is a complex-valued function satisfying the assumptions of \Cref{thm:non_BPZ_PDE2}.
By sending $\spec$ to infinity, we thus recover the confluent BPZ PDE~\eqref{eq:BPZ} related to the local martingale in \Cref{thm:main_mgle}.
This proves \Cref{thm:BPZ_general} stated in the introduction.

\begin{cor}
The function $\tau(\bs\lambda; \bs s) \, \Tr \big( Y^{-1}(\xi, \bs\lambda; \bs s) \, Y(z, \bs\lambda; \bs s) \big)$ solves the PDE~\eqref{eq:non_BPZ_PDE}, 
and the function $\tau(\bs\lambda; \bs s) \, \Tr \big( Y(z, \bs\lambda; \bs s) \big)$ solves the confluent BPZ PDE~\eqref{eq:BPZ}.
\end{cor}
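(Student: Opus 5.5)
The plan is to deduce both claims from the martingale statements already proved, using the direction of the equivalence in which vanishing drift gives the PDE. We do not use the weak-solution statement of \Cref{thm:non_BPZ_PDE2}; instead we read the drift off directly, since $\tau\,\Tr(\cdots)$ is smooth in $z,\xi$ and holomorphic in $(\bs\lambda;\bs s)$.

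\textbf{First claim.} Set $\PartF(z,\xi,\bs\lambda;\bs s) \coloneqq \tau(\bs\lambda;\bs s)\,\Tr\big(Y^{-1}(\xi,\bs\lambda;\bs s)\,Y(z,\bs\lambda;\bs s)\big)$. This function lies in $\FunSpGen(\cXFP_n^{1})$, because $Y$ is holomorphic in $z$ away from the punctures and in $(\bs\lambda;\bs s)$ by the IDEs (\ref{eq:LaxUi},~\ref{eq:LaxVi}).

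\begin{enumerate}
\item The Itô computation in the proof of \Cref{prop:non_BPZ_PDE} shows that, for any such $\PartF$, the drift of $F_t\,\PartF(\bs X_t)$ equals $2F_t$ times the left-hand side of \eqref{eq:non_BPZ_PDE}, evaluated at $\bs X_t=(\Driv_t,\Spec_t,\bs\Lambda_t;\bs S_t)$.
\item By \Cref{thm:SLE_4(-2)_mart}, Item~\ref{item:SLE_4(-2)_mart_scal}, this drift vanishes identically. Hence the left-hand side of \eqref{eq:non_BPZ_PDE} vanishes along the trajectories of $\bs X_t$ for all $t<T_\spec\wedge T_{\bs\Lambda}$.
\item To pass from trajectories to all points, recall that the initial data $(\spec,\bs\lambda;\bs s)$ and $\Driv_0$ are arbitrary. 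Alternatively, apply the argument at $t=0$ and use continuity of the drift integrand, since a continuous adapted process whose time integral vanishes is itself zero. Either way the PDE holds at every real $z,\xi$ and every $(\bs\lambda;\bs s)$ in the configuration space.
\item Because the expression is holomorphic in $(\bs\lambda;\bs s)$, no conjugate derivatives appear, so we obtain exactly \eqref{eq:non_BPZ_PDE}.
\end{enumerate}

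There is a cleaner alternative that avoids any stochastic argument. Repeat the algebra of the proofs of \Cref{thm:main_mgle} and \Cref{thm:SLE_4(-2)_mart} at a fixed point $(z,\xi,\bs\lambda;\bs s)$, as a purely deterministic identity. Substitute $\partial_z Y=AY$, $\partial_z^2Y=(\partial_zA+A^2)Y$, $\partial_\xi Y^{-1}(\xi)=-Y^{-1}(\xi)A(\xi)$, the Lax equations (\ref{eq:LaxUi},~\ref{eq:LaxVi}), and $\partial\log\tau=H$ from \Cref{lem:Hamiltonians}. Using $A^2=\tfrac12\Tr(A^2)\mathbf 1$ and the expansion \eqref{eq:TrL2}, the terms $\ell_{i,1},\ell_{i,2}$ cancel against the $\tau$-derivatives. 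The terms $\ell_{i,3}^2$ give $\alpha_i^2/(z-\lambda_i)^2$, and the terms $\ell_{i,3},\ell_{i,4}$ give the $\alpha_is_i$ and $s_i^2$ potentials. I would present this version, citing the identical cancellations already displayed in those proofs.

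\textbf{Second claim.} Send $\xi\to\infty$. By the normalization $Y(\xi)\to\mathbf 1$ at the base point $\infty$ (or by conjugation-invariance of the trace, after fixing $Y$ up to a constant right factor), $\Tr(Y^{-1}(\xi)Y(z))$ tends to $\Tr(Y(z))$. The coefficient $\frac{1}{z-\xi}$ vanishes in this limit, and the term $\partial_\xi$ applied to the expression, which equals $-\Tr(Y^{-1}(\xi)A(\xi)Y(z))$, stays bounded since $A(\xi)=O(\xi^{-2})$. Therefore \eqref{eq:non_BPZ_PDE} reduces to \eqref{eq:BPZ}. Equivalently, apply the drift identity for chordal $\SLE_4$ established in the proof of \Cref{thm:main_mgle}: take the trace of \eqref{eq:dM} and use linearity.

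\textbf{Main obstacle.} The hardest step is justifying the passage from vanishing drift along random trajectories to a pointwise identity, or, in the deterministic route, handling the cross terms $\frac{A_{i,k}}{(\xi-\lambda_i)^{k+1}(z-\xi)}$ correctly. For the latter I would use the partial-fraction identity already displayed in the proof of \Cref{thm:SLE_4(-2)_mart}, which telescopes these terms against $\frac{1}{z-\xi}(\partial_z+\partial_\xi)$.
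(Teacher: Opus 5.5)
Your proposal is correct and is essentially the paper's argument. The paper says only that the identities can be checked by direct computation or deduced from the martingale property via \Cref{thm:non_BPZ_PDE2}; your preferred deterministic route is the former, and your stochastic variant replaces the weak-solution theorem by evaluating the continuous drift at an arbitrary starting point, which is legitimate because $\PartF$ is explicit and smooth. One small caveat: the $\xi\to\infty$ limit actually gives $\Tr\big(Y(\infty)^{-1}Y(z)\big)$, so it needs the normalization $Y(\infty)=\mathbf{1}$, but this does not matter because the same cancellations (your trace of \eqref{eq:dM}) show that $\tau Y$ solves \eqref{eq:BPZ} entrywise.
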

\begin{proof}
One can either check this by direct computation, or invoke \Cref{thm:non_BPZ_PDE2}.
\end{proof}

To finish the proof of \Cref{thm:non_BPZ_PDE2}, we need to verify a version of the H\"ormander bracket condition~\cite{Hormander:The_analysis_of_linear_partial_differential_operators_1},
which is a crucial ingredient --- as showcased by the recent article~\cite{Karrila-Viitasaari:In_prep} (and earlier results of similar nature).
In contrast to earlier literature, we also need to consider complex vector fields. 
Note that the H\"ormander bracket condition does not automatically guarantee smoothness in the complex case.

\begin{lem}\label{lem:Hormander}
Consider the vector fields $\cU_1 \coloneqq 2 \partial_z$ and 
\begin{align*}
\cU_0 \coloneqq \; & - \frac{2}{z - \xi} \big( \partial_z + \partial_\spec \big) 
- \sum_{i=1}^{n} \Bigg( 
\frac{2}{z - \lambda_i} \partial_{\lambda_i} + \frac{2 s_i}{(z - \lambda_i)^2} \partial_{s_i} 
+ \frac{2}{z - \lambda_i^*} \partial_{\lambda_i^*} + \frac{2 s_i^*}{(z - \lambda_i^*)^2} \partial_{s_i^*} 
\Bigg) .
\end{align*}
\begin{itemize}[leftmargin=*]
\item At every $(z, \xi, \bs\lambda; \bs s) \in \cXFP_n^{1} \cap ( \bR^{2} \times (\bC \setminus \bR)^{n} \times \bC^{n})$, 
the Lie algebra generated by $\cU_1$ and the Lie brackets of $\cU_0$ and $\cU_1$ is of full real dimension $4n+2$.

\item At every $(z, \xi, \bs\lambda; \bs s) \in \cXFP_n^{1} \cap ( \bR^{2} \times \bR^{n-k} \times (\bC \setminus \bR)^{k} \times \bC^{n})$, with $0 \leq k \leq n$, 
the Lie algebra generated by $\cU_1$ and the Lie brackets of $\cU_0$ and $\cU_1$ is of full real dimension $3n+k+2$.
\end{itemize}
\end{lem}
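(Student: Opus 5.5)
The plan is to compute iterated brackets $\mathrm{ad}_{\cU_1}^m \cU_0 = 2^m \partial_z^m \cU_0$. Since $\cU_1 = 2\partial_z$ has constant coefficients, each bracket only differentiates the coefficients of $\cU_0$ in $z$. This gives
\begin{align*}
\mathrm{ad}_{\cU_1}^m \cU_0 = \; & c_m \Bigg( \frac{1}{(z-\xi)^{m+1}} (\partial_z + \partial_\xi) + \sum_{i=1}^n \bigg( \frac{1}{(z-\lambda_i)^{m+1}} \partial_{\lambda_i} + \frac{(m+1)\, s_i}{(z-\lambda_i)^{m+2}} \partial_{s_i} \\
\; & + \frac{1}{(z-\lambda_i^*)^{m+1}} \partial_{\lambda_i^*} + \frac{(m+1)\, s_i^*}{(z-\lambda_i^*)^{m+2}} \partial_{s_i^*} \bigg) \Bigg)
\end{align*}
with $c_m = (-1)^{m+1} 2^{m+1} m!$, up to sign. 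The coefficient of $\partial_z$ can be absorbed using $\cU_1$, so it suffices to show that the vectors of coefficients, indexed by $m=0,1,\dots,N$, span the relevant coordinate directions. The directions are $\partial_\xi$, $\partial_{\lambda_i}$, $\partial_{\lambda_i^*}$, $\partial_{s_i}$ and $\partial_{s_i^*}$; together with $\partial_z$ these give real dimension $2 + 4n$ in the generic case.

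The key step is a confluent Vandermonde argument. The coefficient functions of $m$ are $x^{-(m+1)}$ for $x \in \{z-\xi\} \cup \{z-\lambda_i, z-\lambda_i^*\}$ and $(m+1)\,x^{-(m+2)}$ for $x \in \{z-\lambda_i, z-\lambda_i^*\}$. Put $w = 1/x$. These become the sequences $w^{m+1}$ and $(m+1)\,w^{m+2}$, which is $w^2$ times the $w$-derivative of $w^{m+1}$, i.e. a confluent pair at the node $w$. I would check that the nodes $1/(z-\xi)$, $1/(z-\lambda_i)$ and $1/(z-\lambda_i^*)$ are pairwise distinct and nonzero. They are distinct because $z$ is real and $\lambda_i \neq \lambda_j$, and $\lambda_i \ne \lambda_i^*$ when $\lambda_i \notin \bR$; they are nonzero because these points are finite. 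Under these conditions the confluent Vandermonde matrix, with nodes doubled for the $\lambda$'s, is invertible. I will need the hypothesis $s_i \ne 0$ here, since otherwise the $\partial_{s_i}$ column vanishes. I would note this and either assume it as in the paper's standing assumption $A_{i,1}\neq 0$, or observe that at $s_i=0$ the flow does not move $s_i$.

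Taking $N = 4n+1$ brackets, one obtains complex vector fields whose span contains every $\partial_{\lambda_i}, \partial_{\lambda_i^*}, \partial_{s_i}, \partial_{s_i^*}$ and $\partial_\xi$. Since $\partial_{\lambda_i}$ and $\partial_{\lambda_i^*}$ together span $\partial_{\Re\lambda_i}$ and $\partial_{\Im\lambda_i}$ (similarly for $s_i$), the real dimension is $4n+2$.

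For the second item, when $\lambda_i \in \bR$, the nodes $z-\lambda_i$ and $z-\lambda_i^*$ coincide. The $\lambda$-part then only yields the real direction $\partial_{\lambda_i}+\partial_{\lambda_i^*} = \partial_{\Re\lambda_i}$, together with the two real directions for $s_i$ (which is complex, but the derivative collapses to $s_i\partial_{s_i}+s_i^*\partial_{s_i^*}$ direction). I would count the resulting span by the same confluent Vandermonde argument with fewer distinct nodes, and match it to the stated $3n+k+2$. The main obstacle is this bookkeeping: the $s$-directions at a real $\lambda_i$ merge into the combination $s_i\partial_{s_i}+s_i^*\partial_{s_i^*}$, which is one real direction per merged node. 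I would check carefully how the count reaches $3$ per real puncture, perhaps using that $\partial_{s_i}$ and $\partial_{s_i^*}$ are treated as independent holomorphic directions in the cited Condition (H).
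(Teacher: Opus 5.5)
For the first item your route is the paper's: iterated brackets with $\partial_z$ and a confluent Vandermonde determinant in the nodes $1/(z-\xi)$, $1/(z-\lambda_i)$, $1/(z-\lambda_i^*)$. Your confluent formulation is in fact the correct version of the paper's matrix $\Delta$. One small fix: the lemma excludes $\mathcal{U}_0$ itself, so use $m=1,\dots,4n+1$, which still works because the nodes are nonzero.

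The gap is the distinctness of the nodes. The conditions $\lambda_i\neq\lambda_j$ and $\lambda_i\neq\lambda_i^*$ do not rule out $\lambda_j=\lambda_i^*$ with $i\neq j$, and $\mathcal{X}_n^1$ allows this. At such a point the $\partial_{\lambda_i}$ and $\partial_{\lambda_j^*}$ columns coincide, and the $\partial_{s_i}$, $\partial_{s_j^*}$ columns are proportional.

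Taking further brackets does not help. Every field in the generated Lie algebra has the form $a(z,\xi)\partial_z+b(z,\xi)\partial_\xi+\sum_\mu \big(\Phi(z,\xi,\mu)\partial_\mu+\sigma_\mu\Psi(z,\xi,\mu)\partial_{\sigma_\mu}\big)$. Here $\mu$ runs over $\lambda_1,\dots,\lambda_n,\lambda_1^*,\dots,\lambda_n^*$, $\sigma_\mu$ is the matching $s_i$ or $s_i^*$, and the same $\Phi,\Psi$ are used for every $\mu$. This class contains $\mathcal{U}_0$ and $\mathcal{U}_1$ and is closed under brackets. Hence at $\lambda_j=\lambda_i^*$ every field satisfies $X^{\lambda_i}=X^{\lambda_j^*}$, $s_j^*X^{s_i}=s_iX^{s_j^*}$, and the two conjugate relations. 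So the span has real dimension at most $4n-2$.

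Likewise the set $\{s_i=0\}$ is invariant under both flows, so $s_i\neq0$ must be an added hypothesis; it cannot be sidestepped by noting that the flow does not move $s_i$. The first item is therefore true only off these exceptional sets. That is all the paper's ``generically non-zero determinant'' delivers, and the paper's proof has the same omission.

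For the second item the proposal stops short of a proof, and the count you hope to reach cannot be reached; your own observation is the right one. Suppose $\lambda_i\in\mathbb{R}$. On $\{\Im\lambda_i=0\}$ the field $\mathcal{U}_0$ moves $\lambda_i$ with real speed $-2/(z-\lambda_i)$, and moves $s_i$ with velocity equal to the real multiple $-2/(z-\lambda_i)^2$ of $s_i$. So $\mathcal{U}_0$ and $\mathcal{U}_1$ are both tangent to $\{\Im\lambda_i=0,\ \arg s_i=\mathrm{const}\}$, and hence so is every bracket.

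Each real puncture therefore contributes at most two directions, $\partial_{\Re\lambda_i}$ and $s_i\partial_{s_i}+s_i^*\partial_{s_i^*}$. The dimension is then at most $2n+2k+2$, which is strictly below $3n+k+2$ whenever $k<n$; for $n=1$, $k=0$ it is at most $4$, not $5$. Treating $\partial_{s_i}$ and $\partial_{s_i^*}$ as independent in Condition (H) cannot change a real dimension count.

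Your merged-node Vandermonde argument, with nodes distinct and nonzero and all $s_i\neq0$, proves exactly $2n+2k+2$. That is the bracket condition on the invariant submanifold above. The paper obtains $3n+k+2$ by discarding only $\partial_{\lambda_i^*}$ and overlooking that its $s_i$ and $s_i^*$ columns merge as well. It is the statement that needs correcting, not your bookkeeping.
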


\begin{proof}
Consider the commutators $\cU_0^{[\ell]} := \frac{1}{\ell!} \, \big[ \partial_z, \cU_0^{[\ell-1]} \big]$ given by
\begin{align*}
\cU_0^{[\ell]} 
= \frac{2}{(\xi - z)^{\ell+1}} \big( \partial_z + \partial_\spec \big) 
\; + \; \sum_{i=1}^{n} \Bigg( \; & 
\frac{2}{(\lambda_i - z)^{\ell+1}} \, \partial_{\lambda_i}
\; - \; \frac{2 (\ell+1) s_i}{(\lambda_i - z)^{\ell+2}} \, \partial_{s_i} 
\\
\; & 
\, + \, \frac{2}{(\lambda_i^* - z)^{\ell+1}} \, \partial_{\lambda_i^*}
\, - \, \frac{2 (\ell+1) s_i^*}{(\lambda_i^* - z)^{\ell+2}} \, \partial_{s_i^*} \Bigg) , \quad \ell \geq 1 .
\end{align*}
We can write 
\begin{align}\label{eq:vector_fields}
\big( \cU_0^{[1]}, \ldots, \cU_0^{[4n+1]} \big)^t 
=  \frac{2}{\xi-z} \, \Delta \, \Big( \; &\partial_z +  \partial_\spec , \partial_{\lambda_1} , \ldots , \partial_{\lambda_n} , \partial_{\lambda_1^*} , \ldots , \partial_{\lambda_n^*} , 
 \\
\; & - \tfrac{ \ell \, s_{1}}{\lambda_{1}-z} \,\partial_{s_1} , \ldots ,  - \tfrac{ \ell \, s_{n}}{\lambda_{n}-z}\,\partial_{s_n}, - \tfrac{ \ell \, s_{1}^*}{\lambda_{1}^*-z} \,\partial_{s_1^*} , \ldots , - \tfrac{ \ell \, s_{n}^*}{\lambda_{n}^*-z}\,\partial_{s_n^*} \Big)^t ,
\nonumber
\end{align}
where $\Delta = (\Delta_{i,j})_{i,j=1}^{4n+1}$ is a Vandermonde type matrix, with generically non-zero determinant:
\begin{align*}
\Delta_{\ell,1} = \; & \frac{1}{(\xi-z)^\ell} , && \ell = 1,2,\ldots, 4n+1 , \\
\Delta_{\ell,i} = \; & \frac{1}{ (\lambda_{i-1}-z)^{\ell}} , && i = 2,\ldots,n+1, \; \ell = 1,2,\ldots, 4n+1 , \\
\Delta_{\ell,i} = \; & \frac{1}{ (\lambda_{i-1}^*-z)^{\ell}} , && i = n+2,\ldots,2n+1, \; \ell = 1,2,\ldots, 4n+1 , \\
\Delta_{\ell,i} = \; & \frac{1}{(\lambda_{i-n-1}-z)^{\ell}} , && i = 2n+2,\ldots,3n+1, \; \ell = 1,2,\ldots, 4n+1 , \\
\Delta_{\ell,i} = \; & \frac{1}{(\lambda_{i-n-1}^*-z)^{\ell}} , && i = 3n+2,\ldots,4n+1, \; \ell = 1,2,\ldots, 4n+1 .
\end{align*}
Thus, at every $(z, \xi, \bs\lambda; \bs s) \in \cXFP_n \cap ( \bR^{2} \times (\bC \setminus \bR)^{n} \times \bC^{n})$, 
we can solve for the vector fields on the right side of~\eqref{eq:vector_fields} in terms of the commutators on the left side.
Because we also have $\cU_1 = 2 \partial_z$ by definition, we see that the Lie algebra generated by $\cU_0$, $\cU_1$ and their Lie brackets is of real dimension $4n+2$. 
In turn, at points $(z, \xi, \bs\lambda; \bs s) \in \cXFP_n \cap ( \bR^{2} \times \bR^{n-k} \times (\bC \setminus \bR)^{k} \times \bC^{n})$ where $k$ of the variables $\lambda_j$ are real, 
the Vandermonde type matrix $V$ has identical columns, but also, those derivatives $\partial_{\lambda_i^*}$ are absent, and we get the real dimension $3n+k+2$. 
\end{proof}

\subsection{Martingales with general pole structure\label{subsec:mgle_higher_poles}}

Let us now explain how to generalize the above martingale construction (\Cref{thm:SLE_4(-2)_mart}) to higher order singularities as in the IDEs~(\ref{eq:non-Fuchsian-A},~\ref{eq:LaxUigen},~\ref{eq:LaxVigen}). 
We begin with the expressions
\begin{align*}
\ud \widetilde Y_t^{-1} 
= \; & -\widetilde Y_t^{-1} \bigg( \widetilde A_t \ud \Spec_t 
\; + \; \sum_{i=1}^n U_i(\Spec_t, \bs \Lambda_t; \bs S_t) \ud \Lambda_t^i 
\; + \; \sum_{i=1}^n \sum_{k=1}^{r_i} V_{i,k}(\Spec_t, \bs \Lambda_t; \bs S_t) \ud S^{i,k}_t \bigg) ,
\\
\ud Y_t = \; & \Big( 2 A_t \ud B_t - \frac{2 A_t}{\Driv_t - \Spec_t} \ud t + \Tr (A_t^2) \ud t \Big) Y_t ,
\end{align*}
which, using the Loewner evolutions~\eqref{SLEeq-irrgen} of the Birkhoff invariants from \Cref{thm:Sto-Bir-Gen}
and the standard Loewner evolutions of the other processes, and the identity~\eqref{eq:LaxVigen}, give us 
\begin{align*}
\ud \big(\widetilde Y_t^{-1} Y_t\big) 
= \; & 2\widetilde Y_t^{-1} A_t Y_t \ud B_t \; + \; \Tr (A_t^2) \, \widetilde Y_t^{-1} Y_t \ud t \\
\, & - 2\widetilde Y_t^{-1} \sum_{i=1}^n \sum_{k=0}^{r_i} \Bigg( 
\frac{A_{i,k}(\bs\Lambda_t; \bs S_t)}{(\Spec_t - \Lambda^i_t)^{k+1} (\Spec_t - \Driv_t)} 
\; + \; \frac{A_{i,k}(\bs\Lambda_t; \bs S_t)}{(\Spec_t - \Lambda^i_t)^{k+1} (\Driv_t - \Lambda^i_t )} \\
\; & \; + \;  \frac{\one\{k \geq 1\}}{(\Driv_t - \Lambda_t^i)^2} \frac{ A_{i,k}(\bs\Lambda_t; \bs S_t) }{ (\Spec_t - \Lambda_t^i)^{k}} 
\; + \; \frac{A_{i,k}(\bs\Lambda_t; \bs S_t)}{(\Driv_t - \Lambda^i_t)^{k+1} (\Driv_t - \Spec_t)}\Bigg) Y_t \ud t
\\
= \; & 2\widetilde Y_t^{-1} A_t Y_t \ud B_t \; + \; \Tr (A_t^2) \widetilde Y_t^{-1} Y_t \ud t \\
\, & - 2\widetilde Y_t^{-1} \sum_{i=1}^n \sum_{k=1}^{r_i} A_{i,k}(\bs\Lambda_t; \bs S_t) \;
\frac{\big( (\Spec_t - \Lambda^i_t)^k + (\Spec_t - \Lambda^i_t) (\Driv_t - \Lambda^i_t)^{k-1} \big)}{(\Spec_t - \Lambda^i_t)^{k} (\Spec_t - \Driv_t) (\Driv_t - \Lambda_t^i)^{k+3}} 
\; Y_t \ud t .
\end{align*}
Next, one should write $\Tr (A_t^2)$ in a form analogous to~\eqref{eq:TrL2_again} (by an analogue of \Cref{lem:Hamiltonians}):
\begin{align*}
\Tr (A_t^2) 
= \; & \sum_{i=1}^{n} \sum_{k\geq 1} (-1)^k \, \frac{\ell_{i,k}(\bs\Lambda_t; \bs S_t)}{(\Driv_t - \Lambda_t^i)^{k}} ,
\end{align*} 
and compute the variation of the tau-function analogous to~\eqref{eq:Ito_tau}:
\begin{align*}
\ud \log \tau_t 
= \; & \sum_{i=1}^n H_{\Lambda^i_t}(\bs \Lambda_t; \bs S_t) \ud \Lambda^i_t 
\; + \; \sum_{i=1}^n \sum_{k=1}^{r_i} H_{S^{i,k}_t}(\bs \Lambda_t; \bs S_t) \ud S^{i,k}_t .
\end{align*}
As before, thanks to the explicit time-evolution~\eqref{SLEeq-irrgen} of the Birkhoff invariants from \Cref{thm:Sto-Bir-Gen},
some of the terms arising from the Hamiltonians would then cancel out with some of the terms arising from $\Tr (A_t^2)$. 
The other terms would then be canceled with the time-evolution of $\ud \log F_t$ 
for a judiciously chosen covariance factor $F_t$, generalizing~\eqref{eq:covariance_factor_F}, which we expect to involve higher Schwarzians in the case of irregular singularities of higher Poincar\'e rank.
We would then expect that, as before, we have
\begin{align*}
\ud \big(\widetilde Y_t^{-1} Y_t\big) 
= \; & 2\widetilde Y_t^{-1} A_t Y_t \ud B_t - (\ud \log F_t + \ud \log \tau_t) \, \widetilde Y_t^{-1} Y_t \ud t .
\end{align*}
We leave the computation of $\ud \log \tau_t$ and the determination of $F_t$ to future work.

%% file: irregular_mgle-arxiv.bbl
\begin{thebibliography}{BMGT20}

\bibitem[AF03]{Ablowitz-Fokas:Complex_variables}
Mark~J. Ablowitz and Athanassios~S. Fokas.
\newblock {\em Complex variables: introduction and applications}.
\newblock Cambridge University Press, 2003.

\bibitem[BB02]{Bauer-Bernard:SLE_growth_processes_and_CFTs}
Michel Bauer and Denis Bernard.
\newblock $\mathrm{SLE}_\kappa$ growth processes and conformal field theories.
\newblock {\em Phys. Lett. B}, 543(1-2):135--138, 2002.

\bibitem[BB03a]{Bauer-Bernard:Conformal_field_theories_of_SLEs}
Michel Bauer and Denis Bernard.
\newblock Conformal field theories of stochastic {L}oewner evolutions.
\newblock {\em Comm. Math. Phys.}, 239(3):493--521, 2003.

\bibitem[BB03b]{Bauer-Bernard:SLE_martingales_and_Virasoro_algebra}
Michel Bauer and Denis Bernard.
\newblock {$\mathrm{SLE}$} martingales and the {V}irasoro algebra.
\newblock {\em Phys. Lett. B}, 557(3-4):309--316, 2003.

\bibitem[BB04]{Bauer-Bernard:Conformal_transformations_and_SLE_partition_function_martingale}
Michel Bauer and Denis Bernard.
\newblock Conformal transformations and the {$\mathrm{SLE}$} partition function
  martingale.
\newblock {\em Ann. Henri Poincar\'e}, 5(2):289--326, 2004.

\bibitem[BBT03]{BBT:Introduction_to_classical_integrable_systems}
Olivier Babelon, Denis Bernard, and Michel Talon.
\newblock {\em Introduction to classical integrable systems}.
\newblock Cambridge University Press, 2003.

\bibitem[BC21]{Basok-Chelkak:Tau_functions_a_la_Dubedat_and_probabilities_of_cylindrical_events_for_double_dimers_and_CLE4}
Mikhail Basok and Dmitry Chelkak.
\newblock Tau-functions {\`a} la {D}ub{\'e}dat and probabilities of cylindrical
  events for double-dimers and {$\mathrm{CLE}(4)$}.
\newblock {\em J. Eur. Math. Soc.}, 23(8):2787--2832, 2021.

\bibitem[BEH03]{BEH:Partition_functions_for_matrix_models_and_isomonodromic_tau_functions}
Marco Bertola, Bertrand Eynard, and John Harnad.
\newblock Partition functions for matrix models and isomonodromic tau
  functions.
\newblock {\em J. Phys. A}, 36(12):3067--3084, 2003.

\bibitem[Ber89]{Bertoin:Sur_une_integrale_pour_processus_variation_bornee}
Jean Bertoin.
\newblock Sur une int\'egrale pour les processus \`a $\alpha$-variation
  born\'ee.
\newblock {\em Ann. Probab.}, 17(4):1521--1535, 1989.

\bibitem[BH03]{Balogh-Harnad:Tau_functions_and_their_applications}
Ferenc Balogh and John Harnad.
\newblock {\em Tau functions and their applications}.
\newblock Cambridge University Press, 2003.

\bibitem[BHH23]{BHH:Hamiltonian_structure_of_rational_isomonodromic_deformation_systems}
Marco Bertola, John Harnad, and Jacques Hurtubise.
\newblock Hamiltonian structure of rational isomonodromic deformation systems.
\newblock {\em J. Math. Phys.}, 64(8):083502, 2023.

\bibitem[Bir13]{Birkhoff:The_generalized_riemann_problem_for_linear_differential_equations_and_the_allied_problems_for_linear_difference_and_q-difference_equations}
George~D. Birkhoff.
\newblock The generalized riemann problem for linear differential equations and
  the allied problems for linear difference and {$q$}-difference equations.
\newblock {\em Proc. Amer. Acad. Arts Sci.}, 49:521--568, 1913.

\bibitem[BJL79]{BJL:Birkhoff_invariants_and_Stokes_multipliers_for_meromorphic_linear_differential_equations}
Werner Balser, Wolfgang~B. Jurkat, and Donald~A. Lutz.
\newblock Birkhoff invariants and {S}tokes{'} multipliers for meromorphic
  linear differential equations.
\newblock {\em J. Math. Anal. Appl.}, 71(1):48--94, 1979.

\bibitem[BLM{\etalchar{+}}17]{BLMST:On_Painleve_gauge_theory_correspondence}
Giulio Bonelli, Oleg Lisovyy, Kazunobu Maruyoshi, Antonio Sciarappa, and
  Alessandro Tanzini.
\newblock On {P}ainlev\'e/gauge theory correspondence.
\newblock {\em Lett. Math. Phys.}, 107(12):2359--2413, 2017.

\bibitem[BMGT20]{BDMGT:N_2_Gauge_theory_free_fermions_on_the_torus_and_Painleve_VI}
Giulio Bonelli, Fabrizio~Del Monte, Pavlo Gavrylenko, and Alessandro Tanzini.
\newblock {$\mathcal {N}$ = $2^*$} {G}auge theory, free fermions on the torus
  and {P}ainlev\'e {VI}.
\newblock {\em Comm. Math. Phys.}, 377(2):1381--1419, 2020.

\bibitem[BPZ84]{BPZ:Infinite_conformal_symmetry_in_2D_QFT}
Alexander~A. Belavin, Alexander~M. Polyakov, and Alexander~B. Zamolodchikov.
\newblock Infinite conformal symmetry in two-dimensional quantum field theory.
\newblock {\em Nucl. Phys. B}, 241(2):333--380, 1984.

\bibitem[BW22]{Bai-Wan:On_the_crossing_estimates_of_simple_conformal_loops_ensembles}
Tianyi Bai and Yijun Wan.
\newblock On the crossing estimates of simple conformal loops ensembles.
\newblock {\em Int. Math. Res. Not.}, 2023(13):11645--11683, 2022.

\bibitem[CPZ18]{CPZ:Interactions_of_irregular_Gaiotto_states_in_Liouville_theory}
Sang-Kwan Choi, Dimitri Polyakov, and Cong Zhang.
\newblock Interactions of irregular {G}aiotto states in {L}iouville theory.
\newblock {\em Eur. Phys. J. C}, 78(507):1--17, 2018.

\bibitem[Des21]{Desiraju:Fredholm_determinant_representation_of_the_homogeneous_Painleve2}
Harini Desiraju.
\newblock Fredholm determinant representation of the homogeneous {P}ainlev{\'e}
  {II} {$\tau$}-function.
\newblock {\em Nonlinearity}, 34(9):6507--6538, 2021.

\bibitem[Des22]{Desiraju:Painleve_CFT_correspondence_on_a_torus}
Harini Desiraju.
\newblock Painlev{\'e}/{CFT} correspondence on a torus.
\newblock {\em J. Math. Phys.}, 63(8):081102--1--16, 2022.

\bibitem[Dub15]{Dubedat:SLE_and_Virasoro_representations_localization}
Julien Dub{\'e}dat.
\newblock $\mathrm{SLE}$ and {V}irasoro representations: localization.
\newblock {\em Comm. Math. Phys.}, 336(2):695--760, 2015.

\bibitem[Dub19]{Dubedat:Double_dimers_conformal_loop_ensembles_and_isomonodromic_deformations}
Julien Dub{\'e}dat.
\newblock Double dimers, conformal loop ensembles, and isomonodromic
  deformations.
\newblock {\em J. Eur. Math. Soc.}, 21(1):1--54, 2019.

\bibitem[Dur96]{Durrett:Stochastic_calculus}
Richard Durrett.
\newblock {\em Stochastic calculus: a practical introduction}.
\newblock Probability and Stochastics Series. CRC Press LLC, 1996.

\bibitem[FG06]{Fock-Goncharov:Moduli_spaces_of_local_systems_and_higher_Teichmuller_theory}
Vladimir~V. Fock and Alexander~B. Goncharov.
\newblock Moduli spaces of local systems and higher {T}eichm\"uller theory.
\newblock {\em Publ. Math. Inst. Hautes \'Etudes Sci.}, 103:1--211, 2006.

\bibitem[FIKN23]{FIKN:Painleve_transcendents_Riemann-Hilbert_approach}
Athanassios~S. Fokas, Alexander~R. Its, Andrei~A. Kapaev, and Victor~Yu
  Novokshenov.
\newblock {\em Painlev{\'e} transcendents: the {R}iemann-{H}ilbert approach},
  volume 128 of {\em Mathematical Surveys and Monographs}.
\newblock American Mathematical Society, 2023.

\bibitem[FK04]{Friedrich-Kalkkinen:On_CFT_and_SLE}
Roland Friedrich and Jussi Kalkkinen.
\newblock On conformal field theory and stochastic {L}oewner evolution.
\newblock {\em Nucl. Phys. B}, 687(3):279--302, 2004.

\bibitem[FLPW24]{FLPW:Multiple_SLEs_Coulomb_gas_integrals_and_pure_partition_functions}
Yu~Feng, Mingchang Liu, Eveliina Peltola, and Hao Wu.
\newblock Multiple {SLE}s for {$\kappa\in (0,8)$:} {C}oulomb gas integrals and
  pure partition functions.
\newblock Preprint in arXiv:2406.06522, 2024.

\bibitem[F{\"o}l81]{Follmer:Calcul_dIto_sans_probabilites}
Hans F{\"o}llmer.
\newblock Calcul {d'It\^o} sans probabilit{\'e}s.
\newblock In {\em Seminar on Probability XV (Univ. Strasbourg)}, volume 850 of
  {\em Lecture Notes in Mathematics}, pages 143--150. Springer, Berlin, 1981.

\bibitem[Fri04]{Friedrich:On_connections_of_CFT_and_SLE}
Roland Friedrich.
\newblock On connections of conformal field theory and stochastic {L}oewner
  evolution.
\newblock Preprint in arXiv:math-ph/0410029, 2004.

\bibitem[Fuk80]{Fukushima:Dirichlet_forms_and_Markov_processes}
Masatoshi Fukushima.
\newblock {\em Dirichlet forms and {M}arkov processes}, volume~23.
\newblock North-Holland, Amsterdam-New York; Kodansha, Ltd., Tokyo, 1980.

\bibitem[FW03]{Friedrich-Werner:Conformal_restriction_highest_weight_representations_and_SLE}
Roland Friedrich and Wendelin Werner.
\newblock Conformal restriction, highest weight representations and
  {$\mathrm{SLE}$}.
\newblock {\em Comm. Math. Phys.}, 243(1):105--122, 2003.

\bibitem[GIL12]{GIL:Conformal_field_theory_of_PVI}
Oleksandr Gamayun, Nikolai~Z. Iorgov, and Oleg Lisovyy.
\newblock Conformal field theory of {P}ainlev{\'e} {VI}.
\newblock {\em JHEP}, 10(38):1--24, 2012.

\bibitem[GIL13]{GIL:How_instanton_combinatorics_solves_Painleve_VI_V_and_IIIs}
Oleksandr Gamayun, Nikolai~Z. Iorgov, and Oleg Lisovyy.
\newblock How instanton combinatorics solves {P}ainlev\'e {VI}, {V} and {IIIs}.
\newblock {\em J. Phys. A}, 46(33):1--29, 2013.

\bibitem[GIL19]{GIL:Higher_rank_isomonodromic_deformations_and_W-algebras}
Pavlo Gavrylenko, Nikolai Iorgov, and Oleg Lisovyy.
\newblock Higher rank isomonodromic deformations and {W}-algebras.
\newblock {\em Lett. Math. Phys.}, 110(2):327--364, 2019.

\bibitem[GL18]{Gavrylenko-Lisovyy:Fredholm_determinant_and_Nekrasov_sum_representations_of_isomonodromic_tau_functions}
Pavlo Gavrylenko and Oleg Lisovyy.
\newblock Fredholm determinant and {N}ekrasov sum representations of
  isomonodromic tau functions.
\newblock {\em Comm. Math. Phys.}, 363(1):1--58, 2018.

\bibitem[GMR23]{GMR:Isomonodromic_deformations_confluence_reduction_and_quantisation}
Ilia Gaiur, Marta Mazzocco, and Vladimir Rubtsov.
\newblock Isomonodromic deformations: {C}onfluence, reduction, and
  quantisation.
\newblock {\em Comm. Math. Phys.}, 400(2):1385--1461, 2023.

\bibitem[GT12]{Gaiotto-Teschner:Irregular_singularities_in_Liouville_theory_and_Argyres-Douglas_type_gauge_theories}
Davide Gaiotto and J{\"o}rg Teschner.
\newblock Irregular singularities in {L}iouville theory and {A}rgyres-{D}ouglas
  type gauge theories.
\newblock {\em JHEP}, 12:1--79, 2012.

\bibitem[HLR25]{HLR:Flat_connections_from_irregular_conformal_blocks}
Babak Haghighat, Yihua Liu, and Nicolai Reshetikhin.
\newblock Flat connections from irregular conformal blocks.
\newblock {\em Comm. Math. Phys.}, 406(138):1--30, 2025.

\bibitem[H{\"o}r90]{Hormander:The_analysis_of_linear_partial_differential_operators_1}
Lars H{\"o}rmander.
\newblock {\em The analysis of linear partial differential operators {I}:
  {D}istribution theory and {F}ourier analysis}, volume 256 of {\em Grundlehren
  der mathematischen Wissenschaften}.
\newblock Springer-Verlag, Berlin Heidelberg, 2 edition, 1990.

\bibitem[ILP18]{ILP:Monodromy_dependence_and_connection_formulae_for_isomonodromic_tau_functions}
Alexander~R. Its, Oleg Lisovyy, and Andrei Prokhorov.
\newblock Monodromy dependence and connection formulae for isomonodromic tau
  functions.
\newblock {\em Duke Math. J.}, 167(7):1347--1432, 2018.

\bibitem[ILT13]{ILT:Painleve_VI_connection_problem_and_monodromy_of_c_1_conformal_blocks}
Nikolai Iorgov, Oleg Lisovyy, and Yu~Tykhyy.
\newblock Painlev{\'e} {VI} connection problem and monodromy of {$c=1$}
  conformal blocks.
\newblock {\em JHEP}, 12(29):1--26, 2013.

\bibitem[ILT15a]{ILT:Isomonodromic_tau-functions_from_Liouville_conformal_blocks}
Nikolai Iorgov, Oleg Lisovyy, and J{\"o}rg Teschner.
\newblock Isomonodromic tau-functions from {L}iouville conformal blocks.
\newblock {\em Comm. Math. Phys.}, 336(2):671--694, 2015.

\bibitem[ILT15b]{ILT:Connection_problem_for_the_Sine-Gordon_Painleve_III_tau_function_and_irregular_conformal_blocks}
Alexander~R. Its, Oleg Lisovyy, and Yuriy Tykhyy.
\newblock Connection problem for the {S}ine-{G}ordon/{P}ainlev\'e {III} tau
  function and irregular conformal blocks.
\newblock {\em Int. Math. Res. Not.}, 2015(18):8903--8924, 2015.

\bibitem[IP18]{Its-Prokhorov:On_some_Hamiltonian_properties_of_the_isomonodromic_tau_functions}
Alexander~R. Its and Andrei Prokhorov.
\newblock On some {H}amiltonian properties of the isomonodromic tau functions.
\newblock {\em Rev. Math. Phys.}, 30(7):1840008, 2018.

\bibitem[JM81]{Jimbo-Miwa:Monodromy_preserving_deformation_of_linear_ODEs_with_rational_coefficients2}
Michio Jimbo and Tetsuji Miwa.
\newblock Monodromy preserving deformation of linear ordinary differential
  equations with rational coefficients. {II}.
\newblock {\em Phys. D}, 2(3):407--448, 1981.

\bibitem[JMU81]{JMU:Monodromy_preserving_deformation_of_linear_ODEs_with_rational_coefficients1}
Michio Jimbo, Tetsuji Miwa, and Kimio Ueno.
\newblock Monodromy preserving deformation of linear ordinary differential
  equations with rational coefficients. {I}. general theory and
  {$\tau$}-function.
\newblock {\em Phys. D}, 2(2):306--352, 1981.

\bibitem[Kem17]{Kemppainen:SLE_book}
Antti Kemppainen.
\newblock {\em Schramm-{L}oewner evolution}, volume~24 of {\em SpringerBriefs
  in Mathematical Physics}.
\newblock Springer Cham, 2017.

\bibitem[Ken14]{Kenyon:Conformal_invariance_of_loops_in_the_double-dimer_model}
Richard~W. Kenyon.
\newblock Conformal invariance of loops in the double-dimer model.
\newblock {\em Comm. Math. Phys.}, 326(2):477--497, 2014.

\bibitem[Kon03]{Kontsevich:CFT_SLE_and_phase_boundaries}
Maxim Kontsevich.
\newblock {CFT}, $\mathrm{SLE}$, and phase boundaries.
\newblock In {\em Oberwolfach Arbeitstagung}, 2003.

\bibitem[KP26]{Karrila-Peltola:Boundary_double-dimer_patterns_and_CFT}
Alex Karrila and Eveliina Peltola.
\newblock Boundary double-dimer patterns and conformal field theory.
\newblock In preparation, 2026.

\bibitem[KS07]{Kontsevich-Suhov:On_Malliavin_measures_SLE_and_CFT}
Maxim Kontsevich and Yuri Suhov.
\newblock On {M}alliavin measures, $\mathrm{SLE}$, and {CFT}.
\newblock {\em P. Steklov I. Math.}, 258(1):100--146, 2007.

\bibitem[KV26]{Karrila-Viitasaari:In_prep}
Alex Karrila and Lauri Viitasaari.
\newblock Smoothness of martingale observables and generalized {F}eynman-{K}ac
  formulas.
\newblock Preprint in arXiv:2601.10539, 2026.

\bibitem[KW11]{Kenyon-Wilson:Double_dimer_pairings_and_skew_Young_diagrams}
Richard~W. Kenyon and David~B. Wilson.
\newblock Double-dimer pairings and skew {Y}oung diagrams.
\newblock {\em Electron. J. Combin.}, 18(1):1--22, 2011.

\bibitem[Leh23]{Lehmkuehler:Trunks_of_CLE4_explorations}
Matthis Lehmkuehler.
\newblock The trunks of {CLE(4)} explorations.
\newblock {\em Ann. Appl. Probab.}, 33(5):3387--3417, 2023.

\bibitem[Loe23]{Loewner:Untersuchungen_uber_schlichte_konforme_Abbildungen_des_Einheitskreises}
Charles Loewner.
\newblock Untersuchungen {\"u}ber schlichte konforme {A}bbildungen des
  {E}inheitskreises {I}.
\newblock {\em Math. Ann.}, 89:103--121, 1923.

\bibitem[Mal22]{Malmquist:Sur_les_equations_differentielles_du_second_ordre_dont_lintegrale_generale_a_ses_points_critiques_fixes}
Johannes Malmquist.
\newblock Sur les \'equations diff\'erentielles du second ordre dont
  l'int\'egrale g\'en\'erale a ses points critiques fixes.
\newblock {\em Ark. Mat.}, 17:1--89, 1922.

\bibitem[MDG23a]{DMDG:Isomonodromic_tau_functions_on_a_torus_as_Fredholm_determinants_and_charged_partitions}
Fabrizio~Del Monte, Harini Desiraju, and Pavlo Gavrylenko.
\newblock Isomonodromic tau functions on a torus as {F}redholm determinants,
  and charged partitions.
\newblock {\em Comm. Math. Phys.}, 398(3):1029--1084, 2023.

\bibitem[MDG23b]{DMDG:Monodromy_dependence_and_symplectic_geometry_of_isomonodromic_tau_functions_on_the_torus}
Fabrizio~Del Monte, Harini Desiraju, and Pavlo Gavrylenko.
\newblock Monodromy dependence and symplectic geometry of isomonodromic tau
  functions on the torus.
\newblock {\em J. Phys. A}, 56(29):1--24, 2023.

\bibitem[MDG25]{DMDG:Modular_transformations_of_tau_functions_and_conformal_blocks_on_the_torus}
Fabrizio~Del Monte, Harini Desiraju, and Pavlo Gavrylenko.
\newblock Modular transformations of tau functions and conformal blocks on the
  torus.
\newblock Preprint in arxiv:2508.14030, 2025.

\bibitem[MS16]{Miller-Sheffield:Imaginary_geometry1}
Jason Miller and Scott Sheffield.
\newblock Imaginary geometry {I}: interacting $\mathrm{SLE}$s.
\newblock {\em Probab. Theory Related Fields}, 164(3-4):553--705, 2016.

\bibitem[MSW17]{MSW:CLE_percolations}
Jason Miller, Scott Shaffield, and Wendelin Werner.
\newblock {CLE} percolations.
\newblock {\em Forum Math. Pi}, 5(e4):1--102, 2017.

\bibitem[Oka81]{Okamoto:On_the_tau-function_of_Painleve_equations}
Kazuo Okamoto.
\newblock On the {$\tau$}-function of the {P}ainlev\'e {VI} equations.
\newblock {\em Phys. D}, 2(3):525--535, 1981.

\bibitem[Pai06]{Painleve:Sur_les_equations_differentielles_du_second_ordre_a_points_critiques_fixes}
Painlev\'e.
\newblock Sur les \'equations diff\'erentielles du second ordre \`a points
  critiques fixes.
\newblock {\em C. R. Acad. Sci. Paris S{\'e}r. I Math.}, 143:1111--1117, 1906.

\bibitem[Pel19]{Peltola:Towards_CFT_for_SLEs}
Eveliina Peltola.
\newblock Towards a conformal field theory for {S}chramm-{L}oewner evolutions.
\newblock {\em J. Math. Phys.}, 60(10):103305, 2019.
\newblock Special issue (Proc. ICMP, Montreal, July 2018).

\bibitem[PW19]{Peltola-Wu:Global_and_local_multiple_SLEs_and_connection_probabilities_for_level_lines_of_GFF}
Eveliina Peltola and Hao Wu.
\newblock Global and local multiple $\mathrm{SLE}$s for $\kappa \leq 4$ and
  connection probabilities for level lines of {GFF}.
\newblock {\em Comm. Math. Phys.}, 366(2):469--536, 2019.

\bibitem[RS05]{Rohde-Schramm:Basic_properties_of_SLE}
Steffen Rohde and Oded Schramm.
\newblock Basic properties of $\mathrm{SLE}$.
\newblock {\em Ann. of Math.}, 161(2):883--924, 2005.

\bibitem[Sch12]{Schlesinger:Uber_eine_Klasse_von_Differentialsystemen_beliebiger_Ordnung_mit_festen_kritischen_Punkte}
Ludwig Schlesinger.
\newblock {\"U}ber eine {K}lasse von {D}ifferentialsystemen beliebiger
  {O}rdnung mit festen kritischen {P}unkten.
\newblock {\em J. Reine Angew. Math.}, 141:96--145, 1912.

\bibitem[Sch00]{Schramm:Scaling_limits_of_LERW_and_UST}
Oded Schramm.
\newblock Scaling limits of loop-erased random walks and uniform spanning
  trees.
\newblock {\em Israel J. Math.}, 118(1):221--288, 2000.

\bibitem[Sch06]{Schramm:ICM}
Oded Schramm.
\newblock Conformally invariant scaling limits, an overview and a collection of
  problems.
\newblock In {\em Proceedings of the ICM 2006, Madrid, Spain}, volume~1, pages
  513--543. European Mathematical Society, 2006.

\bibitem[She09]{Sheffield:Exploration_trees_and_CLEs}
Scott Sheffield.
\newblock Exploration trees and conformal loop ensembles.
\newblock {\em Duke Math. J.}, 147(1):79--129, 2009.

\bibitem[SMJ79]{SMJ:Holonomic_quantum_fields3}
{Ken-iti} Sato, Tetsuji Miwa, and Michio Jimbo.
\newblock Holonomic quantum fields. 3.
\newblock {\em Publ. Res. Inst. Math. Sci. Kyoto}, 15(2):577--629, 1979.

\bibitem[Sto57]{Stokes:On_the_discontinuity_of_arbitrary_constants_which_appear_in_divergent_developments}
George~G. Stokes.
\newblock On the discontinuity of arbitrary constants which appear in divergent
  developments.
\newblock {\em Trans. Cambridge Philos. Soc.}, 10:105--128, 1857.

\bibitem[SW05]{Schramm-Wilson:SLE_coordinate_changes}
Oded Schramm and David~B. Wilson.
\newblock $\mathrm{SLE}$ coordinate changes.
\newblock {\em New York J. Math.}, 11:659--669, 2005.

\bibitem[SW12]{Sheffield-Werner:CLEs}
Scott Sheffield and Wendelin Werner.
\newblock Conformal loop ensembles: {T}he {M}arkovian characterization and the
  loop-soup construction.
\newblock {\em Ann. of Math.}, 176(3):1827--1917, 2012.

\bibitem[WW13]{Werner-Wu:On_conformally_invariant_CLE_explorations}
Wendelin Werner and Hao Wu.
\newblock On conformally invariant {CLE} explorations.
\newblock {\em Comm. Math. Phys.}, 320(3):637--661, 2013.

\bibitem[Zha04]{Zhan:Thesis}
Dapeng Zhan.
\newblock {\em Random {L}oewner chains in {R}iemann surfaces}.
\newblock PhD thesis, California Institute of Technology, 2004.

\end{thebibliography}
